\DeclareRobustCommand{\rchi}{{\mathpalette\irchi\relax}}
\newcommand{\irchi}[2]{\raisebox{\depth}{$#1\chi$}}
\theoremstyle{plain}
\newtheorem{proposition}{Proposition}
\newtheorem{lemma}{Lemma}
\newtheorem{theorem}{Theorem}
\newtheorem{assumption}{Assumption}
\newtheorem{definition}{Definition}
\newtheorem{remark}{Remark}
\def\bmd{{\bm d}}
\def\bmg{{\bm g}}
\def\bmh{{\bm h}}
\def\bmi{{\bm i}}
\def\bmj{{\bm j}}
\def\bml{{\bm l}}
\def\bmm{{\bm m}}
\def\bmK{{\bm K}}
\def\bmL{{\bm L}}
\def\bmomega{{\bm \omega}}
\def\notSigma{{\not\! \Sigma}}
\def\notK{{\not\!\! K}}
\def\notL{{\not\!\!L}}
\def\notD{{\not\!\!D}}
\def\notd{{\not\!d}}
\def\notl{{\not\!l}}
\def\notalpha{{\not\!\alpha}}
\def\notbeta{{\not\!\beta}}
\def\notn{{\not\!n}}
\newcounter{mnotecount}
\newcommand{\mnotex}[1]
{\protect{\stepcounter{mnotecount}}$^{\mbox{\footnotesize $\bullet$\themnotecount}}$ 
\marginpar{
\raggedright\tiny\em
$\!\!\!\!\!\!\,\bullet$\themnotecount: #1} }
\begin{document}

\title{\textbf{Construction of anti-de Sitter-like
    spacetimes using the metric conformal Einstein field equations:
    the vacuum case}}

\author[,1]{Diego A. Carranza  \footnote{E-mail address:{\tt d.a.carranzaortiz@qmul.ac.uk}}}
\author[,1]{Juan A. Valiente Kroon \footnote{E-mail address:{\tt j.a.valiente-kroon@qmul.ac.uk}}}
\affil[1]{School of Mathematical Sciences, Queen Mary University of London,
Mile End Road, London E1 4NS, United Kingdom.}

\maketitle

\begin{abstract}
We make use of the metric version of the conformal Einstein field
equations to construct anti-de Sitter-like spacetimes by means of a
suitably posed initial-boundary value problem. The evolution system
associated to this initial-boundary value problem consists of a set of
conformal wave equations for a number of conformal fields and the
conformal metric. This formulation makes use of generalised wave
coordinates and allows the free specification of the Ricci scalar of
the conformal metric via a conformal gauge source function. We
consider Dirichlet boundary conditions for the evolution equations at
the conformal boundary and show that these boundary conditions can, in
turn, be constructed from the 3-dimensional Lorentzian metric of the
conformal boundary and a linear combination of the incoming and
outgoing radiation as measured by certain components of the Weyl
tensor.  To show that a solution to the conformal evolution equations
implies a solution to the Einstein field equations we also provide a
discussion of the propagation of the constraints for this
initial-boundary value problem. The existence of local solutions to
the initial-boundary value problem in a neighbourhood of the corner
where the initial hypersurface and the conformal boundary intersect is
subject to compatibility conditions between the initial and boundary
data. The construction described is amenable to numerical
implementation and should allow the systematic exploration of boundary
conditions.
\end{abstract}

\section{Introduction}
Anti-de Sitter-like spacetimes, i.e. spacetimes satisfying the
Einstein field equations with a negative Cosmological constant and
admitting a timelike conformal boundary, constitute a basic example of
solutions to the Einstein field equations which are not globally
hyperbolic ---see e.g. \cite{CFEBook}. As such, they cannot be
constructed solely from data on a
spacelike hypersurface and require the prescription of some suitable
boundary data. In view of the latter, the methods of conformal
geometry provide a natural setting for the discussion of an
initial-boundary value problem from which anti-de Sitter-like
spacetimes can be constructed in a systematic manner.  From the
conformal point of view, the timelike conformal boundary of the anti-de
Sitter-like spacetime has a finite location (described by the vanishing of
the conformal factor) so that analysis of the boundary
conditions and its relation to initial data can be carried out with
local computations.

\medskip A first analysis of the initial--boundary value problem for
4-dimensional vacuum anti-de Sitter-like spacetimes by means of conformal
methods has been carried out by Friedrich in \cite{Fri95} ---see also
\cite{Fri14} for further discussion of the admissible adS-like boundary
conditions. This seminal work makes use of a conformal representation of the
Einstein field equations known as the extended conformal Einstein field
equations and a gauge based on the properties of curves with good conformal
properties (\emph{conformal geodesics}) to set up an initial-boundary value
problem for a first order symmetric hyperbolic system of evolution equations. For this
type of evolution equations one can use the theory of maximally dissipative
boundary conditions as described in \cite{Rau85,Gue90} to assert the
well-posedness of the problem and to ensure the local existence of solutions in
a neighbourhood of the intersection of the initial hypersurface with the
conformal boundary (the \emph{corner}).  The solutions to these evolution
equation can be shown, via a further argument, to constitute a solution to the
vacuum Einstein field equations with negative Cosmological constant. 

Friedrich's analysis identifies a large class of maximally dissipative boundary
conditions involving the outgoing and incoming components of the Weyl tensor
---as such, they can be thought of as prescribing the relation between these
components.  These conditions are given in a very specific gauge and thus it is
difficult to assert their physical/geometric meaning. However, it is possible
to identify a subclass of boundary conditions which can be recast in a
covariant form. More precisely, they can be shown to be equivalent to
prescribing the conformal class of the metric on the conformal boundary ---see
\cite{Fri95}, also \cite{CFEBook}. The question of recasting the whole class of
maximally dissipative boundary conditions obtained by Friedrich in a geometric
(i.e. covariant) form remains an interesting open problem. 

An alternative construction of anti-de Sitter--like spacetimes, which does
not use the conformal Einstein field equations and which also hold for
spacetimes of dimension greater than four, can be found
in \cite{EncKam14}. A discussion of global properties of adS-like
spacetimes and the issue of their stability can be found in
\cite{And05b}. 

\medskip Numerical simulations involving anti-de Sitter-like spacetimes is a
very active area of current research ---see e.g.
\cite{BizRos11,BizMalRos15,DiaHorSan12,DiaHorMarSan12} which kick-started some
of the current flurry of interest. In this respect, Friedrich's results offer a
natural and systematic approach to the numerical construction of 4-dimensional
vacuum anti-de Sitter-like spacetimes. However, the numerical implementation of
these results is not straightforward, among other things, because the equations
involved are cast in a form which is not standard for the available
numerical codes and moreover, there is very little intuition about the
behaviour of the gauges used to formulate the equations. A further difficulty
of Friedrich's approach is that it cannot readily be extended to include matter
fields ---see however \cite{LueVal14a}.  

\medskip

In view of the issues raised in the previous paragraph, it is desirable to have
a conformal formulation of the initial-boundary value problem for anti de
Sitter-like spacetimes which is closer to the language used in numerical
simulations and which exploits familiar gauge conditions. In this article we
undertake this task. More precisely, we show that using the better know
\emph{metric conformal Einstein field equations} it is possible to construct
anti-de Sitter-like spacetimes by formulating an initial-boundary value problem
for a system of quasilinear wave equations for the conformal fields governing
the geometry of the conformal representation of the anti de Sitter-like
spacetimes. The partial differential equations (PDE) theory for this type of
systems is available in the literature \cite{CheWah83,DafHru85}. Dirichlet
boundary data for this system of wave equations can be constructed from the
prescription of the 3-dimensional (Lorentzian) metric of the conformal boundary
and a relation between the incoming and outgoing components of the Weyl tensor
akin to Friedrich's maximally dissipative conditions. Our setting makes use of
generalised harmonic coordinates. It also contains a further conformal gauge
freedom which can be fixed by specifying the value of the Ricci scalar of the
conformal representation. The evolution system to be solved can be thought of
as the Einstein field equations coupled to a complicated matter model
consisting of several tensorial fields ---each of which satisfies its own wave
equations. This parallel should ease the numerical implementation of the
setting. 

In addition to the formulation of an
initial-boundary value problem, we also study the relation between
the solutions to the conformal Einstein field equations and actual
solutions to the Einstein field equations. This analysis requires a
discussion similar to that of the \emph{propagation of the
  constraints} in which it is shown that a solution to the evolution
equations is, in fact, a solution to the original conformal field
equations. For this, one has to construct a subsidiary evolution system
for the conformal field equations and show that the boundary data
prescription for the evolution equations implies trivial boundary data
for the subsidiary equations. Fortunately, most of the lengthy
calculations required for this discussion are already available in the
literature ---see e.g. \cite{Pae15}.

\medskip
Our main result, stating the local existence in time of anti-de Sitter-like
spacetimes in a neighbourhood of the corner where the initial
hypersurface and the conformal boundary intersect, is provided in
Theorem \ref{Theorem:Main}. 

A feature of our analysis is that it can be extended to include
tracefree matter fields. This extension is discussed in a companion
article \cite{CarVal18c}.


\subsection*{Outline of the article}

This article is structured as follows: in Section \ref{Section:CFE} we
provide an overview of the relevant properties of our main technical
tool ---the so-called metric conformal Einstein field equations. In
particular we discuss the structural properties of the wave
equations describing the evolution of the conformal fields. We also
discuss the gauge fixing for the evolution equations and the key
properties of the subsidiary evolution system responsible of the
propagation of the constraints. In Section
\ref{Section:ConformalEinsteinConstraints} we discuss relevant
properties of the constraint equations implied on spacelike or timelike
hypersurfaces by the conformal Einstein field equations ---the
so-called conformal Einstein constraint equations. These constraint equations are
both 
relevant for the construction of suitable initial and boundary
initial data. In Section \ref{Section:GeneralSetUp} we describe the
general set-up of our construction of anti-de Sitter-like
spacetimes. In particular, we analyse the construction of suitable
boundary data directly from the knowledge of the metric at the
conformal boundary. We also study the properties of the compatibility
(corner) conditions between initial and boundary initial data required
to ensure the existence of solutions to the initial-boundary value
problem for the wave equations describing the evolution of the
conformal fields. In Section \ref{Section:PropagationOfTheConstraints}
we analyse the issue of the propagation of the constraints ---key to
establish the relation between solutions to the conformal wave
equations and actual solutions to the Einstein field equations. The
propagation of the constraints is established through the analysis of
a boundary-initial value problem for the subsidiary evolution
system. Finally, in Section \ref{Section:LocalExistenceResult} we
summarise our analysis by stating our main result, Theorem
\ref{Theorem:Main}. Section \ref{Section:Conclusions} provides some
concluding remarks to our analysis. In Appendix
\ref{Appendix:ZeroQuantities} we provide a discussion of the
integrability conditions associated to the metric conformal field
equations. These integrability conditions are fundamental to establish
a number of general properties of the conformal Einstein field equations. 

\subsection*{Conventions}
Throughout, the term \emph{spacetime} will be used to denote a 4-dimensional
Lorentzian manifold which not necessarily satisfies the Einstein field
equations. Moreover, $(\tilde{\mathcal{M}},\tilde{\bmg})$ will denote a vacuum
spacetime satisfying the Einstein equations with anti-de Sitter-like
cosmological constant $\lambda$.  The signature of the metric in this article
will be $(-,+,+,+)$. It follows that $\lambda<0$. The lowercase Latin letters
$a,\, b,\, c, \ldots$ are used as abstract spacetime tensor indices while the
indices $i,\,j,\,k,\ldots$ are abstract indices on the tensor bundle of
hypersurfaces of $\tilde{\mathcal{M}}$. The Greek letters $\mu, \, \nu, \,
\lambda,\ldots$ will be used as spacetime coordinate indices while
$\alpha,\,\beta,\,\gamma,\ldots$ will serve as indices on a hypersurface. Our
conventions for the curvature are
\[
\nabla_c \nabla_d u^a -\nabla_d \nabla_c u^a = R^a{}_{bcd} u^b.
\]

\section{The metric conformal Einstein field equations}
\label{Section:CFE}

The basic tool to be used in this article are the metric conformal
field equations. This section reviews the properties of this conformal
representation of the Einstein field equations that will be used
throughout.

\medskip
In what follows, let $(\tilde{\mathcal{M}},\tilde{g}_{ab})$ denote a spacetime
satisfying the vacuum Einstein field equations
\begin{equation}
\tilde{R}_{ab} = \lambda \tilde{g}_{ab},
\label{EFE}
\end{equation}
where $\tilde{R}_{ab}$ denotes the Ricci tensor of the metric
$\tilde{g}_{ab}$. Further, let $(\mathcal{M},g_{ab})$ denote a
spacetime conformally related to
$(\tilde{\mathcal{M}},\tilde{g}_{ab})$ so that, in a slight abuse of
notation, we have
\[
g_{ab} = \Xi^2 \tilde{g}_{ab},
\]
where $\Xi$ is some suitable conformal factor $\Xi$. The set of points
of $\mathcal{M}$ for which $\Xi$ vanishes will be called the
\emph{conformal boundary}. We use the notation $\mathscr{I}$ to denote
the parts of the conformal boundary which are a hypersurface of
$\mathcal{M}$. 

\subsection{Basic properties}
In what follows, let $\nabla_a$ denote the Levi-Civita connection of
the metric $g_{ab}$, and let $R^a{}_{bcd}$, $R_{ab}$, $R$,
$C^a{}_{bcd}$ denote, respectively, the associated Riemann tensor,
Ricci tensor, Ricci scalar and (conformally invariant) Weyl tensor. In the discussion of the
conformal Einstein field equations it is useful to introduce the
\emph{Schouten tensor}, defined as
\[
L_{ab} \equiv \frac{1}{2}\bigg(R_{ab} - \frac{1}{6} R g_{ab}\bigg).
\]
Moreover, let
\[
s \equiv\frac{1}{4} \nabla^c\nabla_c \Xi + \frac{1}{24}R\Xi, \qquad d^a{}_{bcd} \equiv \Xi^{-1}C^a{}_{bcd}
\]
denote the so-called \emph{Friedrich scalar} and the \emph{rescaled
  Weyl tensor}, respectively. 

\medskip
In terms of the objects defined in the previous paragraph, the
\emph{vacuum metric conformal Einstein field equations} are given by:
\begin{subequations}
\begin{eqnarray}
&& \nabla_a \nabla_b \Xi =-\Xi L_{ab} + sg_{ab}, \label{CFE1}\\
&& \nabla_a s = -L_{ac} \nabla^c \Xi, \label{CFE2}\\
&& \nabla_c L_{db} - \nabla_d L_{cd} = \nabla_a \Xi d^a{}_{bcd}, \label{CFE3}\\
&& \nabla_a d^a{}_{bcd}=0,  \label{CFE4}\\
&& 6 \Xi s - 3 \nabla_c \Xi \nabla^c \Xi =\lambda. \label{CFE5}
\end{eqnarray}
\end{subequations}

\begin{remark}
{\em Equations \eqref{CFE1}-\eqref{CFE4} will be read as differential
  conditions on the fields $\Xi$, $s$, $L_{ab}$, $d^a{}_{bcd}$ while
  equation \eqref{CFE5} will be regarded as a constraint which is
  satisfied if it holds at a single point by virtue of the other
  equations ---see Lemma 8.1 in \cite{CFEBook}.}
\end{remark}

\medskip
By a solution to the metric conformal Einstein field equations it is
understood a collection of fields 
\[
(g_{ab}, \Xi, s, L_{ab}, d^a{}_{bcd})
\]
satisfying equations \eqref{CFE1}-\eqref{CFE5}.  
 The relation between the metric conformal Einstein field equations and
the Einstein field equations is given by the following:

\begin{proposition}
\label{FriedrichThm}
Let  $(g_{ab}, \Xi, s, L_{ab}, d^a{}_{bcd})$ denote a solution to the
metric conformal Einstein field equations \eqref{CFE1}-\eqref{CFE4} such that $\Xi\neq 0$ on an
open set $\mathcal{U}\subset \mathcal{M}$. If, in addition, equation
\eqref{CFE5} is satisfied at a point $p\in \mathcal{U}$, then the
metric 
\[
\tilde{g}_{ab} = \Xi^{-2} g_{ab}
\]
 is a solution to the
Einstein field equations \eqref{EFE} on $\mathcal{U}$. 
\end{proposition}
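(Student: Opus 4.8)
The plan is to show that the conformal equations \eqref{CFE1}--\eqref{CFE4}, together with \eqref{CFE5} holding at the single point $p$, force the physical metric $\tilde{g}_{ab}=\Xi^{-2}g_{ab}$ to be Einstein on all of $\mathcal{U}$. The first step is purely algebraic: since $\Xi\neq 0$ on $\mathcal{U}$, the conformal transformation formulae for the Ricci tensor and Ricci scalar under $g_{ab}=\Xi^2\tilde{g}_{ab}$ express $\tilde{R}_{ab}$ in terms of $R_{ab}$, $\Xi$ and its first and second covariant derivatives with respect to $\nabla_a$. Equivalently, I would use the Schouten tensor, whose transformation law is the clean one: $\tilde{L}_{ab}=L_{ab}-\Xi^{-1}\nabla_a\nabla_b\Xi+\tfrac{1}{2}\Xi^{-2}\nabla_c\Xi\nabla^c\Xi\, g_{ab}$. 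This is the identity that \eqref{CFE1} is designed to feed into.

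The second step is to substitute \eqref{CFE1} into this transformation law. Replacing $\nabla_a\nabla_b\Xi$ by $-\Xi L_{ab}+sg_{ab}$, the $L_{ab}$ terms cancel and one is left with
\[
\tilde{L}_{ab}=\Big(\Xi^{-1}s+\tfrac{1}{2}\Xi^{-2}\nabla_c\Xi\nabla^c\Xi\Big)\,g_{ab}
=\Big(\Xi^{-1}s+\tfrac{1}{2}\Xi^{-2}\nabla_c\Xi\nabla^c\Xi\Big)\,\Xi^2\tilde{g}_{ab}.
\]
So $\tilde{L}_{ab}$ is pointwise proportional to $\tilde{g}_{ab}$ everywhere on $\mathcal{U}$; taking the trace, $\tilde{L}_{ab}=\tfrac{1}{6}\tilde{R}\,\tilde{g}_{ab}$, which says $\tilde{g}_{ab}$ has trace-free Ricci tensor, i.e. $\tilde{R}_{ab}=\tfrac14\tilde{R}\,\tilde{g}_{ab}$. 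It remains to show the proportionality constant is the right one, namely that $\tfrac14\tilde{R}=\lambda$ with $\lambda$ constant. From the displayed expression, $\tfrac14\tilde R$ equals $6\Xi s-3\nabla_c\Xi\nabla^c\Xi$ up to the universal factor fixed by the trace, so the content of \eqref{CFE5} is exactly the statement $\tfrac14\tilde R=\lambda$ at $p$.

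The third step, and the only place where real work beyond bookkeeping is needed, is the propagation of \eqref{CFE5}: one must show that the scalar $Q\equiv 6\Xi s-3\nabla_c\Xi\nabla^c\Xi-\lambda$, already known to vanish at $p$, in fact vanishes throughout $\mathcal{U}$. The strategy is to differentiate $Q$ and use \eqref{CFE1}--\eqref{CFE2} to show $\nabla_a Q$ is proportional to $Q$ itself (or simply vanishes identically as a consequence of the other equations). Concretely, $\nabla_a Q=6(\nabla_a\Xi)\,s+6\Xi\nabla_a s-6(\nabla^c\Xi)\nabla_a\nabla_c\Xi$; inserting \eqref{CFE2} for $\nabla_a s$ and \eqref{CFE1} for $\nabla_a\nabla_c\Xi$, the $L_{ac}$ contributions cancel in pairs and the $sg_{ac}$ term combines to give $\nabla_a Q=0$. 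Since $\mathcal{U}$ is connected (or, working in a connected neighbourhood of $p$), $Q\equiv 0$ on $\mathcal{U}$, hence $\tilde R=4\lambda$ is the constant value and $\tilde R_{ab}=\lambda\tilde g_{ab}$, which is \eqref{EFE}. I expect this last differentiation-and-cancellation to be the main (though still routine) obstacle, and it is precisely the computation encapsulated in Lemma 8.1 of \cite{CFEBook} quoted in the Remark, so I would invoke that lemma rather than redo it in full.
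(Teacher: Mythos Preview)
Your approach is correct and is precisely the standard argument behind the result; the paper does not give its own proof but simply refers to Proposition~8.1 in \cite{CFEBook}, whose content is exactly the three steps you outline (Schouten transformation law, substitution of \eqref{CFE1}, and propagation of \eqref{CFE5} via \eqref{CFE1}--\eqref{CFE2}).

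One computational slip to fix: your Schouten transformation formula has both non-leading signs reversed. With $g_{ab}=\Xi^2\tilde g_{ab}$ one has
\[
\tilde L_{ab}=L_{ab}+\Xi^{-1}\nabla_a\nabla_b\Xi-\tfrac12\,\Xi^{-2}\nabla_c\Xi\nabla^c\Xi\,g_{ab},
\]
and it is with \emph{these} signs that substitution of \eqref{CFE1} produces the cancellation $L_{ab}-L_{ab}$ you claim (with your stated signs one obtains $2L_{ab}$ instead). The corrected proportionality factor is $\Xi s-\tfrac12\nabla_c\Xi\nabla^c\Xi$, which then matches \eqref{CFE5} on the nose. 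The propagation computation $\nabla_aQ=0$ is correct as written. Finally, note that the conclusion ``$Q\equiv 0$ on $\mathcal U$'' requires $\mathcal U$ connected; the proposition as stated does not assume this, so either restrict to the connected component of $p$ or add that hypothesis explicitly.
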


A proof of the above proposition is given in \cite{CFEBook} ---see
Proposition 8.1 in that reference. 

\medskip
We also recall that the causal character of $\mathscr{I}$ is
determined by the sign of the Cosmological constant. More precisely,
one has that:

\begin{proposition}
Suppose that the Friedrich scalar $s$ is regular on $\mathscr{I}$.
Then $\mathscr{I}$ is a null, spacelike or timelike hypersurface of
$\mathcal{M}$, respectively, depending on whether $\lambda=0$, $\lambda>0$ or
$\lambda<0$. 
\end{proposition}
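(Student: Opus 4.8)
The plan is to determine the causal character of $\mathscr{I}$ by computing the norm of its (co)normal with respect to the conformal metric $g_{ab}$ and showing that its sign is controlled by $\lambda$. Since $\mathscr{I}$ is defined as (a hypersurface component of) the zero set of the conformal factor $\Xi$, the natural conormal is $\nabla_a \Xi$. Thus the key quantity to study is $\nabla_c\Xi\nabla^c\Xi$ evaluated on $\mathscr{I}$, since a hypersurface is timelike, null or spacelike according as its normal is spacelike, null or timelike, i.e. according as $\nabla_c\Xi\nabla^c\Xi$ is positive, zero or negative (with the signature $(-,+,+,+)$).

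First I would invoke equation \eqref{CFE5}, the Friedrich constraint $6\Xi s - 3\nabla_c\Xi\nabla^c\Xi = \lambda$, which by the Remark and Proposition \ref{FriedrichThm} holds throughout the region of interest once it holds at one point. Restricting this identity to $\mathscr{I}$, where $\Xi = 0$ by definition, the first term drops out and one is left with
\[
-3\,\nabla_c\Xi\nabla^c\Xi \big|_{\mathscr{I}} = \lambda,
\qquad\text{i.e.}\qquad
\nabla_c\Xi\nabla^c\Xi \big|_{\mathscr{I}} = -\tfrac{1}{3}\lambda.
\]
Here the hypothesis that $s$ is regular (bounded) on $\mathscr{I}$ is exactly what is needed to guarantee that the product $\Xi s$ actually vanishes in the limit to $\mathscr{I}$ rather than producing an indeterminate $0\cdot\infty$ contribution; I would state this explicitly. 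It remains only to read off the three cases: if $\lambda = 0$ then $\nabla_c\Xi\nabla^c\Xi = 0$ on $\mathscr{I}$, so the conormal is null and $\mathscr{I}$ is a null hypersurface; if $\lambda > 0$ then $\nabla_c\Xi\nabla^c\Xi < 0$, the conormal is timelike, and $\mathscr{I}$ is spacelike; if $\lambda < 0$ then $\nabla_c\Xi\nabla^c\Xi > 0$, the conormal is spacelike, and $\mathscr{I}$ is timelike.

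One should also note, for completeness, that $\nabla_a\Xi$ does not vanish on $\mathscr{I}$ — this is automatic when $\lambda \neq 0$ from the displayed formula, and in the $\lambda = 0$ case it is part of the standard hypothesis that $\mathscr{I}$ is a genuine (non-degenerate) hypersurface, so $\nabla_a\Xi$ is a bona fide conormal and the classification is meaningful. The argument is essentially a one-line evaluation of the constraint \eqref{CFE5}; the only subtlety — and the one place the regularity hypothesis on $s$ does real work — is justifying that the term $6\Xi s$ has vanishing limit on $\mathscr{I}$, which I would not expect to be an obstacle but would make sure to spell out.
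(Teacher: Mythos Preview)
Your proposal is correct and follows essentially the same approach as the paper: evaluate the constraint \eqref{CFE5} on $\mathscr{I}$, where $\Xi=0$, and read off the causal character of the hypersurface from the sign of $\nabla_c\Xi\nabla^c\Xi = -\tfrac{1}{3}\lambda$, using that $\nabla_a\Xi$ is normal to $\mathscr{I}$. The paper's proof is the one-line version; your additional remarks on the role of the regularity of $s$ and the non-vanishing of $\nabla_a\Xi$ are sound elaborations of points the paper leaves implicit.
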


This result follows directly from evaluation of equation \eqref{CFE5}
on $\mathscr{I}$ and recalling that $\nabla_a\Xi$ is normal to this hypersurface.

\subsection{Wave equations for the conformal fields}
In \cite{Pae15} it has been shown how the conformal Einstein field
equations \eqref{CFE1}-\eqref{CFE4} imply a system of \emph{geometric}
wave equations for the components of the fields
$(\Xi,\,s,\,L_{ab},\,d^a{}_{bcd})$. In the subsequent discussion it
will be convenient to split the Schouten tensor into a tracefree part
and a pure-trace part (the Ricci scalar). Accordingly, one defines the
\emph{tracefree Ricci tensor} as
\[
\Phi_{ab} \equiv \frac{1}{2}\big( R_{ab} -\frac{1}{4} R g_{ab} \big),
\]
so that the Schouten tensor can be expressed as
\begin{equation}
L_{ab} = \Phi_{ab} + \frac{1}{24}R g_{ab}.
\label{SchoutenToTracefreeRicci}
\end{equation}

\medskip
In terms of the above field the main result in \cite{Pae15} can be
expressed as:

\begin{proposition}
Any solution $(\Xi,\,s,\,L_{ab},\,d^a{}_{bcd})$ to the conformal Einstein field
equations \eqref{CFE1}-\eqref{CFE4} satisfies the equations
\begin{subequations}
\begin{eqnarray}
&&\square \Xi = 4s -\frac{1}{6}\Xi R, \label{WaveCFE1}
  \\
&&\square s=  \Xi \Phi_{ab} \Phi^{ab}- \frac{1}{6} s R + \frac{1}{144} \Xi R^2 -  \frac{1}{6} \nabla_{a}R \nabla^{a}\Xi, \label{WaveCFE2}\\
&& \square \Phi_{ab}= 4 \Phi_{a}{}^{c} \Phi_{bc}-  g_{ab} \Phi_{cd} \Phi^{cd}-2 \Xi d_{acbd} \Phi^{cd} + \frac{1}{3} R \Phi_{ab} -  \frac{1}{24} g_{ab} \nabla_{c}\nabla^{c}R + \frac{1}{6} \nabla_{a}\nabla_{b}R, \label{WaveCFE3}\\
&& \square d_{abcd} = 2 \Xi d_{a}{}^{e}{}_{d}{}^{f} d_{becf} 
- 2 \Xi d_{a}{}^{e}{}_{c}{}^{f} d_{bedf} - 2 \Xi d_{ab}{}^{ef} 
d_{cedf} + \frac{1}{2} d_{abcd} R.\label{WaveCFE4}
\end{eqnarray}
\end{subequations}
\end{proposition}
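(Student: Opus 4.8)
The plan is to derive each wave equation by applying a suitable second-order differential operator to the first-order conformal Einstein field equations \eqref{CFE1}--\eqref{CFE4} and then simplifying using the remaining equations together with standard curvature identities. The underlying principle is the same in each case: the conformal field equations are essentially first-order, and commuting derivatives on them produces a $\square$ acting on the field of interest plus lower-order curvature terms, which are then rewritten in terms of $\Xi$, $s$, $\Phi_{ab}$, $d_{abcd}$ and $R$.

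First I would establish \eqref{WaveCFE1} directly: tracing \eqref{CFE1} with $g^{ab}$ gives $\square\Xi = -\Xi L + 4s$ where $L \equiv g^{ab}L_{ab} = \tfrac{1}{6}R$ (from the definition of the Schouten tensor in dimension four), which is exactly \eqref{WaveCFE1}. Next, for \eqref{WaveCFE2} I would take the divergence of \eqref{CFE2}, i.e. compute $\nabla^a\nabla_a s = -\nabla^a(L_{ac}\nabla^c\Xi) = -(\nabla^a L_{ac})\nabla^c\Xi - L_{ac}\nabla^a\nabla^c\Xi$. The first term is handled by the contracted second Bianchi identity (the divergence of the Schouten tensor equals $\tfrac{1}{6}\nabla_c R$ in four dimensions, or alternatively one uses a contraction of \eqref{CFE3}), and the second by substituting \eqref{CFE1} for $\nabla^a\nabla^c\Xi$ and using \eqref{SchoutenToTracefreeRicci} to split off the trace; the $s g_{ab}$ piece contributes $-Ls = -\tfrac{1}{6}sR$ and the $\Xi L_{ab}$ piece contributes $\Xi L_{ab}L^{ab}$, which expands via \eqref{SchoutenToTracefreeRicci} into $\Xi\Phi_{ab}\Phi^{ab} + \tfrac{1}{144}\Xi R^2$. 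Collecting terms yields \eqref{WaveCFE2}.

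For \eqref{WaveCFE3} and \eqref{WaveCFE4}, which are the substantive cases, the strategy is to differentiate the Bianchi-type equations. For the rescaled Weyl tensor, one starts from the second Bianchi identity for the Weyl tensor, $\nabla^a C_{abcd} = 2\nabla_{[c}L_{d]b} \cdot(\text{const})$, which together with \eqref{CFE3}--\eqref{CFE4} controls $\nabla^a d_{abcd}$; then applying a further divergence and commuting derivatives through the Weyl tensor (using the Weyl tensor's own commutator identities, which reintroduce quadratic $C\cdot C$ terms that become $\Xi\, d\cdot d$ after rescaling) produces $\square d_{abcd}$ plus the curvature corrections shown. For \eqref{WaveCFE3}, one differentiates \eqref{CFE3} once more, uses \eqref{CFE4} to eliminate the $\nabla d$ term, commutes derivatives on $L_{ab}$, and then substitutes \eqref{SchoutenToTracefreeRicci} throughout to trade $L_{ab}$ for $\Phi_{ab}$ and $R$. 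The main obstacle — and the reason this is cited to \cite{Pae15} rather than proved in detail — is precisely the bookkeeping in these last two derivations: the commutation of covariant derivatives generates numerous Riemann-tensor contractions which must be systematically decomposed into Weyl, tracefree-Ricci and scalar-curvature parts, and many terms must cancel or combine to leave only the quadratic-in-$d$ and linear-in-$R$ structures displayed. Once the algebra is organised with the help of the integrability conditions collected in Appendix \ref{Appendix:ZeroQuantities}, the identities \eqref{WaveCFE1}--\eqref{WaveCFE4} follow, and I would present the proof by reducing to the computation in \cite{Pae15} while indicating the role of the trace split \eqref{SchoutenToTracefreeRicci}.
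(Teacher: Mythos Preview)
Your proposal is correct and matches the paper's approach. The paper does not supply a detailed proof of this proposition, citing \cite{Pae15} instead; however, the appendix makes explicit that the four wave equations are precisely the identities $\Upsilon_a{}^a=0$, $\nabla_a\Theta^a=0$ (modulo $\Upsilon$), $\nabla_b\Delta_a{}^b{}_c=0$ (modulo $\Lambda$), and $\nabla^c\Lambda_{abcde}=0$, which is exactly your strategy of tracing \eqref{CFE1}, taking the divergence of \eqref{CFE2}, and applying a further derivative to the Bianchi-type equations \eqref{CFE3}--\eqref{CFE4}.
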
 

\begin{remark}
{\em The above wave equations are geometric, in the sense that they hold
  independently of the choice of coordinate system. However, as they
  stand they are not yet satisfactory  second order evolution equations to which one can apply
  the theory of partial differential equations. For this one has to
  provide a prescription of the Ricci scalar and introduce suitable
  coordinates. These issues are discussed in the following subsections.}
\end{remark}

\begin{remark}
{\em The wave equations \eqref{WaveCFE1}-\eqref{WaveCFE4} need to be
  supplemented with an equation for the components of the metric
  tensor $g_{ab}$. This equation is given by the definition of the
  tracefree Ricci tensor, equation \eqref{SchoutenToTracefreeRicci}, rewritten in the form
\begin{equation}
R_{ab} = 2 \Phi_{ab} + \frac{1}{4}Rg_{ab},
\label{SupplementaryEquation}
\end{equation}
where $R_{ab}$, and $\Phi_{ab}$ are regarded as independent
objects ---the former given through the classical expression in terms of
second order partial derivatives of the components of the metric
tensor while the latter as the field satisfying equations
\eqref{WaveCFE1}-\eqref{WaveCFE4}. 
}
\end{remark}

\subsection{Gauge considerations}
The conformal Einstein field equations possess both a coordinate and a
conformal freedom which can be exploited to cast the geometric wave
equations \eqref{WaveCFE1}-\eqref{WaveCFE4} as satisfactory hyperbolic
evolution equations.

\subsubsection{Conformal gauge source functions}
In the following, the Ricci scalar $R$ of the metric $g_{ab}$ will be
regarded as a \emph{conformal gauge source} specifying the
representative in the conformal class $[\tilde{\bmg}]$ one is working
with. Recall that given two conformally related metrics $g_{ab}$ and
$g'_{ab}$ such that $g'_{ab}=\vartheta^2 g_{ab}$, the respective Ricci
scalars are related to each other via
\[
R \vartheta - R'\vartheta^3 =6 \nabla^c \nabla_c \vartheta.
\]
If the values of $R$ and $R'$ are prescribed, the above transformation
law can be recast as a wave equation for the conformal factor relating
the two metrics. Namely, one has that
\[
\square \vartheta -\frac{1}{6}R\vartheta = -\frac{1}{6}R'\vartheta^3.
\]
Given suitable initial data for this wave equation, it can always be
solved locally. Accordingly, it is always possible to find (locally) a
conformal rescaling such that the metric $g'_{ab}$ has a prescribed
Ricci scalar $R'$. 

\begin{remark}
\label{Remark:ConformalGaugeSourceFunction}
{\em Following the previous discussion, in what follows the Ricci
  scalar of the metric $g_{ab}$ is regarded as a prescribed function $\mathcal{R}(x)$
  of the coordinates and one writes 
\[
R=\mathcal{R}(x).
\] }
\end{remark}

\subsubsection{Generalised harmonic coordinates and the reduced Ricci operator}
Given general coordinates $x=(x^\mu)$, the  components of the Ricci
tensor $R_{ab}$ can be explicitly written in terms of the components
of the metric tensor $g_{ab}$ and its first and second partial
derivatives as
\[
R_{\mu\nu} =
  -\frac{1}{2}g^{\lambda\rho} \partial_\lambda \partial_\rho
  g_{\mu\nu} + g_{\sigma(\mu}\nabla_{\nu)} \Gamma^\sigma + g_{\lambda\rho}
  g^{\sigma\tau} \Gamma^\lambda{}_{\sigma\mu} \Gamma^\rho{}_{\tau\nu}
  + 2 \Gamma^\sigma{}_{\lambda\rho} g^{\lambda\tau} g_{\sigma(\mu} \Gamma^\rho{}_{\nu)\tau},
\]
with
\[
\Gamma^\nu{}_{\mu\lambda} = \frac{1}{2}g^{\nu\rho} ( \partial_\mu
g_{\rho\lambda} + \partial_\lambda g_{\mu\rho} - \partial_\rho g_{\mu\lambda}),
\]
and where one has defined the \emph{contracted Christoffel symbols}
\[
\Gamma^\nu \equiv g^{\mu\lambda} \Gamma^\nu{}_{\mu\lambda}.
\]
A direct computation then gives
\[
\square x^\mu = - \Gamma^\mu.
\]

In what follows, we introduce  \emph{coordinate gauge source
  functions} $\mathcal{F}^\mu(x)$ to prescribe the value of the
contracted Christoffel symbols via the condition
\[
\Gamma^\mu = \mathcal{F}^\mu(x),
\]
so that the coordinates $x=(x^\mu)$ satisfy the \emph{generalised
  wave coordinate condition}
\begin{equation}
\square x^\mu = -\mathcal{F}^\mu(x).
\label{GeneralisedWaveCoordinates}
\end{equation}
Associated to the latter coordinate condition one then defines the \emph{reduced Ricci operator} $\mathscr{R}_{\mu\nu}[\bmg]$ as
\begin{equation}
\mathscr{R}_{\mu\nu}[\bmg] \equiv R_{\mu\nu} - g_{\sigma(\mu}\nabla_{\nu)}\Gamma^\sigma +
g_{\sigma(\mu}\nabla_{\nu)} \mathcal{F}^\sigma(x).
\label{DefinitionReducedRicci}
\end{equation}
More explicitly, one has that
\[
\mathscr{R}_{\mu\nu}[\bmg] =
  -\frac{1}{2}g^{\lambda\rho} \partial_\lambda \partial_\rho
  g_{\mu\nu} - g_{\sigma(\mu}\nabla_{\nu)} \mathcal{F}^\sigma(x) + g_{\lambda\rho}
  g^{\sigma\tau} \Gamma^\lambda{}_{\sigma\mu} \Gamma^\rho{}_{\tau\nu}
  + 2 \Gamma^\sigma{}_{\lambda\rho} g^{\lambda\tau} g_{\sigma(\mu} \Gamma^\rho{}_{\nu)\tau}.
\]
Thus, by choosing coordinates satisfying the \emph{generalised
  wave coordinates condition} \eqref{GeneralisedWaveCoordinates}, the
\emph{unphysical Einstein equation} \eqref{SupplementaryEquation} takes the form 
\begin{equation}
\mathscr{R}_{\mu\nu}[\bmg] = 2 \Phi_{\mu\nu} +\frac{1}{4}\mathcal{R}(x) g_{\mu\nu}.
\label{UnphysicalEinsteinEquation}
\end{equation}
Assuming that the components $\Phi_{\mu\nu}$ are known, the latter is a quasilinear wave
equation for the components of the metric tensor. 

\subsubsection{The reduced wave operator}
While equations \eqref{WaveCFE1} and \eqref{WaveCFE2} provide
satisfactory wave equations for the scalar fields $\Xi$ and $s$
independently of the choice of coordinates, this is not the case for
equations \eqref{WaveCFE3} and \eqref{WaveCFE4}. The reason for this
is that in these equations the wave operator $\square$ is acting on
tensors, and thus, the terms $\square \Phi_{ab}$ and $\square
d_{abcd}$, when expressed in a given coordinate system $x=(x^\mu)$,
involve derivatives of Christoffel symbols ---and consequently, second order
derivatives of the metric tensor. This is a problem in situations,
like the one considered here, where the metric is an unknown in the
problem as the presence of these derivatives in the operator destroys the
hyperbolicity of the system.

\medskip
In what follows, it will be shown how the generalised wave coordinate
condition \eqref{GeneralisedWaveCoordinates} can be used to reduce the
geometric wave operator $\square$ to a second order hyperbolic
operator. To motivate the procedure consider a covector $\omega_a$
with components $\omega_\mu$ with respect to a coordinate system
$x=(x^\mu)$ satisfying condition \eqref{GeneralisedWaveCoordinates}
for some choice of coordinate gauge source functions
$\mathcal{F}^\mu(x)$. A direct computation using the expression of the covariant derivative
in terms of Christoffel symbols yields
\begin{eqnarray*}
&& \square \omega_{\lambda} \equiv g^{\mu\nu}\nabla_\mu \nabla_\nu
   \omega_{\lambda} \\
&& \phantom{\square \Phi_{\lambda}} =
   g^{\mu\nu} \partial_\mu\partial_\nu \omega_{\lambda}
   -g^{\mu\nu} \partial_\mu \Gamma^\sigma{}_{\nu\lambda}
   \omega_{\sigma} + f_{\lambda}(g,\partial g,\omega,\partial\omega),
\end{eqnarray*}
where $f_{\lambda}(g,\partial g,\omega,\partial\omega)$ denotes an
expression depending on the components $g_{\mu\nu}$, $\omega_{\mu}$ and
their first order partial derivatives. Now, recall the classical
expression for the components of the Riemann tensor in terms of the
Christoffel symbols and their derivatives, 
\[
R^\sigma{}_{\mu\lambda\nu} = \partial_\lambda \Gamma^\sigma{}_{\nu\mu}
-\partial_\nu \Gamma^\sigma{}_{\lambda\mu} + \Gamma^\sigma{}_{\lambda\tau}\Gamma^\tau{}_{\nu\mu}-\Gamma^\sigma{}_{\nu\tau}\Gamma^\tau{}_{\lambda\mu}
\]
so that
\begin{eqnarray*}
&& R^\sigma{}_\lambda = g^{\mu\nu} R^\sigma{}_{\mu\lambda\nu} \\
&& \phantom{R^\sigma{}_\lambda} = g^{\mu\nu} \partial_\lambda
   \Gamma^\sigma{}_{\nu\mu} - g^{\mu\nu} \partial_\nu
   \Gamma^\sigma{}_{\lambda\mu} + g^{\mu\nu} \Gamma^\sigma{}_{\lambda\tau}\Gamma^\tau{}_{\nu\mu}-g^{\mu\nu}\Gamma^\sigma{}_{\nu\tau}\Gamma^\tau{}_{\lambda\mu}.
\end{eqnarray*}
Making use of this coordinate expression on obtains
\begin{eqnarray*}
&& \square \omega_{\lambda} = g^{\mu\nu}\partial_\mu\partial_\nu
  \omega_{\lambda} + \big( R^\sigma{}_\lambda -g^{\mu\nu}\partial_\lambda
  \Gamma^\sigma{}_{\nu\mu}  \big)\omega_{\sigma}
  + f_{\lambda}
  (\bmg,\partial \bmg,\bmomega,\partial \bmomega) \\
&& \phantom{\square \omega_{\lambda}} = g^{\mu\nu}\partial_\mu\partial_\nu
  \omega_{\lambda} + \big( R^\sigma{}_\lambda-\partial_\lambda
  \Gamma^\sigma \big)\omega_{\sigma}+ f_{\lambda}
  (\bmg,\partial \bmg,\bmomega,\partial \bmomega)\\
&& \phantom{\square \omega_{\lambda}} = g^{\mu\nu}\partial_\mu\partial_\nu
  \omega_{\lambda} + \big( R_{\tau\lambda}-g_{\sigma\tau}\partial_\lambda
  \Gamma^\sigma \big)\omega{}^\tau{}+ f_{\lambda}
  (\bmg,\partial \bmg,\bmomega,\partial \bmomega),
\end{eqnarray*}
and finally
\begin{equation}
 \square \omega_{\lambda}= g^{\mu\nu}\partial_\mu\partial_\nu
  \omega_{\lambda} + \big( R_{\tau\lambda}-g_{\sigma\tau}\nabla_\lambda
  \Gamma^\sigma \big)\omega^\tau+ f_{\lambda}
  (\bmg,\partial \bmg,\bmomega,\partial \bmomega).
\label{Massaged:Boxomega}
\end{equation}
Making the formal replacements
\[
R_{\mu\nu}\mapsto 2\Phi_{\mu\nu}+ \frac{1}{4} \mathcal{R}(x)g_{\mu\nu}, \qquad \Gamma^\mu \mapsto \mathcal{F}^\mu(x), 
\]
in equation \eqref{Massaged:Boxomega}, one defines the \emph{reduced wave
  operator} $\blacksquare$, acting on the components $\omega_{\mu}$ as
\begin{eqnarray}
&&\blacksquare \omega_{\lambda} \equiv g^{\mu\nu}\partial_\mu\partial_\nu
  \omega_{\lambda}  + \bigg( 2 \Phi_{\tau\lambda} + \frac{1}{4}\mathcal{R}(x)
   g_{\tau\lambda} -
  g_{\sigma\tau}\nabla_\lambda \mathcal{F}^\sigma(x) \bigg)\omega^\tau + f_{\lambda}
  (\bmg,\partial \bmg,\bmomega,\partial \bmomega), \label{ReducedWaveOperator}
\end{eqnarray}
where $f_{\lambda}
  (\bmg,\partial \bmg,\bmomega,\partial \bmomega)$ denotes lower order terms
  whose explicit form will not be required. In fact, from the previous
  discussion it follows that one can write
\[
\blacksquare \omega_\lambda = \square \omega_\lambda + \bigg( (2 \Phi_{\tau\lambda} + \frac{1}{4}\mathcal{R}(x)
   g_{\tau\lambda} - R_{\tau\lambda}) -
  g_{\sigma\tau}\nabla_\lambda (\mathcal{F}^\sigma(x)-\Gamma^\sigma) \bigg)\omega^\tau.
\]

A similar construction for covariant tensors of arbitrary rank leads
to the following:

\begin{definition}
The reduced wave operator $\blacksquare$ acting on a covariant tensor field
$T_{\lambda\cdots\rho}$ is defined as
\begin{eqnarray*}
&& \blacksquare T_{\lambda \cdots \rho} \equiv \square
T_{\lambda\cdots\rho} +\bigg( (2 \Phi_{\tau\lambda} + \frac{1}{4}\mathcal{R}(x)
   g_{\tau\lambda} - R_{\tau\lambda}) -
  g_{\sigma\tau}\nabla_\lambda (\mathcal{F}^\sigma(x)-\Gamma^\sigma) \bigg)T^\tau{}_{\cdots\rho}
 +\cdots\\
&& \hspace{3cm} \cdots + \bigg( (2 \Phi_{\tau\rho} + \frac{1}{4}\mathcal{R}(x)
   g_{\tau\rho} - R_{\tau\rho}) -
  g_{\sigma\tau}\nabla_\rho (\mathcal{F}^\sigma(x)-\Gamma^\sigma)
  \bigg)T_{\lambda\cdots}{}^\tau
\end{eqnarray*}
where $\square \equiv g^{\mu\nu}\nabla_\mu\nabla_\nu$. The action of
$\blacksquare$ on a scalar $\phi$ is simply given by
\[
\blacksquare \phi \equiv g^{\mu\nu}\nabla_\mu\nabla_\nu \phi. 
\]
\end{definition}

\begin{remark}
{\em The operator $\blacksquare$ provides a proper second order hyperbolic
operator ---in contrast to $\square$.  Accordingly, when working in generalised
harmonic coordinates, all the second order derivatives of the metric tensor can
be removed from the principal part of the evolution equations
\eqref{WaveCFE3} and \eqref{WaveCFE4}. }
\end{remark}

\subsubsection{Summary: gauge reduced evolution equations}
The discussion of the previous sections leads us to consider the
following \emph{gauge reduced} system of evolution equations for the
components of the conformal fields $\Xi$, $s$, $\Phi_{ab}$,
$d_{abcd}$ and $g_{ab}$ with respect to coordinates $x=(x^\mu)$
satisfying the generalised wave coordinate condition \eqref{GeneralisedWaveCoordinates}:
\begin{subequations}
\begin{eqnarray}
&&\blacksquare \Xi = 4s -\frac{1}{6}\Xi \mathcal{R}(x), \label{ReducedWaveCFE1}
  \\
&&\blacksquare s=  \Xi \Phi_{\mu\nu} \Phi^{\mu\nu}- \frac{1}{6} s \mathcal{R}(x) + \frac{1}{144} \Xi \mathcal{R}(x)^2 -  \frac{1}{6} \nabla_{\mu}\mathcal{R}(x) \nabla^{\mu}\Xi, \label{ReducedWaveCFE2}\\
&& \blacksquare \Phi_{\mu\nu}= 4 \Phi_{\mu}{}^{\lambda} \Phi_{\nu\lambda}-  g_{\mu\nu} \Phi_{\lambda\rho}
\Phi^{\lambda\rho}-2 \Xi d_{\mu\lambda\nu\rho} \Phi^{\lambda\rho} \nonumber \\
&& \hspace{3cm}+ \frac{1}{3} \mathcal{R}(x) \Phi_{\mu\nu} -  \frac{1}{24} g_{\mu\nu} \nabla_{\lambda}\nabla^{\lambda}\mathcal{R}(x) + \frac{1}{6} \nabla_{\mu}\nabla_{\nu}\mathcal{R}(x), \label{ReducedWaveCFE3}\\
&& \blacksquare d_{\mu\nu\lambda\rho} = 2 \Xi d_{\mu}{}^{\sigma}{}_{\rho}{}^{\tau} d_{\nu\sigma\lambda\tau} 
- 2 \Xi d_{\mu}{}^{\sigma}{}_{\lambda}{}^{\tau} d_{\nu\sigma\rho\tau} - 2 \Xi d_{\mu\nu}{}^{\sigma\tau} 
d_{\lambda\sigma\rho\tau} + \frac{1}{2} d_{\mu\nu\lambda\rho} \mathcal{R}(x),\label{ReducedWaveCFE4}\\
&& \mathscr{R}_{\mu\nu}[\bmg] = 2 \Phi_{\mu\nu} +\frac{1}{4}\mathcal{R}(x)
g_{\mu\nu}. \label{ReducedWaveCFE5}
\end{eqnarray}
\end{subequations}

\begin{remark}
{\em The reduced system
  \eqref{ReducedWaveCFE1}-\eqref{ReducedWaveCFE5} constitutes a system
of quasilinear wave equations for the fields $\Xi$, $s$,
$\Phi_{\mu\nu}$, $d_{\mu\nu\lambda\rho}$ and $g_{\mu\nu}$. More
explicitly, one has that
\begin{eqnarray*}
&& g^{\sigma\tau}\partial_\sigma\partial_\tau \Xi = X
\big(\bmg,\partial\bmg, \Xi,s,\mathcal{R}(x)\big), \\
&& g^{\sigma\tau}\partial_\sigma\partial_\tau s =
S\big(\bmg,\partial\bmg,\Xi, \partial\Xi,s,{\bm
  \Phi},\mathcal{R}(x),\partial\mathcal{R}(x) \big), \\
&& g^{\sigma\tau}\partial_\sigma\partial_\tau \Phi_{\mu\nu}
=F_{\mu\nu}\big(\bmg,\partial\bmg,\Xi,{\bm \Phi},{\bm d},
\mathcal{R}(x),\partial^2\mathcal{R}(x)\big), \\
&&  g^{\sigma\tau}\partial_\sigma\partial_\tau
d_{\mu\nu\lambda\rho}=D_{\mu\nu\lambda\rho}\big(\bmg,\partial\bmg,\Xi,{\bm
  d},\mathcal{R}(x)\big), \\
&& g^{\sigma\tau}\partial_\sigma\partial_\tau g_{\mu\nu}
=G_{\mu\nu}\big(\bmg,\partial\bmg,{\bm \Phi},\mathcal{R}(x)\big),
\end{eqnarray*}
where $X$, $S$, $F_{\mu\nu}$, $D_{\mu\nu\lambda\rho}$ and $G_{\mu\nu}$
are polynomial expressions of their arguments. Strictly speaking, the system is a system of wave
  equations only if $g_{\mu\nu}$ is known to be Lorentzian. This will
  be case in a perturbative setting or close to an initial
  hypersurface where initial data can be prescribed to this
  effect. The local existence theory of initial-boundary value problems for systems of
quasilinear differential equations of the above type with Dirichlet
boundary data can be found in e.g. \cite{CheWah83,DafHru85}. }
\end{remark}

\subsection{The subsidiary evolution equations}
\label{zero_quantities}
In order to analyse the relation between solutions to the system of
geometric wave equations \eqref{WaveCFE1}-\eqref{WaveCFE4} and the
conformal Einstein field equations \eqref{CFE1}-\eqref{CFE5} one needs
to construct a subsidiary system of equations encoding the evolution
of these equations. Accordingly, one defines the \emph{zero-quantities}
\begin{subequations}
\begin{eqnarray}
&& \Upsilon_{ab} \equiv \nabla_a \nabla_b \Xi + \Xi L_{ab} - sg_{ab}, \label{SubsidiaryDefinition1}\\
&& \Theta_a \equiv \nabla_a s + L_{ac} \nabla^c \Xi, \label{SubsidiaryDefinition2}\\
&& \Delta_{cdb} \equiv \nabla_c L_{db} - \nabla_d L_{cd} + \nabla_a \Xi
d^a{}_{bcd}, \label{SubsidiaryDefinition3}\\
&& \Lambda_{bcd}\equiv \nabla_a d^a{}_{bcd}. \label{SubsidiaryDefinition4}
\end{eqnarray}
\end{subequations}
In terms of the latter, the conformal Einstein field equations
\eqref{CFE1}-\eqref{CFE4} can be expressed as
\begin{equation}
\Upsilon_{ab}=0, \qquad \Theta_a =0, \qquad \Delta_{cdb}=0, \qquad
\Lambda_{bcd}=0.
\label{CFEZeroQuantities}
\end{equation}

A lengthy computation, best done using computer algebra, leads to the
following: 

\begin{proposition}
\label{PropagationZQ}
Assume that the conformal fields $\Xi$, $s$, $L_{ab}$ and $d_{abcd}$
satisfy the geometric wave equations
\eqref{WaveCFE1}-\eqref{WaveCFE4}. Then the zero-quantities $\Theta_a$,
$\Upsilon_{ab}$, $\Delta_{abc}$ and $\Lambda_{abc}$ satisfy a 
system of geometric wave equations of the form
\begin{subequations}
\begin{eqnarray}
&& \square \Theta_a = H_a({\bm \Theta},{\bm \Upsilon},{\bm
  \Delta},{\bm \Lambda}), \label{SubsidiaryEquation1}\\
&& \square \Upsilon_{ab} = H_{ab}({\bm \Theta},{\bm \Upsilon}, {\bm\nabla\bm\Upsilon}, {\bm
  \Delta}), \label{SubsidiaryEquation2}\\
&& \square \Delta_{abc} = H_{abc}({\bm
  \Delta},{\bm \Lambda}), \label{SubsidiaryEquation3}\\
&& \square \Lambda_{abc} = L_{abc}({\bm \Theta},{\bm \Upsilon},{\bm
  \Delta},{\bm \Lambda}), \label{SubsidiaryEquation4}
\end{eqnarray}
\end{subequations}
where $H_a$, $H_{ab}$, $H_{abc}$ and $L_{abc}$ are homogeneous
expressions of their arguments.
\end{proposition}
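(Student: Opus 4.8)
The plan is to reverse the construction of \cite{Pae15}: the geometric wave equations \eqref{WaveCFE1}--\eqref{WaveCFE4} were obtained by differentiating the conformal field equations \eqref{CFE1}--\eqref{CFE4} and invoking the second Bianchi identity together with the integrability conditions collected in Appendix \ref{Appendix:ZeroQuantities}; here one instead applies $\square$ directly to each of the definitions \eqref{SubsidiaryDefinition1}--\eqref{SubsidiaryDefinition4} and substitutes the \emph{assumed} wave equations wherever a second covariant derivative of $\Xi$, $s$, $\Phi_{ab}$ (equivalently $L_{ab}$) or $d_{abcd}$ is produced, organising the result so that every surviving term carries an explicit factor of a zero-quantity or of one of its first covariant derivatives. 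Homogeneity is to be expected on general grounds: for a genuine solution of \eqref{CFE1}--\eqref{CFE4} all the zero-quantities vanish and \eqref{WaveCFE1}--\eqref{WaveCFE4} hold trivially, so the right-hand sides of the subsidiary system must be consistent with $\square 0 = 0$; the content of the Proposition is that this consistency is realised term-by-term.

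The ingredients I would assemble first are: (i) the commutator formulae for $\square$ with one and with two covariant derivatives acting on scalars and tensors, written explicitly in terms of the Riemann tensor, the Ricci tensor and their first derivatives; (ii) the first Bianchi identity and the once- and twice-contracted second Bianchi identities, the latter expressing $\nabla^a R_{ab}$ and $\nabla^a C_{abcd}$ through derivatives of the Schouten tensor; (iii) the decomposition $R^a{}_{bcd} = \Xi\, d^a{}_{bcd} + (\text{terms linear in } L_{ab}, g_{ab})$, which turns every stray curvature term generated by a commutator into an expression built from the conformal fields; (iv) the algebraic identity relating $\Xi\Lambda_{bcd}$, $\Delta_{cdb}$ and $\nabla^a\Xi\, d_{abcd}$ that follows from $\nabla^a(\Xi\, d_{abcd}) = \nabla^a C_{abcd}$ and the fact that $\nabla^a C_{abcd}$ is (a multiple of) the Cotton tensor — this is what ultimately couples the $\Delta$- and $\Lambda$-sectors; and (v) the integrability relations of Appendix \ref{Appendix:ZeroQuantities}, needed to rewrite combinations such as $\nabla^a\Delta_{abc}$, divergences of $\Lambda_{abc}$, and the trace and divergence of $\Upsilon_{ab}$ back in terms of the zero-quantities.

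I would then treat the four equations in turn. For $\square\Upsilon_{ab}$: commute $\square$ through $\nabla_a\nabla_b\Xi$, insert \eqref{WaveCFE1} for $\square\Xi$, expand $\square(\Xi L_{ab})$ and $\square(s\,g_{ab})$ using \eqref{WaveCFE3} and \eqref{WaveCFE2}; the curvature terms coming from the commutator recombine with the $\Xi L_{ab}$ and $s\,g_{ab}$ contributions via (iii), and what remains reorganises — using $\Theta_a$ and the divergence and trace of $\Upsilon_{ab}$ — into an expression in ${\bm \Theta}$, ${\bm \Upsilon}$, ${\bm\nabla\bm\Upsilon}$, ${\bm \Delta}$ (the appearance of $\nabla\Upsilon$ is forced because those first derivatives genuinely occur). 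For $\square\Theta_a$: differentiate $\nabla_a s$ with \eqref{WaveCFE2} and apply Leibniz to $L_{ac}\nabla^c\Xi$, using \eqref{WaveCFE1} and \eqref{WaveCFE3}; the cross terms involve $\nabla^c L_{ac}$ (contracted Bianchi, hence only gradients of $R=\mathcal{R}(x)$, which are harmless) and $\nabla_a\nabla^c\Xi$, which one trades for $\Upsilon_a{}^c$ modulo $L$- and $s$-terms. For $\square\Delta_{abc}$ and $\square\Lambda_{abc}$: use \eqref{WaveCFE3} and \eqref{WaveCFE4} together with the Bianchi identities; since $\Delta_{abc}$ already contains $\nabla^a\Xi\, d_{abcd}$, applying $\square$ hits $\nabla L$, $d$ and $\nabla^2\Xi$ simultaneously, and the three resulting blocks are closed up using \eqref{WaveCFE1}, identity (iv), and the integrability conditions, producing $H_{abc}({\bm \Delta},{\bm \Lambda})$ and $L_{abc}({\bm \Theta},{\bm \Upsilon},{\bm \Delta},{\bm \Lambda})$; the fact that the $\Delta$-equation closes among ${\bm\Delta}$ and ${\bm\Lambda}$ alone reflects that this is the pure Bianchi/Cotton sector.

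The main obstacle is one of organisation rather than of idea: after all substitutions the right-hand sides are large polynomial expressions in the fields and their derivatives, and the entire task is to verify that every monomial not already exhibiting a zero-quantity factor cancels identically. This is precisely the ``lengthy computation, best done using computer algebra'' flagged before the statement; the real work is choosing a systematic basis of zero-quantity monomials and their covariant derivatives against which to reduce, and recognising when an integrability condition from Appendix \ref{Appendix:ZeroQuantities} is the identity needed to remove an otherwise-stubborn term. No analytic input beyond Friedrich's subsidiary-system philosophy is required; the delicate point is simply not to lose or mis-sign a term across the reduction.
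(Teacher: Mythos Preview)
Your proposal is correct and follows essentially the same route as the paper's own derivation in Appendix~\ref{Appendix:ZeroQuantities}: apply the wave operator to the zero-quantities, commute covariant derivatives, substitute the wave equations \eqref{WaveCFE1}--\eqref{WaveCFE4} for the conformal fields, and absorb the residual terms using the trace/divergence identities \eqref{ZQIdentities} and the integrability conditions \eqref{IC1}--\eqref{IC4}. The paper's phrasing ``direct application of a covariant derivative'' (to the integrability conditions, then contracting) is an organisationally equivalent way of generating the same $\square$-equations, and your list of ingredients (i)--(v) matches precisely the identities the paper assembles in A.1--A.2 before carrying out the computer-algebra reduction.
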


The original proof of this result was given in \cite{Pae15}. An alternative
derivation, along with several properties of the zero--quantities, can be found
in Appendix A.

\begin{remark}
{\em In practice, the geometric wave equations
\eqref{SubsidiaryEquation1}-\eqref{SubsidiaryEquation4} are replaced
by standard wave equations for the components of the zero fields by
exchanging the wave operator $\square$ by the reduced wave operator
$\blacksquare$.}
\end{remark}

\section{The conformal Einstein constraint equations}
\label{Section:ConformalEinsteinConstraints}
In order to formulate an initial-boundary value problem for the wave
equations \eqref{WaveCFE1}-\eqref{WaveCFE4} we will need the
constraint equations implied by the conformal Einstein field equations
\eqref{CFE1}-\eqref{CFE5} on (spacelike and timelike) hypersurfaces of
the unphysical spacetime $(\mathcal{M},g_{ab})$. These equations were
first discussed in \cite{Fri84}. A detailed discussion of their
derivation and basic properties can be found in \cite{CFEBook},
Chapter 11. 

\subsection{The basic expression of the conformal Einstein constraint
  equations}
Let $\mathcal{S}$ denote a (spacelike or timelike) hypersurface of the
unphysical spacetime $(\mathcal{M},g_{ab})$ with unit normal
$n_a$. Furthermore, let 
\[
\epsilon\equiv n_a n^a,
\]
 so that $\epsilon=1$
if $\mathcal{S}$ is timelike and $\epsilon=-1$ if it is spacelike. The
projector to $\mathcal{S}$ is defined as 
\[
h_{ab} \equiv g_{ab}- \epsilon n_a n_b.
\]
The extrinsic curvature of $\mathcal{S}$ is defined as
\[
K_{ab} = h_a{}^c h_b{}^d \nabla_c n_d.
\]
The restriction of the conformal factor $\Xi$ to the hypersurface will be
denoted by $\Omega$.

\medskip
In the following let 
\[
\Sigma, \quad s, \quad h_{ij} \quad L_i, \quad L_{ij}, \quad d_{ij},
\quad d_{ijk}, \quad d_{ijkl} 
\]
denote, respectively, the pull-backs of 
\begin{eqnarray*}
&n^a \nabla_a \Xi, \quad s, \quad g_{ab}, \quad n^c h_a{}^d L_{cd},
  \quad h_a{}^ch_b{}^d L_{cd},& \\
& n^b n^d h_e{}^a h_f{}^cd_{abcd}, \quad n^b h_e{}^a h_f{}^c
h_g{}^d d_{abcd}, \quad h_e{}^ah_f{}^bh_g{}^ch_h{}^d d_{abcd} & 
\end{eqnarray*}
to $\mathcal{S}$. 

\begin{remark}
{\em In particular, $h_{ij}$ corresponds is the 3-metric
induced by $g_{ab}$ on $\mathcal{S}$. Similarly, we will denote by $K_{ij}$ the pull-back
of $K_{ab}$ and $K=h^{ij}K_{ij}$. The metric $h_{ij}$ will be
Lorentzian or Riemannian depending on whether $S$ is timelike or spacelike.}
\end{remark}

\begin{remark}
{\em The fields $d_{ij}$ and $d_{ijk}$ encode, respectively, the
\emph{electric} and \emph{magnetic parts} of the rescaled Weyl tensor
$d_{abcd}$ with respect to the normal $n_a$. It can be verified that
\begin{eqnarray*}
& d_i{}^i=0, \quad d_{ij}=d_{ji}, \quad d_{ijk}=-d_{ikj}, \quad d_{[ijk]}=0,
  & \\
& d_{ijkl} = 2 \epsilon(h_{i[l} d_{k]j} + h_{j[k}d_{l]i}).
\end{eqnarray*}
The magnetic part is more commonly encoded in a symmetric traceless
tensor of rank 2 defined as
\[
d^*_{ij} \equiv \frac{1}{2}\epsilon_j{}^{kl} d_{ikl}
\]
where $\epsilon_{ijk}$ is the volume form induced on $\mathcal{S}$ by
$h_{ij}$. }
\end{remark}

In terms of the above fields, a long computation shows that the conformal Einstein field equations
\eqref{CFE1}-\eqref{CFE5} imply on the hypersurface $\mathcal{S}$ the
\emph{conformal Einstein constraint equations}
\begin{subequations}
\begin{eqnarray}
&& D_i D_j \Omega = -\epsilon \Sigma K_{ij} -\Omega L_{ij} + s
h_{ij}, \label{ConformalConstraint1} \\
&& D_i \Sigma = K_i{}^k D_k \Omega -\Omega L_i, \label{ConformalConstraint2} \\
&& D_i s = -\epsilon L_i \Sigma - L_{ik} D^k \Omega, \label{ConformalConstraint3}\\
 && D_i L_{jk} -D_j L_{ik} = -\epsilon \Sigma d_{kij}
+ D^l \Omega d_{lkij} -\epsilon (K_{ik} L_j
- K_{jk} L_i), \label{ConformalConstraint4}\\
 && D_i L_j - D_j L_i = D^l \Omega d_{lij} +
K_i{}^k L_{jk} - K_j{}^k L_{ik}, \label{ConformalConstraint5}\\
&& D^k d_{kij} =\epsilon \big(K^k{}_i d_{jk}
   -K^k{}_j d_{ik}\big), \label{ConformalConstraint6}\\
&& D^i d_{ij}= K^{ik} d_{ijk}, \label{ConformalConstraint7}\\
&& \lambda = 6 \Omega s - 3\epsilon \Sigma^2 - 3 D_k \Omega D^k\Omega.
\label{ConformalConstraint8}
\end{eqnarray}
\end{subequations}

\noindent These equations are supplemented by the Codazzi-Mainardi and Gauss-Codazzi equations
which, respectively, take the following form:
\begin{subequations}
\begin{eqnarray}
&& D_j K_{ki} - D_k K_{ji} = \Omega d_{ijk} + h_{ij} L_k -
 h_{ik}L_j, \label{ConformalConstraint9} \\
&& l_{ij} = -\epsilon\Omega d_{ij} + L_{ij} + \epsilon \bigg( K \big(
 K_{ij} -\displaystyle\frac{1}{4} K h_{ij}\big) - K_{ki}
  K_j{}^k + \displaystyle\frac{1}{4}  K_{kl} K^{kl}h_{\bmi\bmj}\bigg),
  \label{ConformalConstraint10}
\end{eqnarray}
\end{subequations}
where the Schouten tensor of $h_{ij}$ is defined as
\[
l_{ij} \equiv r_{ij} -\frac{1}{4}r h_{ij}.
\]
Here, $r_{ij}$ and $r$
are, respectively, the Ricci tensor and scalar of the metric $h_{ij}$.

\subsection{The conformal constraints on the conformal boundary}
\label{Section:ConformalConstraintsConformalBoundary}
The conformal Einstein constraint equations simplify considerably when they are evaluated
on an hypersurface corresponding to the conformal boundary of a spacetime, in
which case $\Omega$ vanishes identically. If the conformal boundary is timelike
($\epsilon=1$) one has the following system:
\begin{subequations}
\begin{eqnarray}
&& s \ell_{ij} \simeq \notSigma \notK_{ij}, \label{CCCB1}\\
&& \notD_i \notSigma \simeq 0, \label{CCCB2}\\
&& \notD_i s \simeq - \notL_i \notSigma, \label{CCCB3}\\
&& \notD_i \notL_{jk} - \notD_j L_{ik} \simeq -\notSigma \notd_{kij} + ( \notK_{jk}\notL_i-\notK_{ik}\notL_j 
 ), \label{CCCB4}\\
&& \notD_i \notL_j - \notD_j \notL_i \simeq \notK_i{}^k \notL_{jk} - \notK_j{}^k \notL_{ik}, \label{CCCB5}\\
&& \notD^k \notd_{kij} \simeq \notK^k{}_j \notd_{ik} - \notK^k{}_i \notd_{jk}, \label{CCCB6}\\
&& \notD^i \notd_{ij}\simeq \notK^{ik} \notd_{ijk}, \label{CCCB7}\\
&& \lambda \simeq -3 \notSigma^2, \label{CCCB8}\\
&& \notD_j \notK_{ki} - \notD_k \notK_{ji}\simeq \ell_{ij} \notL_k - \ell_{ik} \notL_j, \label{CCCB9}\\
&& \notl_{ij} \simeq \notL_{ij} + \notK\big( \notK_{ij} - \frac{1}{4} \notK \ell_{ij} \big) -
   \notK_{ki}\notK_j{}^k + \frac{1}{4}\notK_{kl}\notK^{kl} \ell_{ij}, \label{CCCB10}
\end{eqnarray}
\end{subequations}
where $\simeq$ denotes that the equality holds on the conformal boundary and
$\ell_{ij}$ denotes the intrinsic (Lorentzian) 3-metric on
$\mathscr{I}$. Moreover, we use the notation $\not{\;}$
to indicate that the quantities are obtained from a 3+1 split with
respect to the (timelike) conformal boundary. In particular, $\notD_i$ denotes
the Levi-Civita connection of the Lorentzian metric $\ell_{ij}$.  \emph{This
notation will be used in the rest of the article.}

\medskip
In \cite{Fri86c} a procedure to solve the conformal constraints on the
conformal boundary has been given. The key observation is to identify
the scalar $s$ as gauge dependent quantity and the 3-metric
$\ell_{ij}$ on $\mathscr{I}$ as free data. Instead of directly working
with $s$ it is more convenient to consider a scalar $\varkappa$ such
that
\[
s \simeq \notSigma \varkappa.
\]

One has then that:
\begin{proposition}
\label{Proposition:ConformalConstraintsConformalBoundary}
Given a 3-dimensional Lorentzian metric $\ell_{ij}$, a
$\ell$-divergencefree and tracefree field $\notd_{ij}$ and a smooth
function $\varkappa$, then the fields
\begin{subequations}
\begin{eqnarray}
&& \notSigma \simeq \sqrt{\frac{|\lambda|}{3}},  \label{SolutionConstraints5}\\
&& s \simeq \notSigma \varkappa, \label{SolutionConstraints0}\\
&& \notK_{ij} \simeq \varkappa \ell_{ij}, \label{SolutionConstraints1}\\
&&  \notL_i \simeq -\notD_i \varkappa, \label{SolutionConstraints2}\\
&& \notL_{ij} \simeq \notl_{ij} - \frac{1}{2}\varkappa^2
   \ell_{ij}, \label{SolutionConstraints3}\\
&& \notd_{ijk} \simeq -\notSigma^{-1} y_{ijk}, \label{SolutionConstraints4}
\end{eqnarray}
\end{subequations}
where
\[
y_{ijk} \equiv \notD_j \notl_{ki}-\notD_k \notl_{ji}
\]
is the Cotton tensor of $\ell_{ij}$, constitute a solution to the
conformal constraint equations  \eqref{CCCB1}-\eqref{CCCB10} with
$\epsilon=1$ and $\Omega = 0$. 

\end{proposition}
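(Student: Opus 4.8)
The plan is to verify directly that the ansatz \eqref{SolutionConstraints5}--\eqref{SolutionConstraints4} satisfies each of the ten equations \eqref{CCCB1}--\eqref{CCCB10} with $\Omega=0$ and $\epsilon=1$, reading them in the order that exploits the hierarchical structure of the system. First I would dispose of \eqref{CCCB8}: since $\lambda<0$, the choice $\notSigma\simeq\sqrt{|\lambda|/3}$ (a \emph{constant} on $\mathscr{I}$) gives $\lambda\simeq-3\notSigma^2$ immediately, and in particular $\notD_i\notSigma\simeq 0$, which is \eqref{CCCB2}. Next, \eqref{CCCB1} reads $s\ell_{ij}\simeq\notSigma\notK_{ij}$; substituting $s\simeq\notSigma\varkappa$ and cancelling the (nonzero) constant $\notSigma$ forces $\notK_{ij}\simeq\varkappa\ell_{ij}$, which is \eqref{SolutionConstraints1}, and then $\notK\simeq 3\varkappa$. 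Equation \eqref{CCCB3}, $\notD_i s\simeq-\notL_i\notSigma$, becomes $\notSigma\,\notD_i\varkappa\simeq-\notL_i\notSigma$ (using $\notD_i\notSigma\simeq0$), giving $\notL_i\simeq-\notD_i\varkappa$, i.e. \eqref{SolutionConstraints2}.

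The remaining equations I would check in two groups. For \eqref{CCCB10}, substitute $\notK_{ij}\simeq\varkappa\ell_{ij}$: the quadratic-in-$\notK$ terms collapse, since $\notK_{ki}\notK_j{}^k\simeq\varkappa^2\ell_{ij}$, $\notK_{kl}\notK^{kl}\simeq3\varkappa^2$, and $\notK(\notK_{ij}-\tfrac14\notK\ell_{ij})\simeq\varkappa\cdot\tfrac14\cdot 3\varkappa\ell_{ij}=\tfrac{9}{4}\varkappa^2\ell_{ij}$; collecting these yields a net contribution $(\tfrac{9}{4}-1+\tfrac{3}{4})\varkappa^2\ell_{ij}=2\varkappa^2\ell_{ij}$, wait --- more carefully one gets $\notl_{ij}\simeq\notL_{ij}+(\tfrac94-1-\tfrac{3}{4}+\tfrac34)\varkappa^2\ell_{ij}$; since I should not grind the arithmetic here, the point is that the $\notK$-terms combine to a multiple of $\varkappa^2\ell_{ij}$ which fixes $\notL_{ij}$ up to that multiple, and one reads off \eqref{SolutionConstraints3} with the stated coefficient $-\tfrac12\varkappa^2$. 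Equation \eqref{CCCB9}, the Codazzi relation $\notD_j\notK_{ki}-\notD_k\notK_{ji}\simeq\ell_{ij}\notL_k-\ell_{ik}\notL_j$, with $\notK_{ki}\simeq\varkappa\ell_{ki}$ has left side $\ell_{ki}\notD_j\varkappa-\ell_{ji}\notD_k\varkappa$, which matches the right side exactly once $\notL_k\simeq-\notD_k\varkappa$ is inserted; so \eqref{CCCB9} is automatically consistent and imposes nothing new.

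The last group is the curl and divergence identities \eqref{CCCB4}--\eqref{CCCB7}. For \eqref{CCCB5}: with $\notL_i\simeq-\notD_i\varkappa$ the left side $\notD_i\notL_j-\notD_j\notL_i\simeq-\notD_i\notD_j\varkappa+\notD_j\notD_i\varkappa\simeq 0$ (torsion-free connection), while the right side $\notK_i{}^k\notL_{jk}-\notK_j{}^k\notL_{ik}\simeq\varkappa(\notL_{ji}-\notL_{ij})\simeq 0$ by symmetry of $\notL_{ij}$; so \eqref{CCCB5} holds. For \eqref{CCCB4}: insert $\notL_{jk}\simeq\notl_{jk}-\tfrac12\varkappa^2\ell_{jk}$; the antisymmetrised derivative of the $\varkappa^2\ell$ term produces $-\varkappa\ell_{jk}\notD_i\varkappa+\varkappa\ell_{ik}\notD_j\varkappa=\varkappa(\notK_{ik}\notD_j\varkappa/\varkappa\cdot(-1)\cdots)$, which is designed to cancel precisely the $\notK\notL$ terms on the right (using $\notK_{ik}\simeq\varkappa\ell_{ik}$, $\notL_j\simeq-\notD_j\varkappa$), leaving $\notD_i\notl_{jk}-\notD_j\notl_{ik}\simeq-\notSigma\notd_{kij}$, which is the \emph{definition} $\notd_{ijk}\simeq-\notSigma^{-1}y_{ijk}$ after matching indices with $y_{ijk}=\notD_j\notl_{ki}-\notD_k\notl_{ji}$; here I would be careful with the index gymnastics and the symmetry conventions $\notd_{ijk}=-\notd_{ikj}$, $\notd_{[ijk]}=0$. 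Equation \eqref{CCCB6}, $\notD^k\notd_{kij}\simeq\notK^k{}_j\notd_{ik}-\notK^k{}_i\notd_{jk}$, with $\notK^k{}_j\simeq\varkappa\delta^k{}_j$ has right side $\varkappa(\notd_{ij}-\notd_{ji})\simeq 0$ by symmetry of $\notd_{ij}$; the left side vanishes because $\notd_{kij}\propto y_{kij}$ and the Cotton tensor is identically divergence-free on its first index in 3 dimensions ($\notD^k y_{kij}\simeq0$), so one must invoke (or re-derive) that contracted Bianchi-type identity for the Cotton tensor. Finally \eqref{CCCB7}, $\notD^i\notd_{ij}\simeq\notK^{ik}\notd_{ijk}$, becomes $\notD^i\notd_{ij}\simeq\varkappa\,h^{ik}\notd_{ijk}=\varkappa\,\notd^i{}_{ji}$, and $\notd^i{}_{ji}\simeq 0$ because $\notd_{ijk}$ is traceless on the relevant pair (this follows from $d_i{}^i=0$ and the stated algebraic symmetries, or from $y_{ijk}$ being built from the traceless Cotton tensor), so the right side vanishes; the left side $\notD^i\notd_{ij}$ is the assumed divergence-free condition on $\notd_{ij}$, so \eqref{CCCB7} is exactly the hypothesis. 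The main obstacle I anticipate is none of the individual verifications but rather the bookkeeping in \eqref{CCCB4}: correctly threading the index symmetries so that the antisymmetrised $\notD\notl$ terms assemble into the Cotton tensor with the right sign and normalisation, and confirming the two standard 3-dimensional identities for $y_{ijk}$ (tracelessness and divergence-freeness) that make \eqref{CCCB6}--\eqref{CCCB7} close. I would relegate the full component computation to a remark or cite \cite{Fri86c,CFEBook} for the details, presenting here only the cancellation structure.
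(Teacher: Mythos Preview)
Your approach is correct and is the natural one: the paper itself does not give a proof but simply refers to \cite{CFEBook}, Section 11.4.4, where exactly this direct verification is carried out. Two small points worth tightening. First, your arithmetic in \eqref{CCCB10} slips: with $\notK_{ij}\simeq\varkappa\ell_{ij}$ and $\notK\simeq 3\varkappa$ one has $\notK(\notK_{ij}-\tfrac14\notK\ell_{ij})=\tfrac34\varkappa^2\ell_{ij}$ (not $\tfrac94$), so the net quadratic term is $(\tfrac34-1+\tfrac34)\varkappa^2\ell_{ij}=\tfrac12\varkappa^2\ell_{ij}$, giving $\notL_{ij}\simeq\notl_{ij}-\tfrac12\varkappa^2\ell_{ij}$ as stated. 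Second, for \eqref{CCCB7} the vanishing of $\ell^{ik}\notd_{ijk}$ follows cleanly from the trace identity $\ell^{ik}y_{ijk}=\notD_j\notl-\notD^k\notl_{kj}=\tfrac14\notD_j r-\tfrac14\notD_j r=0$ via the contracted Bianchi identity for $\ell_{ij}$; this and the divergence-freeness $\notD^k y_{kij}=0$ are the two standard Cotton-tensor facts you correctly flag as the only nontrivial inputs.
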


A proof of this result can be found in \cite{CFEBook}, Section
11.4.4. We will also require the following partial converse the
previous result:


\begin{proposition}
Assume one has a timelike hypersurface $\mathcal{T}$ of a spacetime
$(\mathcal{M},g_{ab})$ such that conditions
\eqref{SolutionConstraints1}-\eqref{SolutionConstraints3} hold. If, in
addition, $\Omega=0$ on some fiduciary spacelike hypersurface
$\mathcal{C}_\star$ of $\mathcal{T}$ then one has that
\[
\Omega =0 \quad \mbox{on} \quad \mathcal{T}.
\]
\end{proposition}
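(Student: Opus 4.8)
The plan is to show that $\Omega$ satisfies a homogeneous linear wave equation on $\mathcal{T}$ with trivial Cauchy data on $\mathcal{C}_\star$, so that uniqueness for the wave equation forces $\Omega\equiv 0$. The starting point is the hypersurface equation \eqref{ConformalConstraint1} together with \eqref{ConformalConstraint2}. Using the assumed relations \eqref{SolutionConstraints1}--\eqref{SolutionConstraints3} on $\mathcal{T}$ (with $\epsilon=1$), namely $\notK_{ij}=\varkappa\ell_{ij}$, $\notL_i=-\notD_i\varkappa$ and $\notL_{ij}=\notl_{ij}-\tfrac{1}{2}\varkappa^2\ell_{ij}$, one substitutes these into \eqref{ConformalConstraint1}--\eqref{ConformalConstraint2} to obtain a closed system for the pair $(\Omega,\Sigma)$ on $\mathcal{T}$. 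Here $\Sigma$ is the pull-back of $n^a\nabla_a\Xi$ and $D_i$ is the intrinsic connection of $\ell_{ij}$; note that on $\mathcal{T}$ these are the constraint equations rather than their $\mathscr{I}$-restrictions, so $\Omega$ and $\Sigma$ are genuine unknowns rather than being set to their boundary values.

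\textbf{Key steps.} First, I would take the trace of \eqref{ConformalConstraint1} over $\mathcal{T}$. With $h_{ij}=\ell_{ij}$ three-dimensional and $\epsilon=1$, this gives $D^iD_i\Omega = -\Sigma K - 3 s + (\text{terms in } L_{ij})$; using $K=3\varkappa$ and \eqref{SolutionConstraints3} one expresses the right-hand side in terms of $\Omega$, $\Sigma$, $\varkappa$ and the curvature of $\ell_{ij}$. Second, I would differentiate \eqref{ConformalConstraint2}, $D_i\Sigma = K_i{}^kD_k\Omega - \Omega L_i = \varkappa D_i\Omega + \Omega\, D_i\varkappa = D_i(\varkappa\Omega)$, which integrates immediately to $\Sigma = \varkappa\Omega + c$ for some constant $c$ on each connected component; evaluating on $\mathcal{C}_\star$ where $\Omega=0$ fixes the additive constant in terms of the value of $\Sigma$ there (one needs to check, via \eqref{ConformalConstraint8} or the boundary value \eqref{CCCB2}, that $\Sigma$ is also controlled on $\mathcal{C}_\star$, or else carry $c$ along as a source). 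Third, substituting $\Sigma = \varkappa\Omega + c$ back into the traced equation from step one yields a second-order equation of the schematic form $D^iD_i\Omega = P\,\Omega + Q$, where $P$ is built from $\varkappa$ and the intrinsic geometry and $Q$ is proportional to $c$. If one can arrange $c=0$ (or absorb it), this is a homogeneous linear wave equation on the Lorentzian manifold $(\mathcal{T},\ell_{ij})$; with $\Omega=0$ and $D_n\Omega=0$ on the spacelike slice $\mathcal{C}_\star$ --- the latter following from \eqref{ConformalConstraint2} restricted to $\mathcal{C}_\star$ since $\Omega$ vanishes there --- uniqueness of solutions to linear wave equations gives $\Omega\equiv 0$ on the domain of dependence, and a global/connectedness argument extends this to all of $\mathcal{T}$.

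\textbf{Main obstacle.} The delicate point is controlling the ``constant of integration'' $c$ in $\Sigma = \varkappa\Omega + c$ and, more generally, ensuring that the system for $(\Omega,\Sigma)$ genuinely closes into a homogeneous problem rather than an inhomogeneous one. One expects $c=0$ because on $\mathcal{C}_\star$ the conformal boundary condition forces $\Sigma$ to take the value $\sqrt{|\lambda|/3}$ only if one is \emph{on} $\mathscr{I}$ --- but $\mathcal{C}_\star$ is a fiduciary slice of $\mathcal{T}$, not of $\mathscr{I}$, so some care is needed about exactly which relations are assumed there. The cleanest route is probably to use \eqref{ConformalConstraint2} directly to get $D_i\Sigma = D_i(\varkappa\Omega)$ on $\mathcal{T}$, conclude $\Sigma - \varkappa\Omega$ is locally constant, and then invoke the hypothesis that $\Omega=0$ on $\mathcal{C}_\star$ \emph{together with} whatever normalisation of $\Sigma$ is implicit in the set-up (e.g.\ $\Sigma$ also satisfying its own constraint on $\mathcal{C}_\star$) to pin down that constant to zero; then the homogeneous wave equation and standard uniqueness finish the argument. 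A secondary technical point is verifying that no second-order derivatives of $\Omega$ beyond the intrinsic Laplacian enter --- i.e.\ that the normal derivative structure is fully captured by \eqref{ConformalConstraint1}--\eqref{ConformalConstraint2} as constraints on $\mathcal{T}$ --- but this is exactly what those equations provide.
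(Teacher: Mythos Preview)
Your overall strategy --- derive a homogeneous linear wave equation for $\Omega$ on $(\mathcal{T},\ell_{ij})$, supply trivial Cauchy data on $\mathcal{C}_\star$, and invoke uniqueness --- is exactly the paper's strategy. But there are two genuine gaps in the execution, and one case you do not treat.

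\textbf{The homogeneous equation.} Your detour through \eqref{ConformalConstraint2} and the integration $\Sigma=\varkappa\Omega+c$ is the source of your ``main obstacle'', and it is unnecessary. In the traced version of \eqref{ConformalConstraint1} you still have the Friedrich scalar $s$ on the right-hand side (your schematic formula suppresses it), so even after eliminating $\Sigma$ you do not have a closed equation in $\Omega$ alone. The paper's observation is that, using $s=\Sigma\varkappa$ (relation \eqref{SolutionConstraints0}, which belongs with \eqref{SolutionConstraints1}--\eqref{SolutionConstraints3} in the hypothesis), the \emph{untraced} constraint \eqref{ConformalConstraint1} with $K_{ij}=\varkappa\ell_{ij}$ already reads
\[
\notD_i\notD_j\Omega \;=\; (s-\Sigma\varkappa)\ell_{ij}\;-\;\Omega\,\notL_{ij}\;=\;-\,\Omega\!\left(\notl_{ij}-\tfrac{1}{2}\varkappa^2\ell_{ij}\right),
\]
so $\Sigma$ drops out entirely and the Hessian equation is homogeneous in $\Omega$ from the start. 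Tracing gives $\square_\ell\Omega=-\tfrac{1}{4}(\not\!r-6\varkappa^2)\Omega$ without any constant of integration appearing. (If you insist on your route, the same relation $s=\Sigma\varkappa$ makes the $c$-terms cancel in your traced equation --- but then you have essentially reproduced the paper's argument by a longer path.)

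\textbf{The initial velocity.} Your claim that $D_n\Omega=0$ on $\mathcal{C}_\star$ follows from \eqref{ConformalConstraint2} is not correct: restricting \eqref{ConformalConstraint2} to $\mathcal{C}_\star$ with $\Omega=0$ gives only $\notD_i\Sigma=\varkappa\,\notD_i\Omega$, which relates the two gradients but does not force $\notD_i\Omega=0$. The paper instead derives a second relation, $\varkappa\,\notD_i\Omega=\Omega\,\notD_i\varkappa$, on all of $\mathcal{T}$; at $\mathcal{C}_\star$ (where $\Omega=0$) this does give $\notD_i\Omega=0$, \emph{provided $\varkappa\neq 0$}.

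\textbf{The degenerate case.} You do not address $\varkappa=0$, where the argument for vanishing initial gradient breaks down. The paper treats this separately by a conformal rescaling $\Xi\mapsto\vartheta\Xi$ that moves one to a representation with $\varkappa'\neq 0$ without changing the value of $\Xi$ on $\mathcal{T}$, and then appeals to the case already proved.
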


\begin{proof}
Assume first that $\varkappa\neq 0$ on $\mathcal{C}_\star$. The
substitution of expressions
\eqref{SolutionConstraints1}-\eqref{SolutionConstraints3} into the
general conformal constraint equations \eqref{ConformalConstraint1}
and \eqref{ConformalConstraint3} yields the relations
\begin{subequations}
\begin{eqnarray}
&& \notD_i \notD_j \Omega = -\Omega\big( \notl_{ij} - \frac{1}{2}\varkappa^2
   \ell_{ij}\big), \label{ForcedConstraint1}\\
&& \varkappa \notD_i \Omega =\Omega \notD_i \varkappa. \label{ForcedConstraint2}
\end{eqnarray}
\end{subequations} 
Taking the trace of equation \eqref{ForcedConstraint1} one obtains the 
wave equation
\begin{equation}
\square_{\ell} \Omega = -\frac{1}{4} \big( \not\!r-6 \varkappa^2  \big)
\Omega
\label{WaveEquationOmegaScri}
\end{equation}
on $\mathcal{T}$, where $\square_{\ell}\equiv \ell^{ij} \notD_i
\notD_j$. We now consider the condition $\Omega=0$ on $\mathcal{C}_\star$
as initial data for equation \eqref{WaveEquationOmegaScri}. We
complement this initial condition with $\notD_i \Omega=0$ on
$\mathcal{C}_\star$ which follows from condition
\eqref{ForcedConstraint2}. It follows from the homogeneity of equation
\eqref{WaveEquationOmegaScri} and the uniqueness of
solutions to wave equations of this form that $\Omega=0$ on
$\mathcal{T}$. 

\smallskip
To deal with the case $\varkappa=0$ we observe that it is always
possible to carry out a rescaling $\Xi \mapsto \Xi' \equiv \vartheta
\Xi$ of the spacetime conformal factor $\Xi$ with $\vartheta\simeq 1$
and $\mathbf{d}\vartheta \neq 0$ such that if $s\not \simeq 0$ on
$\mathscr{I}$ then $s'\simeq 0$ ---see \cite{CFEBook} Section 11.4.4,
page 268. Thus, if $\varkappa\not \simeq 0$ initially, then using the
above rescaling and taking into account relation
\eqref{SolutionConstraints0} for $s'$, it follows that $\varkappa'
\simeq 0$. The rescaling $\Xi \mapsto \Xi' \equiv \vartheta \Xi$ does
not change the value of $\Xi$ on $\mathcal{T}$ ---accordingly one has
that $\Omega=0$ on $\mathcal{T}$ even if $\varkappa=0$.
\end{proof}

\subsection{Solutions to the conformal constraints on a spacelike hypersurface}
In addition to analysing the conformal constraint equations on a timelike
hypersurface corresponding to the conformal boundary of the spacetime, we will
also need to consider solutions to the constraints
\eqref{ConformalConstraint1}-\eqref{ConformalConstraint10} on spacelike
hypersurfaces. These solutions provide part of the initial data for the wave
equations \eqref{ReducedWaveCFE1}-\eqref{ReducedWaveCFE4}.

\medskip

The conformal constraint equations
\eqref{ConformalConstraint1}-\eqref{ConformalConstraint10} with $\epsilon=-1$
can be combined to obtain the \emph{conformal Hamiltonian} and \emph{momentum
constraints}
\begin{subequations}
\begin{eqnarray}
&& \lambda = -\frac{1}{2}\Omega^2 K_{ij} K^{ij} +  \frac{1}{2} \Omega^2 K^2 + \frac{1}{2} \Omega^2 r - 2  \Omega K \Sigma + 
3  \Sigma^2 - 3 D_{i}\Omega D^{i}\Omega + 2 \Omega
   D_{i}D^{i}\Omega,\\
&& \Omega D^jK_{ij} -  \Omega D_{i}K = 2 K_{ij} D^j\Omega- 2 D_{i}\Sigma.
\end{eqnarray}
\end{subequations}
For a solution of the above equations it will be understood a
collection of fields $(\Omega,h_{ij},K_{ij},\Sigma)$ satisfying them. The
collection $(\Omega,h_{ij},K_{ij},\Sigma)$ constitutes the \emph{basic data}
from which the rest of the initial data set for the conformal wave equations
\eqref{ReducedWaveCFE1}-\eqref{ReducedWaveCFE5} can be
computed. Indeed, a calculation shows that:
\begin{subequations}
\begin{eqnarray}
&& s =\frac{1}{3} \bigg( \Delta \Omega + \frac{1}{4}\Omega\big(r -
   K_{ij}K^{ij} +K^2  \big)-\Sigma K \bigg), \label{InitialData1}\\
&& L_{ij} = \frac{1}{\Omega}\bigg( -D_iD_j \Omega + \Sigma K_{ij} + s
   h_{ij} \bigg), \label{InitialData2}\\
&& L_i = \frac{1}{\Omega}\big( K_i{}^k D_k\Omega - D_i \Sigma  \big), \label{InitialData3}
  \\
&& d_{ij} = \frac{1}{\Omega}\bigg( -L_{ij} +l_{ij} +\big( K \big(
 K_{ij} -\displaystyle\frac{1}{4} K h_{ij}\big) - K_{ki}
  K_j{}^k + \displaystyle\frac{1}{4}  K_{kl} K^{kl}h_{\bmi\bmj}\big)
   \bigg), \label{InitialData4}\\
&& d_{ijk} = \frac{1}{\Omega}\big( D_j K_{ki}- D_k K_{ji} + h_{ik}L_j
   - h_{ij} L_k  \big). \label{InitialData5}
\end{eqnarray}
\end{subequations}
Observe that the above expressions are formally singular at the points
where $\Omega=0$. This observation leads to the following:

\begin{definition}[\textbf{anti-de Sitter-like initial data}]
\label{Definition:AdSData}
For an anti-de Sitter initial data set it is understood a 3-manifold $\mathcal{S}_\star$
with boundary $\partial \mathcal{S}_\star \approx \mathbb{S}^2$ together with
a collection of smooth fields $(\Omega,h_{ij},K_{ij},\Sigma)$ such that:
\begin{itemize}
\item[(i)] $\Omega>0$ on $\mbox{\em int}\, \mathcal{S}_\star$;
\item[(ii)] $\Omega=0$ and $|\mbox{\em
    d}\Omega|^2=\Sigma^2-\tfrac{1}{3}\lambda>0$ on
  $\partial\mathcal{S}_\star$; 
\item[(iii)] the fields $s$, $L_{ij}$, $L_i$, $d_{ij}$ and $d_{ijk}$
  computed from relations \eqref{InitialData1}-\eqref{InitialData5} extend smoothly to
  $\partial \mathcal{S}_\star$.
\end{itemize}
\end{definition}

\begin{remark}
{\em Anti-de Sitter-like initial data sets are closely related to so-called
hyperboloidal data sets for Minkowski-like spacetimes ---see
\cite{Kan96a}. By means of this correspondence it is possible to adapt
the existence results for hyperboloidal initial data sets in
\cite{AndChrFri92,AndChr94} to the anti-de Sitter-like setting. In
particular, this shows the existence of a large class of time
symmetric data ---i.e. data for which $K_{ij}=0$.}
\end{remark}

\begin{remark}
\label{VanishingZQInitialData}
{\em The fields given by equations \eqref{InitialData1}--\eqref{InitialData5}
represent part of the initial data required to evolve the system of wave
equations \eqref{ReducedWaveCFE1}--\eqref{ReducedWaveCFE5}. A calculation shows
that the remaining component, $n^a n^b L_{ab}$, can be computed directly from
$L_{ij}$ and the gauge function $\mathcal{R}(x)$. On the other hand, the normal
derivatives of the fields $s, \ L_{ab}$ and $d^a{}_{bcd}$ on
$\mathcal{S}_\star$ can be computed via the system \eqref{CFE1}--\eqref{CFE4}
along with the contracted Bianchi identity. Furthermore, notice that this construction
guarantees that the zero--quantities trivially vanish on $\mathcal{S}_\star$}.
\end{remark}

\section{General set-up}
\label{Section:GeneralSetUp}

In this section we discuss in detail the gauge fixing and the boundary
data prescription for an initial-boundary problem for the conformal
Einstein field equations which, in turn, gives rise to anti-de Sitter-like spacetimes.

\medskip
In what follows, let $(\mathcal{M},g_{ab},\Xi)$ denote  a conformal
extension of an anti-de Sitter-like spacetime
$(\tilde{\mathcal{M}},\tilde{g}_{ab})$ with $g_{ab}=\Xi^2 \tilde{g}_{ab}$. It
will be assumed that the spacetime is causal (i.e. it contains no
closed timelike curves) and that it contains a smooth, oriented and
compact spacelike hypersurface $\mathcal{S}_\star$ with boundary $\partial
 \mathcal{S}_\star$ which intersects the conformal boundary $\mathscr{I}$ in
 such a way that $\mathcal{S}_\star \cap \mathscr{I}= \partial
 \mathcal{S}_\star$. It is convenient to define $\tilde{\mathcal{S}_\star} \equiv
 \mathcal{S}_\star\setminus \partial \mathcal{S}_\star$. The portion of
$\mathscr{I}$ in the future of $\mathcal{S}_\star$ will be denoted by
 $\mathscr{I}^+$. Furthermore, it will be assumed that the causal
 future $J^+(\mathcal{S}_\star)$ coincides with the future domain of
dependence $D^+(\mathcal{S}_\star\cup \mathscr{I}^+)$ and that $\mathcal{S}_\star
 \cup \mathscr{I}^+\approx [0,1)\times \mathcal{S}_\star$ so that, in
  particular, $\mathscr{I}^+ \approx [0,1)\times \partial \mathcal{S}_\star$.

\begin{figure}
\begin{center}
\includegraphics[scale=1]{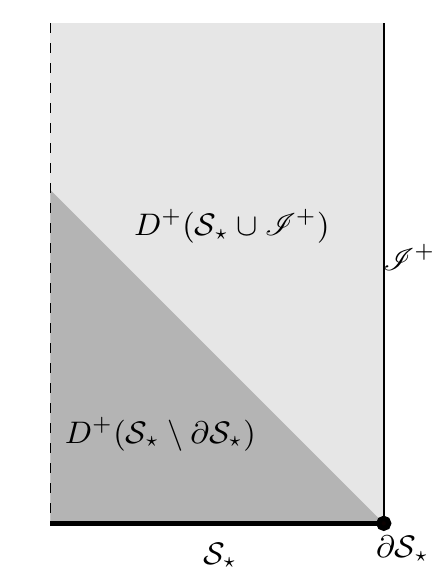}
\end{center}
\caption[Diagram of the construction of anti-de Sitter-like spacetimes.]
      {Penrose diagram of the set-up for the construction of anti-de
        Sitter-like spacetimes as described in the main text. Initial
        data prescribed on
        $\mathcal{S}_\star\setminus\partial\mathcal{S}_\star$ allows to
      recover the dark shaded region
      $D^+(\mathcal{S}_\star\setminus\partial\mathcal{S}_\star)$. In
      order to recover $D^+(\mathcal{S}_\star \cup \mathscr{I}^+)$ it
      is necessary to prescribe boundary data on
      $\mathscr{I}^+$. Notice that $D^+(\mathcal{S}_\star \cup \mathscr{I}^+)=J^+(\mathcal{S}_\star)$.}
    \label{Figure:AdSDomains}
\end{figure}

\subsection{Coordinates}
Close to the conformal boundary $\mathscr{I}$ we will make use of adapted
coordinates $x=(x^\mu)$ such that in terms of these coordinates
\[
\mathscr{I} =\{ x\in \mathbb{R}^3 \; | \;  x^1=0   \}.
\]
The coordinate $x^0$ is chosen so that the initial hypersurface
$\mathcal{S}_\star$ corresponds to the condition $x^0=0$. Accordingly,
the corner $\partial \mathcal{S}_\star$ is described by the conditions
$x^0=0$ and $x^1=0$. 

The coordinates $x=(x^\mu)$ are propagated off the initial
hypersurface $\mathcal{S}_\star$ through the generalised wave
coordinate condition
\begin{equation}
\square x^\mu = -\mathcal{F}^\mu(x).
\label{GeneralisedWaveCoordinateConditionALT}
\end{equation}
The value of the coordinates on $\mathcal{S}_\star$ provides the
initial data for the equation
\eqref{GeneralisedWaveCoordinateConditionALT}. The initial value of
the normal derivatives to $\mathcal{S}_\star$ is obtained from the
requirement that $(x^\mu)$ are independent ---that is, the coordinate
differentials $\mathbf{d}x^\mu$ must be linearly independent. 


\subsection{Boundary conditions for the conformal evolution equations}
\label{Section:BoundaryConditions}
In this subsection we discuss the boundary conditions to be imposed on
the various conformal fields. In \cite{Fri95} it has been shown that
it is possible to formulate an initial boundary-initial value problem
for anti-de Sitter-like spacetimes in which the conformal class of the
metric on the conformal boundary is specified freely. In the following, 
we investigate whether it is possible to make a similar prescription
in our scheme. More precisely, we would like to specify \emph{Dirichlet boundary data} for
the wave equations \eqref{ReducedWaveCFE1}-\eqref{ReducedWaveCFE5}
---that is, one would like to specify the values of the scalar fields
$\Xi$, $s$ and the components of the tensors $g_{\mu\nu}$,
$\Phi_{\mu\nu}$ and $d_{\mu\nu\lambda\rho}$ on $\mathscr{I}$.

\subsubsection{Boundary data for the conformal factor}
The evolution of the conformal factor $\Xi$ is described by the
wave equation \eqref{ReducedWaveCFE1}. For this equation
one naturally prescribes Dirichlet boundary conditions
such that
\[
\Xi \simeq 0.
\]
In other words, one has that $\Xi = O(x^1)$ close to $\mathscr{I}$. On
$\mathcal{S}_\star$ one wants to identify $\Xi$ with some
3-dimensional conformal factor $\Omega$ such that $\Omega=0$,
$\mathbf{d}\Omega\neq 0$ at $\partial\mathcal{S}_\star$, consistent
with Definition \ref{Definition:AdSData}.

\subsubsection{The Friedrich scalar}
The evolution of the Friedrich scalar $s$ is governed by the wave equation
\eqref{ReducedWaveCFE2}. In the context of the conformal constraint equations
on the conformal boundary, the Friedrich scalar $s$ is a gauge dependent
quantity which contains information about the manner the conformal boundary
embeds in the spacetime. Following Proposition
\ref{Proposition:ConformalConstraintsConformalBoundary} we set
\begin{equation}
s\simeq \varkappa(x) \notSigma, \qquad \notSigma =
\sqrt{\frac{|\lambda|}{3}}, \qquad \notK_{ij} \simeq \varkappa(x) \ell_{ij},
\label{BoundaryPrescription:SigmaFriedrich}
\end{equation}
where $\varkappa(x)$ is an arbitrary scalar field. This specification
of $s$ is independent of the choice of the gauge source function
$\mathcal{R}(x)$ associated to the Ricci scalar ---see the discussion
in Remark \ref{Remark:ConformalGaugeSourceFunction}. In particular, it
is possible, say, to have two related conformal representations of the
same physical solution with
the same spacetime Ricci scalar, one with a conformal boundary which is extrinsically curved and
the other extrinsically flat. 


\begin{remark}
{\em Observe that the particular choice $\varkappa(x)=0$ renders a
  conformal boundary which is \emph{extrinsically flat} with respect
  to the ambient spacetime ---see equation \eqref{SolutionConstraints1}. }
\end{remark}



\subsubsection{Boundary data for the components
  of the conformal
  metric}
  

\medskip
In the following it is convenient to make use of the 3+1 decomposition
of the metric $g_{ab}$ with respect to the unit normal to the conformal
boundary ---namely
\[
\bmg = \notalpha^2 \mathbf{d}x^1 \otimes \mathbf{d}x^1 + \ell_{\gamma\delta}
\big( \notbeta^\gamma \mathbf{d}x^1 + \mathbf{d}x^\gamma\big)\otimes
\big( \notbeta^\delta \mathbf{d}x^1 + \mathbf{d}x^\delta\big),
\qquad \gamma,\, \delta = 0,\,2,\,3.
\]
In particular, $(\ell_{\gamma\delta})$ denote the components of the intrinsic
metric $\ell_{ij}$ of the conformal boundary and $\notalpha$ and
$\notbeta^\gamma$ are, respectively, the lapse and shift. As $\mathscr{I}$ is
timelike, then $\ell_{ij}$ is a 3-dimensional Lorentzian metric of signature
$(-++)$. Accordingly, the components $(g_{\mu\nu})$ are given by
\begin{equation}
(g_{\mu\nu}) =
\left(
\begin{array}{cc}
\notalpha^2 +\notbeta_\gamma\notbeta^\gamma & \notbeta_\gamma\\
\notbeta_\delta & \ell_{\gamma\delta}
\end{array}
\right),
\label{DecompositionMetric:ConformalBoundary}
\end{equation}
so that for the components of the contravariant metric one has
\[
(g^{\mu\nu}) =
\left(
\begin{array}{cc}
\notalpha^{-2} & -\notalpha^{-2}\notbeta^\gamma\\
-\notalpha^{-2}\notbeta^\delta & \ell^{\gamma\delta}+\notalpha^{-2}\notbeta^\gamma\notbeta^\delta
\end{array}
\right).
\]

\begin{remark}
{\em In the following we regard the components
$(\ell_{\alpha\beta})$ as our basic boundary data.}
\end{remark}

Without loss of generality, we adopt a \emph{Gaussian gauge}
at the conformal boundary so that 
\begin{equation}
\notalpha\simeq 1, \qquad \notbeta^\gamma \simeq 0,
\label{MetricGaugeConditionConformalBoundary}
\end{equation}
and the metric $g_{ab}$ takes the form
\[
\bmg \simeq \mathbf{d}x^1 \otimes \mathbf{d}x^1 + \ell_{\alpha\beta}
\mathbf{d}x^\alpha \otimes \mathbf{d}x^\beta.
\]

\begin{remark}
{\em The prescription of the gauge conditions at the conformal boundary 
  \eqref{MetricGaugeConditionConformalBoundary} is independent of the
  generalised harmonic condition
  \eqref{GeneralisedWaveCoordinateConditionALT} and, thus, consistent
  with each other. Indeed, a calculation shows that for a metric in
  the form given by \eqref{DecompositionMetric:ConformalBoundary} one has that 
\begin{subequations}
\begin{eqnarray}
&& \hspace{-1cm} \Gamma^1 = \frac{1}{\notalpha^3}\big(\partial_1 \notalpha
   -\notbeta^\gamma \partial_\gamma \notalpha +\notalpha^2 \notK\big), \label{GeneralisedWaveCoordinatesDecomposed1}\\
&& \hspace{-1cm} \Gamma^\delta = \gamma^\delta -
   \frac{\notbeta^\delta}{\notalpha^3}\big( \partial_1 \notalpha
   -\notbeta^\gamma \partial_\gamma \notalpha + \notalpha^2 \notK \big) +
   \frac{1}{\notalpha^2}\big( \partial_1 \notbeta^\delta
   -\notbeta^\gamma \partial_\gamma \notbeta^\delta + \notalpha \partial^\delta
   \notalpha\big),
\label{GeneralisedWaveCoordinatesDecomposed2}
\end{eqnarray}
\end{subequations}
and $\gamma^\delta \equiv
\ell^{\eta\theta}\gamma^\delta{}_{\eta\theta}$ denote the
3-dimensional contracted Christoffel symbols. Thus, the  generalised harmonic condition
  \eqref{GeneralisedWaveCoordinateConditionALT} only prescribes the
  propagation of the gauge fields $\notalpha$ and $\notbeta^\gamma$ off the
  conformal boundary and do not constraint the components of the
  3-metric $\ell_{ij}$. Observe that $\notalpha$ and $\notbeta^\gamma$
  depend on the choice of $\varkappa(x)$ as $\notK=3\varkappa(x)$ as a
  consequence of equation
  \eqref{BoundaryPrescription:SigmaFriedrich}. 
}
\end{remark}

\subsubsection{Boundary data for the components
  of the Schouten tensor}

Given the 3-metric $\ell_{ij}$ of the conformal boundary, one can
compute the tangential components $(\notL_{\alpha\beta})$ and
tangential-normal components $(\notL_\alpha)$ of the spacetime
Schouten tensor at the conformal boundary using formulae
 \eqref{SolutionConstraints2} and \eqref{SolutionConstraints3}. One has then that
\begin{equation}
\notL_{\alpha}\simeq -\notD_\alpha \varkappa(x), \qquad \notL_{\alpha\beta}\simeq
\notl_{\alpha\beta} - \frac{1}{2}\varkappa^2(x) \ell_{\alpha\beta},
\label{BoundaryPrescription:Schouten}
\end{equation}
where $\varkappa(x)$ is the arbitrary scalar field determining the
extrinsic curvature of the conformal boundary according to equation
\eqref{SolutionConstraints1} and $\notl_{\alpha\beta}$ denotes the
components of the Schouten tensor $\notl_{ij}$ of the metric
$\ell_{ij}$. To compute the normal-normal component $\notL_{11}$ we notice
that
\[
 g^{\mu\nu}\notL_{\mu\nu}=\frac{1}{6}R.
\]
Thus, one has that 
\begin{eqnarray}
&& \notL_{11} \simeq \frac{1}{6}\mathcal{R}(x) - \ell^{\alpha\beta} \notl_{\alpha\beta} +
\frac{1}{2}\varkappa^2(x)
   \ell_{\alpha\beta}\ell^{\alpha\beta}\nonumber \\
&& \phantom{\notL_{11}}\simeq \frac{1}{6}\mathcal{R}(x) - \frac{1}{4}r
   +\frac{3}{2}\varkappa^2(x) \label{SchoutenNormalNormal}
\end{eqnarray}
where it is recalled that $\mathcal{R}(x)$ denotes the conformal gauge source
function introduced in Remark \ref{Remark:ConformalGaugeSourceFunction}. 

\subsubsection{The rescaled Weyl tensor}
The boundary data for the magnetic part of the rescaled Weyl tensor is
directly  computed from the metric $\ell_{ij}$ using the formula
\begin{equation}
\notd_{ijk}\simeq - \sqrt{\frac{3}{|\lambda|}}y_{ijk}, 
\label{BoundaryPrescription:MagneticWeyl}
\end{equation}
where $y_{ijk}$ denotes the Cotton tensor of $\ell_{ij}$---see
equation \eqref{SolutionConstraints4} in Proposition
\ref{Proposition:ConformalConstraintsConformalBoundary}. 

\medskip
The computation of the boundary data for the electric part requires
more work. From the discussion in Section
\ref{Section:ConformalConstraintsConformalBoundary} it follows that
the electric part of the rescaled Weyl tensor satisfies on $\mathscr{I}$ the equation
\begin{equation}
\notD^i \notd_{ij}\simeq 0.
\label{GaussConstraint}
\end{equation}
We now consider a $2+1$ decomposition of this equation on $\mathscr{I}$. To this end
let $\partial\mathcal{S}_t$, $t\in [0,\infty)$ with $\partial
\mathcal{S}_0=\partial\mathcal{S}_\star$ denote a foliation of the
conformal boundary and let $\nu_i$ denote the normal to this
foliation. The projector $s_i{}^j$ onto the leaves
$\partial\mathcal{S}_t$ is given by
\[
s_{ij} = \ell_{ij} + \nu_i \nu_j.
\]
The covariant derivative $\notD_i$ can be decomposed, in turn, as
\[
\notD_i = -\nu_i \delta + \delta_i
\]
where $\delta $ is the covariant directional derivative in the
direction of $\nu_i$ and $\delta_i$ is the Levi-Civita covariant
derivative associated to the 2-dimensional metric $s_{ij}$. The normal
$\nu_i$ induces the decomposition
\[
\notd_{ij} = w_{ij} -\nu_i w_j -\nu_jw_i + \nu_i \nu_j w,
\qquad w_{ij}=w_{(ij)}, 
\]
of the electric part of the rescaled Weyl tensor, where
\[
w_{ij} \equiv s_i{}^k s_j{}^l\notd_{kl}, \qquad w_i \equiv s_i{}^k
\nu^l \notd_{kl}, \qquad w\equiv \nu^i \nu^j \notd_{ij}.
\]
Using the above expressions, and observing that $w = w_i{}^i$, one
obtains the following decomposition of equation \eqref{GaussConstraint}:
\begin{subequations}
\begin{eqnarray}
&& \delta w  - \delta^i w_i = -\frac32 k w - k^{ij} w_{\{ij\}}, \label{GaussConstraintDecomposed1}\\
&& 2\delta w_i - \delta_iw = -2kw_i  - 2k_i{}^j w_j + 2\delta^j w_{\{ij\}}, \label{GaussConstraintDecomposed2}
\end{eqnarray}
\end{subequations}
where the 2-dimensional extrinsic curvature of the leaves of the foliation
$\partial\mathcal{S}_t$, $k_{ij}$, and the acceleration, $a_i$, are defined
via the relation
\[
\notD_i \nu_j = k_{ij} + \nu_i a_j, \qquad k\equiv s^{ij} k_{ij},
\]
and $ w_{\{ij\}} \equiv w_{ij}- \frac{1}{2}s_{ij} w$ is the
$s$-tracefree part of $w_{ij}$. 

\begin{remark}
{\em 
Expressing equations
\eqref{GaussConstraintDecomposed1}-\eqref{GaussConstraintDecomposed2}
in terms of coordinates $(x^{\mathcal{A}})=(t,x^{A})$ adapted to the
foliation $\partial \mathcal{S}_t$, one finds that the former imply a
first order symmetric hyperbolic system for $w$ and the two
non-trivial independent components $w_{\mathcal{A}}$ of $w_i$ provided
that the components $w_{\{AB\}}$ are known. Thus, the components 
$w_{\{AB\}}$ of the electric part of the rescaled Weyl tensor constitute an
independent piece of boundary data that supplements the prescription
of the Lorentzian 3-metric $\ell_{ij}$.}
\end{remark}

\begin{remark}
{\em The restriction to $\mathscr{I}$ of the generalised wave
coordinate conditions \eqref{GeneralisedWaveCoordinateConditionALT}
allows to specify, via the relation
\eqref{GeneralisedWaveCoordinatesDecomposed2}, a natural choice for the
lapse and shift (and thus a choice of the foliation of
$\partial\mathcal{S}_t$) for which equations
\eqref{GaussConstraintDecomposed1}-\eqref{GaussConstraintDecomposed2}
are to be solved. }
\end{remark}

The discussion of the previous paragraphs leads to the following:

\begin{lemma}
\label{Lemma:CompletingElectricPartConformalBoundary}
Let on $\mathscr{I}$ be given:
\begin{itemize}

\item[(i)] a smooth 3-dimensional Lorentzian metric $\ell_{ij}$; 

\item[(ii)] a prescription of coordinate gauge source functions
$\mathcal{F}^{\mu}(x)$ and the intrinsic gauge function
$\varkappa(x)$; 

\item[(iii)] a smooth symmetric tensor
$w_{\{ij\}}$ which is spatial with respect to the
foliation induced on $\mathscr{I}$ by  the functions
$\mathcal{F}^{\mu}(x)$ and tracefree with respect to the metric
induced on the leaves of the foliation;

\item[(iv)] a smooth choice of fields $w$ and $w_i$ on a fiduciary
  hypersurface $\partial\mathcal{S}_\star$ of $\mathscr{I}$.
\end{itemize}
Then, there exists a $t_\bullet>0$ such that on
$\mathscr{I}_{t_\bullet}\approx
[0,t_\bullet)\times \partial\mathcal{S}_\star $ there exists unique
fields $w$ and $w_i$ which together with the prescribed choice of
$w_{\{ij\}}$ satisfy the constraint \eqref{GaussConstraint}.
\end{lemma}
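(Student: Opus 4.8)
The plan is to read the decomposed constraints \eqref{GaussConstraintDecomposed1}--\eqref{GaussConstraintDecomposed2} as a first order evolution system along the foliation $\partial\mathcal{S}_t$ of $\mathscr{I}$, treating the prescribed tensor $w_{\{ij\}}$ as a source, and then to appeal to the standard existence and uniqueness theory for symmetric hyperbolic systems. Concretely, I would first fix on $\mathscr{I}$ coordinates $(x^{\mathcal{A}})=(t,x^{A})$ adapted to the foliation, so that the leaves $\partial\mathcal{S}_t$ are the level sets of $t$ and $\partial\mathcal{S}_0=\partial\mathcal{S}_\star$; this is the foliation --- together with its unit normal $\nu_i$, the projector $s_{ij}$, the two-dimensional extrinsic curvature $k_{ij}$, the acceleration $a_i$ and the mean curvature $k$ --- singled out by the restriction to $\mathscr{I}$ of the generalised wave coordinate condition \eqref{GeneralisedWaveCoordinateConditionALT} through the relation \eqref{GeneralisedWaveCoordinatesDecomposed2}, which involves only the data $\mathcal{F}^\mu(x)$ and $\varkappa(x)$ already at our disposal. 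In accordance with the general set-up the leaves are spacelike two-surfaces diffeomorphic to $\mathbb{S}^2$ and $\nu_i$ is timelike, so $t$ is a genuine time function.

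Next I would rewrite \eqref{GaussConstraintDecomposed1}--\eqref{GaussConstraintDecomposed2} in these coordinates, using that the directional derivative $\delta$ along $\nu_i$ reduces, at the level of the principal part, to $N^{-1}\partial_t$ up to spatial derivatives, where $N$ is the lapse of the foliation. Collecting the unknowns into $\mathbf{u}\equiv(w,w_A)$, one obtains a system of the form $A^{t}\partial_t\mathbf{u}+A^{A}\partial_A\mathbf{u}=B\mathbf{u}+F$, whose principal part is carried by the pair of expressions $\delta w-\delta^i w_i$ and $2\delta w_i-\delta_i w$ standing on the left-hand sides of \eqref{GaussConstraintDecomposed1}--\eqref{GaussConstraintDecomposed2}. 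This principal part is symmetrised by a simple diagonal symmetrizer built from the leaf metric $s_{ij}$ and a numerical factor (one rescales the $w_i$-equation by the appropriate constant and contracts it with $s^{ij}$), with respect to which $A^{t}$ is positive definite as long as $N>0$; exactly as for the Maxwell and Bianchi constraint-evolution systems, the result is symmetric hyperbolic. The matrix $B$ collects the zeroth order terms built from $k_{ij}$, $a_i$ and the intrinsic connection of $s_{ij}$, while the source $F$ is a smooth expression depending only on $\ell_{ij}$, $\varkappa(x)$, $\mathcal{F}^\mu(x)$ and the prescribed $w_{\{ij\}}$ --- the last entering through the terms $-k^{ij}w_{\{ij\}}$ and $2\delta^j w_{\{ij\}}$. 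In particular the system is linear in $\mathbf{u}$, and it is precisely the symmetric hyperbolic system anticipated in the Remark preceding the statement.

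Finally, since the leaves $\partial\mathcal{S}_t\approx\mathbb{S}^2$ are compact without boundary and the coefficients of the system are smooth on a neighbourhood of $\partial\mathcal{S}_\star$ in $\mathscr{I}$, the standard local existence and uniqueness theory for symmetric hyperbolic systems --- see e.g. \cite{CFEBook} --- furnishes a $t_\bullet>0$ and a unique smooth solution $\mathbf{u}=(w,w_A)$ on $\mathscr{I}_{t_\bullet}\approx[0,t_\bullet)\times\partial\mathcal{S}_\star$ attaining the prescribed values of $w$ and $w_i$ on $\partial\mathcal{S}_\star$. Reassembling $\notd_{ij}=w_{ij}-\nu_i w_j-\nu_j w_i+\nu_i\nu_j w$ with $w_{ij}\equiv w_{\{ij\}}+\tfrac{1}{2}s_{ij}w$ then yields a symmetric, $\ell$-tracefree tensor --- the trace relation $w=s^{ij}w_{ij}$ holds by construction of the decomposition --- which solves the full constraint \eqref{GaussConstraint}, because \eqref{GaussConstraintDecomposed1}--\eqref{GaussConstraintDecomposed2} are jointly equivalent to it. The step I expect to require the most care is bringing \eqref{GaussConstraintDecomposed1}--\eqref{GaussConstraintDecomposed2} into manifestly symmetric hyperbolic form --- tracking index positions and numerical factors so that the chosen symmetrizer genuinely symmetrises all the $A^{A}$ while keeping $A^{t}$ positive definite --- after which the existence and uniqueness statement is a direct application of the standard theory on the compact leaves $\mathbb{S}^2$.
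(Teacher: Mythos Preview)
Your proposal is correct and follows essentially the same approach as the paper: the paper's proof simply invokes the discussion preceding the lemma --- namely that equations \eqref{GaussConstraintDecomposed1}--\eqref{GaussConstraintDecomposed2}, written in coordinates adapted to the foliation $\partial\mathcal{S}_t$, constitute a first order symmetric hyperbolic system for $w$ and the independent components of $w_i$ once $w_{\{ij\}}$ is prescribed --- and then appeals to the standard local existence theory. Your write-up supplies more detail on the symmetrisation and the role of the compact leaves than the paper does, but the strategy is identical.
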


\begin{proof}
The proof of this result follows from the discussion in the previous
paragraphs and the theory of local existence of first order symmetric
hyperbolic systems.
\end{proof}

\begin{remark}
{\em The \emph{free data} $w_{\{ij\}}$ can be related to the notion of
incoming and outgoing radiation. In order make this evident, let
$(\bml, \bml', \bmm, \bm{\bar{m}})$ be a Newman--Penrose tetrad satisfying
the following normalisation relations in accordance with our conventions:

\[
l_a l'^a = -1, \quad m_a \bar{m}^a = 1,
\]

\noindent while all the remaining contractions vanish. The normal unit vectors
are expressed, respectively, as $n_a = \frac{1}{\sqrt{2}}(l_a + l'_a)$ and $\notn_a =
\frac{1}{\sqrt{2}}(l_a - l'_a)$. Using this, the different metrics take the
following form:
\[
g_{ab} = -2l_{(a} l'{}_{b)} + 2m_{(a}\bar{m}_{b)}, \quad
\ell_{ab} = 2m_{(a}\bar{m}_{b)} - l_{(a}l'{}_{b)} -\tfrac12(l_al_b + l'{}_al'{}_b),
\quad s_{ab} = 2m_{(a}\bar{m}_{b)}.
\]
Making use of this and observing that $\omega_{ab} = \notn^q \notn^s s_a{}^p
s_b{}^r d_{dpqrs}$, an expansion of the Weyl tensor in terms of the tetrad
defined above  leads, after a straightforward calculation,
to:
\[
\omega_{\{ab\}} = \frac12 \bigg( (\psi_0 + \psi^*_4)\bar{m}_a\bar{m}_b +
(\psi^*_0 + \psi_4)m_a m_b \bigg),
\]
where $\psi_0 \equiv d_{pqrs}l^pm^ql^rm^s$ and $\psi_4 \equiv
d_{pqrs}l'^p\bar{m}^q l'^r\bar{m}^s$ ---see e.g. \cite{Cha98}.  This shows that $\psi_0$ and
$\psi_4$ constitute part of the basic data one must provide on $\mathscr{I}$.}

\end{remark}

\subsubsection{Summary}
The analysis of this section can be summarised as follows:

\begin{proposition}
\label{Proposition:SummaryBoundaryData}
Let on $\mathscr{I}$ be given a smooth Lorentzian metric $\ell_{ij}$
and a smooth tensor field $w_{\{ij\}}$ as in Lemma
\ref{Lemma:CompletingElectricPartConformalBoundary}. Moreover, let the
fields 
\[
\notSigma, \qquad s, \qquad \notK_{ij}, \qquad \notL_i, \qquad
\notL_{ij}, \qquad \notd_{ijk}
\]
be constructed according to formulae
\eqref{BoundaryPrescription:SigmaFriedrich}, \eqref{BoundaryPrescription:Schouten}
and \eqref{BoundaryPrescription:MagneticWeyl}.  Finally, 
let $\Theta_a$, $\Upsilon_a$, $\Delta_{abc}$ and $\Lambda_{abc}$ be
the zero-quantities defined by relations
\eqref{SubsidiaryDefinition1}-\eqref{SubsidiaryDefinition4}. One has then that 
\begin{eqnarray*}
&\ell_b{}^a\Theta_a\simeq 0, &\\
& \ell_c{}^a \ell_d{}^b \Upsilon_{ab}\simeq 0, \quad
\notn^a \ell_c{}^b\Upsilon_{ab}\simeq0, & \\
& \ell_e{}^c \ell_f{}^d
\ell_g{}^b \Delta_{cdb}\simeq 0, \quad \notn^b \ell_e{}^c \ell_f{}^d \Delta_{cdb}\simeq 0,& \\
& \notn^b \ell_e{}^c \ell_f{}^d\Lambda_{bcd}\simeq 0, \quad
  \notn^b \notn^d \ell_e{}^c \Lambda_{bcd}\simeq 0, & 
\end{eqnarray*}
at least on $\mathscr{I}_{t_\bullet}\approx
[0,t_\bullet)\times \partial\mathcal{S}_\star $, 
where $\notn^a$ and $\ell_a{}^b$ denote, respectively, the normal and
projector of the conformal boundary $\mathscr{I}$. 
\end{proposition}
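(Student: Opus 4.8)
The strategy is to verify each of the listed projections of the zero-quantities directly on $\mathscr{I}$ by substituting the boundary prescriptions \eqref{BoundaryPrescription:SigmaFriedrich}, \eqref{BoundaryPrescription:Schouten}, \eqref{BoundaryPrescription:MagneticWeyl} into the defining expressions \eqref{SubsidiaryDefinition1}--\eqref{SubsidiaryDefinition4} and showing that the resulting intrinsic equations are exactly the conformal constraint equations on the conformal boundary, which are satisfied by construction via Proposition \ref{Proposition:ConformalConstraintsConformalBoundary}. The crucial point is that the various tangential and tangential-normal projections of $\Upsilon_{ab}$, $\Theta_a$, $\Delta_{cdb}$ and $\Lambda_{bcd}$, when $\Omega\simeq 0$, reduce to linear combinations of the left-hand minus right-hand sides of equations \eqref{CCCB1}--\eqref{CCCB10}; since the prescribed data solve those constraints, the projections vanish.

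First I would set up the $3+1$ split with respect to $\mathscr{I}$, writing $g_{ab} = \ell_{ab} + \notn_a\notn_b$ and recording how $\nabla_a$ decomposes into $\notD_i$ plus extrinsic-curvature terms involving $\notK_{ij}$. I would then treat the quantities in the order in which their constraint counterparts appear. For $\ell_c{}^a\ell_d{}^b\Upsilon_{ab}$: with $\Omega\simeq 0$ and $\notSigma$ as in \eqref{SolutionConstraints5}, the projection of $\nabla_a\nabla_b\Xi$ onto $\mathscr{I}$ becomes $-\notSigma\notK_{ij}$ (from the normal-derivative-of-$\Xi$ term), so $\ell_c{}^a\ell_d{}^b\Upsilon_{ab}\simeq -\notSigma\notK_{ij} + 0 - s\ell_{ij}$, which vanishes precisely because \eqref{CCCB1}/\eqref{SolutionConstraints1} hold. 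The mixed projection $\notn^a\ell_c{}^b\Upsilon_{ab}$ reduces to \eqref{CCCB2} ($\notD_i\notSigma\simeq 0$), which holds since $\notSigma$ is constant. For $\ell_b{}^a\Theta_a$ I would use \eqref{CCCB3}, which is exactly $\notD_i s \simeq -\notL_i\notSigma$, consistent with $s\simeq\varkappa\notSigma$ and $\notL_i\simeq-\notD_i\varkappa$. The $\Delta$-projections similarly collapse to \eqref{CCCB4} and \eqref{CCCB5} once $\Omega\simeq 0$, using the prescribed $\notL_{ij}$, $\notL_i$, $\notd_{ijk}$ and the relation \eqref{ConformalConstraint4}; here one must be careful that the definition \eqref{SubsidiaryDefinition3} of $\Delta_{cdb}$ carries a $\nabla_a\Xi\, d^a{}_{bcd}$ term which on $\mathscr{I}$ contributes $\notSigma\, \notn^a d_{a bcd}$-type pieces, and these must match the $-\notSigma\notd_{kij}$ term in \eqref{CCCB4}. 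Finally, the two $\Lambda$-projections reduce to the Bianchi-type constraints \eqref{CCCB6} and \eqref{CCCB7}; for $\notn^b\notn^d\ell_e{}^c\Lambda_{bcd}$ one needs to invoke Lemma \ref{Lemma:CompletingElectricPartConformalBoundary}, since it is exactly the content of \eqref{GaussConstraint}/\eqref{CCCB7} that is solved on $\mathscr{I}_{t_\bullet}$ by the construction of $w$, $w_i$ from $w_{\{ij\}}$ and the data on $\partial\mathcal{S}_\star$ --- this is why the conclusion only holds on $\mathscr{I}_{t_\bullet}$ rather than all of $\mathscr{I}^+$.

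The main obstacle I anticipate is organisational rather than conceptual: one must keep careful track of which second derivatives of $\Xi$, which components of $d_{abcd}$, and which terms in the $3+1$ decomposition of $\nabla_a$ survive after contracting with $\notn^a$ versus $\ell_a{}^b$, and then recognise each surviving combination as one of \eqref{CCCB1}--\eqref{CCCB10}. A subtle point is that $\Omega\simeq 0$ kills the $-\Omega L_{ij}$ type terms but \emph{not} their tangential derivatives appearing inside $\Delta_{cdb}$; however, since $\Omega = O(x^1)$ and the relevant projections only involve derivatives tangent to $\mathscr{I}$, these also vanish on the boundary. One should also confirm that the Gaussian gauge \eqref{MetricGaugeConditionConformalBoundary} is used only to simplify bookkeeping and that the stated vanishing is gauge-independent --- indeed the zero-quantities are tensorial, so the conclusion is coordinate-free once it holds in one admissible gauge. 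With these points handled, each of the eight displayed relations follows by matching to the corresponding line of \eqref{CCCB1}--\eqref{CCCB10}, all of which are satisfied by Proposition \ref{Proposition:ConformalConstraintsConformalBoundary} and Lemma \ref{Lemma:CompletingElectricPartConformalBoundary}.
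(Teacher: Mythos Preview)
Your proposal is correct and follows essentially the same route as the paper: the proposition is presented there as a summary of the preceding boundary-data construction, and the implicit argument (made explicit in the discussion of transport equations in Section~\ref{Section:PropagationOfTheConstraints}) is precisely that each listed projection of $\Upsilon_{ab}$, $\Theta_a$, $\Delta_{cdb}$, $\Lambda_{bcd}$ at $\Omega\simeq 0$ coincides with one of the conformal constraint equations \eqref{CCCB1}--\eqref{CCCB7}, which hold by virtue of Proposition~\ref{Proposition:ConformalConstraintsConformalBoundary} and Lemma~\ref{Lemma:CompletingElectricPartConformalBoundary}. Your identification of the role of Lemma~\ref{Lemma:CompletingElectricPartConformalBoundary} in the $\notn^b\notn^d\ell_e{}^c\Lambda_{bcd}$ case, and the resulting restriction to $\mathscr{I}_{t_\bullet}$, is exactly the point; note only that there are seven displayed relations, not eight.
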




\subsection{Corner conditions}
\label{Section:CornerConditions}

In the previous sections we have discussed the problem of the determination of
initial and boundary data. In particular, it is clear that once boundary data
have been provided on $\mathscr{I}$, time derivatives of the various conformal
fields can be directly calculated.  However, these data do not necessarily
match smoothly with the ones corresponding to $\mathcal{S}_\star$ at the
corner. The purpose of this section is to analyse the compatibility conditions,
at different orders, arising from the conformal Einstein field equations and
the wave equations ---these conditions are commonly known as \emph{corner
conditions}.  In the following, the subscript $_\odot$ will stand for a
quantity evaluated at $\partial\mathcal{S}_\star$. 

\subsubsection{Conditions for the metric}

In terms of the adapted coordinates previously introduced, the corner
$\partial\mathcal{S}_\star$ is defined by the conditions $x^0 = 0$ and  $x^1=0$. Exploiting the
gauge freedom, we adopt local Gaussian coordinates both on $\mathcal{S}_\star$ and
$\mathscr{I}$. Denoting as $h_{\gamma\delta}$ and
$\ell_{\mathcal{A}\mathcal{B}}$ the intrinsic 3--metrics corresponding to these
hypersurfaces, respectively, this condition implies that the spacetime metric at
$\partial\mathcal{S}_\star$ can be written in the two following ways:
\begin{eqnarray*}
&& \bmg = -\mathbf{d}x^0 \otimes \mathbf{d}x^0 + h_{\gamma\delta}
\mathbf{d}x^\gamma \otimes \mathbf{d}x^\delta, \qquad (\gamma, \delta = 1,2,3),\\
&& \bmg = \mathbf{d}x^1 \otimes \mathbf{d}x^1 + \ell_{\mathcal{A}\mathcal{B}}
\mathbf{d}x^\mathcal{A} \otimes \mathbf{d}x^\mathcal{B}, \qquad (\mathcal{A}, \mathcal{B} = 0,2,3).
\end{eqnarray*}
Hereafter, the previous convention for the indices will be used.  Additionally,
uppercase indices $A,\, B,\ldots$ will stand for the coordinates $x^2$ and
$x^3$ (which we will refer to as \emph{angular}) of the sections of
$\mathscr{I}$.

\medskip
\noindent
\textbf{Zero order conditions.} Comparing the two last expressions for the
metric, one readily finds that
\begin{equation}
(\ell_{00})_\odot = -1, \quad  (h_{11})_\odot = 1, \quad
(\ell_{AB})_\odot = (h_{AB})_\odot,
\label{zeroconds}
\end{equation}
while the remaining components vanish at $\partial \mathcal{S}_\star$.

\medskip
\noindent
\textbf{First order conditions.} In Gaussian coordinates, we can express the
normal derivatives of the metric in terms of the corresponding extrinsic
curvature. Explicitly, one has:
\begin{subequations}
\begin{eqnarray} 
&& K_{\gamma\delta}|_{_{\mathcal{S}_\star}} = \frac12\partial_0 h_{\gamma\delta}|_{_{\mathcal{S}_\star}}
= \Gamma^0{}_{\gamma\delta}|_{_{\mathcal{S}_\star}}, \label{curvature1}\\
&& \notK_{\mathcal{A}\mathcal{B}} \simeq \frac12\partial_1\ell_{\mathcal{A}\mathcal{B}} 
\simeq -\Gamma^1{}_{\mathcal{A}\mathcal{B}}.
\label{curvature2}
\end{eqnarray}
\end{subequations}
As $K_{\gamma\delta}$ is part of the initial data, this establishes a corner
condition for $\partial_0 h_{\gamma\delta}$; in particular, the angular
components must satisfy the condition $(\partial_0 h_{AB})_\odot = (\partial_0
\ell_{AB})_\odot$.

Recall that in Gaussian coordinates the propagation of the timelike vector
$(\partial_0)^a$ along itself implies that $\Gamma^\mu_{00}|_{_{\mathcal{S}_\star}}
=0$; similarly, for the normal to $\mathscr{I}$ one has that $\Gamma^\mu_{11}
\simeq 0$.  The previous conditions on the Christoffel symbols, along with
equations \eqref{curvature1} and \eqref{curvature2}, imply that $K_{11}$ and
$\notK_{00}$ vanish at the corner. Furthermore, the traces of the extrinsic
curvature can be related to the gauge functions $\mathcal{F}^\mu(x)$ as follows:
\[
K_\odot = (h^{AB}K_{AB})_\odot = \mathcal{F}^0(x)_\odot, \qquad
\notK_\odot = (\ell^{AB}\notK_{AB})_\odot = - \mathcal{F}^1(x)_\odot.
\]

\medskip

\noindent Finally, given that $\nabla$ is a Levi--Civita connection and the
acceleration is zero, our coordinate choice determines the remaining partial
derivatives: $(\partial_0 g_{0\mu})_\odot = -(\Gamma^0_{0\mu})_\odot = 0$.

\medskip
\noindent
\textbf{Second order conditions.} Second order conditions can be
extracted in a straightforward way from the wave equation for the 
metric, equation \eqref{ReducedWaveCFE5} ---namely
\begin{equation*}
g^{\lambda\rho}\partial_\lambda\partial_\rho g_{\mu\nu}= 2\bigg( g_{\lambda\rho}
  g^{\sigma\tau} \Gamma^\lambda{}_{\sigma\mu} \Gamma^\rho{}_{\tau\nu}
  + 2 \Gamma^\sigma{}_{\lambda\rho} g^{\lambda\tau} g_{\sigma(\mu} \Gamma^\rho{}_{\nu)\tau} 
- g_{\sigma(\mu}\nabla_{\nu)} \mathcal{F}^\sigma(x) - 2\Phi_{\mu\nu} - \frac14 g_{\mu\nu} \mathcal{R}(x)
\bigg).
\end{equation*}
Using the conditions discussed above for the first order derivatives, the wave
equation for the components $g_{\mu\nu}$ can be written schematically as:
\begin{eqnarray*}
&& (\partial^2_0 \ell_{\mu\nu})_\odot = (\partial^2_1 h_{\mu\nu})_\odot +
(h^{CD}\partial_C\partial_Dh_{\mu\nu})_\odot + f_{\mu\nu}(\bmg, \bmK, \bm{\notK}, 
\bm{\mathcal{F}}(x), \bm\Phi, \mathcal{R}(x))_\odot.
\end{eqnarray*}
Apart from the components of the Schouten tensor encoded into $\Phi_{\mu\nu}$ (to be
discussed below), the second order condition can be expressed in terms of the
initial data, lower order corner conditions and gauge functions at the corner. Further
application of $\partial_0$ enables to obtain higher order conditions.

\subsubsection{Conditions for the conformal factor}

As, by definition, $\Xi =0$ on the conformal boundary, then all its intrinsic
derivatives of any order will vanish. In particular,
$\partial\mathcal{S}_\star$ automatically inherits these conditions.  Regarding
the normal derivative, solution \eqref{SolutionConstraints5} gives its value on
$\mathscr{I}$. Accordingly, one has that
\[
(\notSigma)_\odot = \sqrt{-\frac{\lambda}{3}}.
\]
When smoothness is imposed, higher order partial derivatives both on
$\mathcal{S}_\star$ as well as on $\mathscr{I}$ are forced to coincide at
$\partial\mathcal{S}_\star$.

\subsubsection{Conditions for the Friedrich scalar}

\textbf{Zero order condition.} As discussed previously, the Friedrich scalar
$s$ is determined on the conformal boundary by the gauge function
$\varkappa(x)$. Nevertheless, when the $00$ component of equation \eqref{CFE1}
is evaluated at the corner, our choice of Gaussian coordinates imply that,
\[
s_\odot = 0.
\]

\noindent \textbf{First order conditions.} Equation \eqref{CCCB3} ---or
alternatively \eqref{CFE2}--- determines the intrinsic derivatives of $s$ on the
boundary. In particular, the time derivative takes the following form at the
corner:
\[
(\partial_0 s)_\odot = -\notSigma (L_{01})_\odot.
\]
This expression is equivalent to the one given in
\eqref{BoundaryPrescription:Schouten} for the tangential-normal components of
$L_{ab}$ on $\mathscr{I}$. 

\medskip

\noindent \textbf{Second order conditions.} The second order condition for $s$
can be extracted from the wave equation \eqref{ReducedWaveCFE2} expressed in
Gaussian coordinates. The evaluation of this equation at the corner yields:
\[
(\partial^2_0 s)_\odot = (\partial^2_1 s)_\odot + (h^{AB}\partial_A\partial_B s)_\odot - 
(\mathcal{F}^\mu(x)\partial_\mu s 
+ \frac16(s\mathcal{R}(x) + \notSigma\partial_1 \mathcal{R}(x))_\odot.
\]
\noindent Here, the spatial derivatives of $s$ can be computed from the
restriction of the initial data to $\partial\mathcal{S}_\star$ while $\partial_0 s$
corresponds to the first order condition. The functions $\mathcal{F}^\mu (x)$
and $\mathcal{R}(x)$ are gauge-dependent prescribed quantities. Furthermore, we
observe that $\partial^2_0 s$ is written in terms of the first order
derivatives, indicating then a recursive procedure to find higher order
conditions ---computed by further application of $\partial_0$ to equation
\eqref{ReducedWaveCFE2}.

\subsubsection{Conditions for the Schouten tensor}

Next, we will show how the constraint equations impose restrictions on
the components of $L_{ab}$, which along with the gauge quantity
$\mathcal{R}(x)$ determines the tracefree tensor $\Phi_{ab}$ on
$\mathscr{I}$.

\medskip

\noindent \textbf{Zero order corner conditions.} The value of components
$L_{\alpha\beta}$ and $L_{0\alpha}$ at the corner can be obtained from the
initial data \eqref{InitialData2} and \eqref{InitialData3} taking the limit
$\Omega \to 0$.  Imposing smoothness, they must match the boundary data given
by equations \eqref{BoundaryPrescription:Schouten} and
\eqref{SchoutenNormalNormal} at $\partial\mathcal{S}_\star$. The same is imposed for
component $L_{00}$. 


\medskip

\noindent \textbf{First order corner conditions.} 
First time derivatives of the components $L_{\alpha\beta}$ and $L_{0\alpha}$
can be obtained via equation \eqref{CFE3}. More explicitly one has:
\begin{eqnarray*}
&& (\partial_0 L_{\alpha\beta})_\odot = \notSigma (d^1{}_{\beta 0 \alpha})_\odot + 
f_{\alpha\beta}(\bmL, \bmh, \bmK, \bm{\notK})_\odot, \\
&& (\partial_0 L_{\alpha_0})_\odot = \notSigma (d^1{}_{00\alpha})_\odot +
f_{\alpha}(\bmL, \bmh, \bmK, \bm{\notK})_\odot.
\end{eqnarray*}
As it will be seen below, the components of the Weyl tensor appearing here, are
part of the data satisfying zero-order conditions, so they must be consistent
with the last equations. On the other hand, a condition for $(\partial_0
L_{00})_\odot$ can be obtained via the contracted Bianchi identity.

\medskip

\noindent \textbf{Second order corner conditions.} Second order time
derivatives of $L_{ab}$ are to be obtained by evaluating the wave equation
\eqref{ReducedWaveCFE3} at $\partial\mathcal{S}_\star$. For $L_{\alpha\beta}$ one
has:
\[
(\partial_0^2 L_{\alpha\beta})_\odot = (\partial^2_1 L_{\alpha\beta})_\odot 
+ (h^{CD}\partial_C\partial_D L_{\alpha\beta})_\odot 
+ f_{\alpha\beta}(\bmh, \bmL, \bmK, \bm{\notK}, \partial\mathcal{F}(x),
\mathcal{R}(x))_\odot.
\]
Similar expressions can be obtained for the rest of the components.

\subsubsection{Conditions for the Weyl tensor}

Information about the Weyl tensor is encoded in the electric and magnetic
parts. These are given on $\mathcal{S}_\star$ by equations \eqref{InitialData4} and
\eqref{InitialData5}, and has been discussed in section 4.2.5 for
$\mathscr{I}$.  As these data have been obtained using different projections,
their components must be carefully matched. One can check that they
share the components $d_{0101}, \ d_{010A}, \ d_{01A1}, \ d_{01AB}$ and
$d_{0A1B}$ so, when matched, they represent the zero-order conditions.

\medskip

\noindent \textbf{First order corner conditions.} Given the structure of
equation \eqref{CFE4}, only certain conditions can be extracted from it.
Ultimately, when it is evaluated at the corner it takes the form:

\[
(\partial_0 d^0{}_{\lambda\mu\nu})_\odot = f_{\lambda\mu\nu} (\bmK, \bm{\notK}, \bmd)_\odot.
\]

\medskip

\noindent \textbf{Second order corner conditions.} Second order time
derivatives of the rescaled Weyl tensor are given by the wave equation
\eqref{ReducedWaveCFE4}. As $\Xi$ vanishes at the corner, the equation is
significantly simplified. Expanding the reduced wave operator $\blacksquare$
it takes the schematic form

\[
(\partial^2_0d_{\lambda\mu\nu\sigma})_\odot = (\partial^2_1 d_{\lambda\mu\nu\sigma})_\odot
+ (\partial_A\partial_B d_{\mu\nu\lambda\sigma})_\odot + 
f_{\lambda\mu\nu\sigma}(\bmg, \bmK, \bm{\notK}, \bmd)_\odot.
\]

\subsubsection{Concluding remarks regarding the corner conditions}
The discussion in the previous paragraphs provides a recursive
procedure to compute the corner conditions to any required
order. Given this procedure, its natural to ask whether there exist
any examples of pairs of initial data and boundary conditions which
satisfy the corner conditions to \emph{any arbitrary order}. The
difficulties in implementing corner conditions to any arbitrary order
have been discussed in \cite{Fri14}. A way of satisfying corner
conditions to an arbitrary order is to make use of the gluing
constructions for asymptotically hyperbolic initial data sets in
\cite{ChrDel09}. Given an asymptotically hyperbolic initial data set
satisfying certain smallness conditions, these constructions allow to
deform the data by a deformation which is supported arbitrarily far in
the asymptotic region, to ones which are exactly Schwarzschild-anti de
Sitter in the asymptotic region. This class of data is naturally
supplemented by Schwarzschild-anti de
Sitter boundary initial data ---and thus it trivially satisfies the
corner conditions to any order. The resulting spacetime has, accordingly, a
very special behaviour near the corner. In particular, the metric
$\ell_{ij}$ must be conformally flat near the corner. It is of
interest to analyse whether it is possible to construct a more
general class of initial--boundary data for adS-like spacetimes
satisfying the corner conditions at any order.  

\section{Propagation of the constraints}
\label{Section:PropagationOfTheConstraints}

The purpose of this section is to analyse the propagation of the gauge
conditions and to discuss the relation of the evolution system
\eqref{ReducedWaveCFE1}-\eqref{ReducedWaveCFE5} to the
Einstein field equations.

\subsection{Boundary conditions for the subsidiary equations}
The purpose of this section is to show that the boundary conditions for the
conformal wave equations \eqref{ReducedWaveCFE1}-\eqref{ReducedWaveCFE5}
discussed in the previous section imply trivial (i.e. vanishing) Dirichlet
boundary conditions for the subsidiary wave equations
\eqref{SubsidiaryEquation1}-\eqref{SubsidiaryEquation4}. 

\subsubsection{Transport equations for the subsidiary fields}

Proposition \ref{Proposition:SummaryBoundaryData} shows that as a consequence
of our Dirichlet boundary data prescription, the components of the zero fields
$\Theta_a$, $\Upsilon_a$, $\Delta_{abc}$ and $\Lambda_{abc}$ which only involve
derivatives intrinsic to $\mathscr{I}$ vanish. In order to show that the
remaining components also vanish, it is necessary to construct suitable
transport equations for the zero-quantities on the conformal boundary.  As it
will be seen, the equations for $\Upsilon_{ab}$ and $\Theta_a$ can be
constructed in a straightforward manner, whereas $\Delta_{abc}$ and
$\Lambda_{abc}$ require a more detailed treatment. The integrability conditions
for the zero--quantities \eqref{IC1}--\eqref{IC4} ---see Appendix A.2--- will
prove to be key to obtain these equations. In what follows let $\tau^i$ denote
a timelike vector on $\mathscr{I}$ with pushforward to the spacetime
$(\mathcal{M},g_{ab})$ given by $\tau^a$ and let $\mathcal{P}\equiv
\tau^a\nabla_a$. Notice then that $\notn^a\tau_a = 0$

\medskip
\noindent
\textbf{Transport equations for $\Theta_a$ and $\Upsilon_{ab}$.}
First, consider the expression $2\tau^a \nabla_{[a}\Upsilon_{b]c}$. On
the one hand, a calculation shows that
\[
2\tau^a\nabla_{[a}\Upsilon_{b]c} = \mathcal{P}\Upsilon_{bc} + \Upsilon_{ac}\chi_b{}^a
-\nabla_b(\tau^a\Upsilon_{ac}) \simeq  \mathcal{P}\Upsilon_{bc} + \Upsilon_{ac}\rchi_b{}^a,
\]
where $\rchi_{ab} \equiv \nabla_a\tau_b$ and the second equality follows from
the fact that $\ell_a{}^c \ell_b{}^d \Upsilon_{cd} \simeq 0$ and $\ell_a{}^c
n^b \Upsilon_{bc} \simeq 0$, which are a consequence of the validity of
constraints \eqref{CCCB1} and \eqref{CCCB2} on $\mathscr{I}$. On the other
hand, using the integrability condition \eqref{IC1} one
obtains the following transport equation:
\begin{equation}
\mathcal{P}\Upsilon_{bc} \simeq 2\tau^a g_{c[a}\Theta_{b]} - \Upsilon_{ac}\rchi_b{}^a,
\label{TE1}
\end{equation}
which crucially is homogeneous in the zero-quantities.

\smallskip
Now, for $\Theta_a$, consider the expression $2\tau^a\nabla_{[a}\Theta_{b]}$.
Expanding as in the case for $\Upsilon_{ab}$ one finds that 
\begin{eqnarray*}
&& 2\tau^a\nabla_{[a}\Theta_{b]} \simeq \mathcal{P} \Theta_b
   -\nabla_b\big(\tau^a\Theta_a\big) + \Theta_a \chi_b{}^a = \mathcal{P} \Theta_b
  + \Theta_a \rchi_b{}^a,
\end{eqnarray*}
where it has been used that $\ell_b{}^a\Theta_a\simeq 0$ ---as this is
equivalent to satisfy the constraint \eqref{CCCB3}--- so that
$\tau^a\Theta_a\simeq 0$. Using the integrability condition \eqref{IC2}, the
following homogeneous transport equation is directly obtained:
\begin{equation}
\mathcal{P} \Theta_b \simeq \tau^a\Delta_{abc}\nabla^c\Xi-
\tau^aL^c{}_{[a}\Upsilon_{b]c} - \Theta_a \rchi_b{}^a. \label{TE2}
\end{equation}

\medskip
\noindent
\textbf{Transport equations for $\Delta_{abc}$ and $\Lambda_{abc}$.}
For the zero-quantity $\Delta_{abc}$ consider
$3\tau^e\nabla_{[e}\Delta_{ab]c}$. A direct calculation shows that
\begin{equation}
3\tau^e\nabla_{[e}\Delta_{ab]c} = \mathcal{P}\Delta_{abc} - 2\rchi_{[a}{}^e
\Delta_{b]ec} + 2 \nabla_{[a}(\tau^e \Delta_{b]ec}). \label{TransportEquationDelta}
\end{equation}
As before, one needs to show that the last term in the previous expression
vanishes on the boundary. For this purpose a decomposition with respect to
$\ell_a{}^b$ can be performed.  Observing that the components $\ell_a{}^d
\ell_b{}^e \ell_c{}^f \Delta_{def} \equiv \Delta^{(3)}_{abc}$ and $\ell_a{}^c
\ell_b{}^d n^e \Delta_{cde}$ vanish by virtue of the constraints \eqref{CCCB4}
and \eqref{CCCB5}, as well as exploiting the fact that $\Delta_{[abc]} = 0$, a
calculation leads to
\[ \tau^b \Delta_{abc}
\simeq  \tau^b n_c \ell_a{}^d \ell_b{}^f n^e \Delta_{def} \equiv \tau^b n_c \Delta_{ab}.
\]
In view of this, it is enough to construct a further homogeneous transport
equation for $\Delta_{ab}$ on $\mathscr{I}$.  Performing suitable projections
in equation \eqref{TransportEquationDelta}, its right--hand side takes the
following form on the conformal boundary:
\[
\mathcal{P}\Delta_{ab} - \Delta_{def}\tau^c\nabla_c(\ell_a{}^d\ell_b{}^fn^e)
- n^e\ell_a{}^d\ell_b{}^f \rchi_e{}^c \Delta_{dcf} - \notD_a(\tau^e \Delta_{eb}),
\]
where the second term can be expressed in terms of the extrinsic curvature
$\notK_{ab}$. Then, using the integrability condition \eqref{IC3}, a homogeneous
transport equation for $\Delta_{ab}$ on the boundary is obtained.



\medskip

Finally, for $\Lambda_{abc}$ consider the expression $2\tau^a
\nabla_{[e}\Lambda_{a]bc}$:
\begin{equation}
2\tau^d \nabla_{[d}\Lambda_{a]bc} = \mathcal{P}\Lambda_{abc}
+\Lambda_{dbc}\rchi_a{}^d -\nabla_a(\tau^d\Lambda_{dbc}) = 0.
\label{TransportEquationLambda}
\end{equation}
where the last equality is consequence of the integrability condition
\eqref{IC4}.  When a decomposition is performed for $\tau^d\Lambda_{dbc}$, the
components $\ell_a{}^e \ell_b{}^f n^d \Lambda_{def}$ and $\ell_a{}^f n^d n^e
\Lambda_{def}$ vanish on $\mathscr{I}$ due to constraints
\eqref{CCCB6} and \eqref{CCCB7}. Then, a calculation yields:
\[
\tau^d\Lambda_{dbc} \simeq \tau^d \ell_d{}^e \ell_b{}^f \ell_c{}^g\Lambda_{efg}
- 2\tau^d n^e \ell_a{}^f \ell_c{}^g \Lambda_{fe[b} n_{g]} \equiv \tau^d\Lambda^{(3)}_{abc} 
- 2\tau^d\Lambda_{a[b}n_{c]}.
\]

As in the case for $\Delta_{abc}$, this means that suitable transport equations
must be constructed for $\Lambda^{(3)}_{abc}$ and $\Lambda_{ab}$. Making
suitable projections in equation \eqref{TransportEquationLambda} results in the
following transport equations:

\begin{subequations}
\begin{eqnarray*}
&& \mathcal{P}\Lambda^{(3)}_{abc} \simeq \Lambda_{efg}\tau^d\nabla_d(\ell_a{}^e\ell_b{}^f\ell_c{}^g)-
\ell_a{}^e\ell_b{}^f\ell_c{}^g\Lambda_{dfg}\rchi_e{}^d + \notD_a(\tau^d\Lambda^{(3)}_{dbc}), \\
&& \mathcal{P}\Lambda_{ab} \simeq \tau^d\nabla_d(\ell_a{}^f\ell_b{}^g n^e)\Lambda_{dfg}\rchi_e{}^d
+ \notD_a(\tau^d\Lambda_{db}) - \ell_b{}^g\tau^d\notD_f n^e \Lambda_{deg}.
\end{eqnarray*}
\end{subequations}

\begin{remark}
{\em The main observation following the previous calculations is that
  one has homogeneous propagation equations intrinsic to $\mathscr{I}$ for all the components of
  the zero-quantities which do not directly vanish by virtue of
  Proposition \ref{Proposition:SummaryBoundaryData}. Thus, if one can
  ensure that these intrinsic propagation equations have vanishing
  initial data at the corner, their solutions have to vanish along the
conformal boundary as well ---accordingly, the full set of
zero-quantities associated to the conformal field equations will
vanish on $\mathscr{I}$.}
\end{remark}

\subsubsection{The propagation argument}

Once we have obtained the relevant transport equations we are in position to
state the following lemma:

\begin{lemma}
\label{SubsidiaryEquations}
Consider vanishing initial data for the zero-quantities $\Upsilon_{ab}, \
\Theta_a, \ \Delta_{abc}$ and $\Lambda_{abc}$ at $\partial\mathcal{S}_\star$ and
assume that the conformal constraints \eqref{CCCB1}--\eqref{CCCB7} are
satisfied. Then, all the components of the zero-quantities vanish on the
conformal boundary.
\end{lemma}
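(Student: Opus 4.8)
The plan is to combine the transport equations derived in the preceding subsection with the vanishing of the "intrinsic" components of the zero-quantities established in Proposition \ref{Proposition:SummaryBoundaryData}, and then run a uniqueness-of-solutions argument for the resulting homogeneous transport system on $\mathscr{I}$. First I would set up the bookkeeping: by Proposition \ref{Proposition:SummaryBoundaryData}, the validity of the conformal constraints \eqref{CCCB1}--\eqref{CCCB7} on $\mathscr{I}$ forces the purely tangential (and mixed tangential--normal) projections of $\Upsilon_{ab}$, $\Theta_a$, $\Delta_{abc}$ and $\Lambda_{abc}$ to vanish identically on the conformal boundary. The only components not yet known to vanish are those reduced, by the algebraic symmetries $\Delta_{[abc]}=0$, $\Lambda_{[abc]}=0$ and the already-vanishing projections, to the ``doubly normal'' pieces $\Delta_{ab}\equiv \ell_a{}^d\ell_b{}^f n^e\Delta_{def}$, $\Lambda^{(3)}_{abc}$ and $\Lambda_{ab}$, together with $\Upsilon_{ab}$ and $\Theta_a$ themselves off the boundary. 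The transport equations \eqref{TE1}, \eqref{TE2}, equation \eqref{TransportEquationDelta} (projected), and \eqref{TransportEquationLambda} (projected) provide a closed first-order system $\mathcal{P}\,\bmu = A\,\bmu$ for the collection $\bmu$ of these remaining components, where $\mathcal{P}=\tau^a\nabla_a$ is the derivative along the timelike congruence $\tau^i$ on $\mathscr{I}$ and $A$ is a smooth matrix built from $\bmg$, $\rchi_{ab}=\nabla_a\tau_b$, $\notK_{ab}$ and $L_{ab}$; crucially every right-hand side is \emph{homogeneous} in the zero-quantities.

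Next I would invoke the hypothesis that all zero-quantities vanish at the corner $\partial\mathcal{S}_\star$, which serves as initial data for this transport system along the integral curves of $\tau^i$ ruling $\mathscr{I}$. Since $\mathscr{I}\approx[0,t_\bullet)\times\partial\mathcal{S}_\star$ and $\tau^i$ may be chosen adapted to this product structure (so its integral curves foliate $\mathscr{I}$ and meet $\partial\mathcal{S}_\star$ exactly once), the homogeneous linear ODE system along each such curve has $\bmu\equiv 0$ as its unique solution by Gr\"onwall/the standard uniqueness theorem for linear ODEs. Hence $\bmu=0$ on all of $\mathscr{I}_{t_\bullet}$. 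Combined with the components already killed by Proposition \ref{Proposition:SummaryBoundaryData}, this gives $\Upsilon_{ab}\simeq 0$, $\Theta_a\simeq 0$, $\Delta_{abc}\simeq 0$, $\Lambda_{abc}\simeq 0$, i.e. all components of the zero-quantities vanish on the conformal boundary.

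I expect the main obstacle to be verifying that the transport system genuinely \emph{closes} on the reduced set of components, i.e. that no right-hand side secretly involves a component that is neither among $\bmu$ nor already known to vanish. This is exactly what the decompositions carried out above (reducing $\tau^b\Delta_{abc}$ to $\tau^bn_c\Delta_{ab}$, and $\tau^d\Lambda_{dbc}$ to $\tau^d\Lambda^{(3)}_{abc}-2\tau^d\Lambda_{a[b}n_{c]}$) are designed to guarantee, but the bookkeeping must be done carefully: one must check that the $\notD$-derivative terms appearing on the right (e.g. $\notD_a(\tau^e\Delta_{eb})$, $\notD_a(\tau^d\Lambda^{(3)}_{dbc})$) are \emph{intrinsic} to $\mathscr{I}$ and therefore annihilate components that already vanish on $\mathscr{I}$, rather than producing transverse derivatives. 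A secondary point worth a remark is that the integrability conditions \eqref{IC1}--\eqref{IC4} from Appendix A hold unconditionally (they are differential identities), so using them to rewrite $\tau^a\nabla_{[a}\Upsilon_{b]c}$ etc. does not require assuming the wave equations on $\mathscr{I}$; the only inputs are the algebraic symmetries of the zero-quantities, the constraints \eqref{CCCB1}--\eqref{CCCB7}, and the corner data. Once closure and intrinsicness are confirmed, the ODE uniqueness step is routine.
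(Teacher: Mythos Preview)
Your plan is essentially the paper's own argument: use Proposition \ref{Proposition:SummaryBoundaryData} to kill the intrinsic projections, feed the remaining components into the homogeneous transport system built from the integrability conditions \eqref{IC1}--\eqref{IC4}, and conclude from vanishing corner data via uniqueness. The structure and the key inputs are the same.

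There is one point where your expectation is off and it is worth flagging. You anticipate that the terms $\notD_a(\tau^e\Delta_{eb})$, $\notD_a(\tau^d\Lambda^{(3)}_{dbc})$, $\notD_a(\tau^d\Lambda_{db})$ will ``annihilate components that already vanish on $\mathscr{I}$''. They do not: $\Delta_{eb}$, $\Lambda^{(3)}_{dbc}$ and $\Lambda_{db}$ are precisely the unknown components you are propagating, and these $\notD$-terms are genuine intrinsic spatial derivatives of the unknowns. Consequently the transport system for $\Delta_{ab}$ and the $\Lambda$-projections is \emph{not} a family of ODEs along the integral curves of $\tau^i$ but a first-order linear (homogeneous) PDE system on the 3-dimensional Lorentzian manifold $(\mathscr{I},\ell_{ij})$. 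Your Gr\"onwall/ODE-uniqueness step along individual curves therefore does not apply as stated; what one actually invokes is uniqueness of the zero solution for this linear hyperbolic system on $\mathscr{I}$ with vanishing data on the Cauchy surface $\partial\mathcal{S}_\star$. This is still routine --- the system is homogeneous with smooth coefficients and $\mathcal{P}$ is timelike --- but it is a PDE uniqueness statement, not an ODE one. Once you make that adjustment, the argument goes through exactly as you outline.
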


\begin{remark}
\label{VanishingDerivativesZQInitialData}
{\em A similar approach can be employed to prove that the normal derivatives of
the zero--quantities vanish on $\mathcal{S}_\star$ via projecting the
integrability conditions \eqref{IC1}--\eqref{IC4} with respect to $n^a$. The
result readily follows from the fact that all the components of the zero--quantities
vanish on $\mathcal{S}_\star$ ---see \Cref{VanishingZQInitialData}. Thus, one
has vanishing initial data for the wave equations
\eqref{SubsidiaryEquation1}--\eqref{SubsidiaryEquation4}.} 
\end{remark}

\subsection{Propagation of the gauge}
The discussion of the propagation of the zero-quantities associated to
the conformal Einstein field equations needs to be supplemented with a
discussion of the propagation of the gauge. The strategy in this
regard is similar to that used in the analysis of the propagation of
the constraints ---i.e. one introduces a set of zero-quantities
associated to the gauge and purports constructing a suitable system of
subsidiary homogeneous evolution equations. 

\subsubsection{Basic relations}
In what follows it is convenient to define
\begin{subequations}
\begin{eqnarray}
&& Q \equiv R -\mathcal{R}(x), \label{QScalar} \\
&& Q^\mu \equiv \Gamma^\mu - \mathcal{F}^\mu(x), \label{QVector} \\
&& Q_{\mu\nu} \equiv R_{\mu\nu} -2 \Phi_{\mu\nu} - \frac{1}{4}\mathcal{R}(x)
g_{\mu\nu}. \label{QTensor}
\end{eqnarray}
\end{subequations}

\begin{remark}
{\em The zero-quantity $Q$ encodes the relation between the Ricci
  scalar of the unphysical spacetime and the conformal gauge source
  function. The zero-quantity $Q^\mu$ corresponds to the relation
  between the contracted Christoffel symbols and the coordinate gauge
  source function giving rise to the generalised wave
  coordinates. Finally, $Q_{\mu}$ is associated to the relation
  between the Ricci tensor and the reduced Ricci tensor
  ---compare with equation \eqref{UnphysicalEinsteinEquation}.}
\end{remark}

\begin{remark}
{\em In what follows we regard the field $g_{\mu\nu}$ as the components of
a metric tensor $g_{ab}$ in the coordinates $x=(x^\mu)$. Let
$R_{\mu\nu}$ denote the components of the Ricci tensor, $R_{ab}$, of
  $g_{ab}$ in the coordinates $(x^\mu)$ and let $R$ be the associated
  Ricci scalar. The objective of the subsequent analysis is to investigate
  under what circumstances one has that $R$ coincides with $\mathcal{R}(x)$,
  $R_{\mu\nu}$ coincides with $\mathscr{R}_{\mu\nu}$ and $\Phi_{\mu\nu}$ are the
  components of the
  symmetric tracefree part of $R_{\mu\nu}$ so that one can write
\[
R_{\mu\nu} = 2\Phi_{\mu\nu} +\frac{1}{4}\mathcal{R}(x) g_{\mu\nu}.
\]
This is equivalent to showing that
\[
Q=0, \qquad Q^\mu=0, \qquad Q^{\mu\nu}=0.
\]}
\end{remark}

The definitions of $Q^\mu$ and $Q_{\mu\nu}$ allows one to rewrite the
reduced Ricci operator, equation \eqref{DefinitionReducedRicci}, and
the reduced wave operator acting on $\Phi_{\mu\nu}$, equation
\eqref{ReducedWaveOperator}, as
\begin{subequations}
\begin{eqnarray}
&& \mathscr{R}_{\mu\nu}[\bmg] = R_{\mu\nu} -\nabla_{(\mu} Q_{\nu)}, \label{Identity:ReducedRicci}\\
&& \blacksquare \Phi_{\mu\nu} = \square \Phi_{\mu\nu} - (Q_{\mu\sigma}
-\nabla_\mu Q_\sigma) \Phi^\sigma{}_\nu - (Q_{\nu\sigma} -\nabla_\nu
Q_\sigma)\Phi^\sigma{}_\mu. \label{Identity:ReducedWave}
\end{eqnarray}
\end{subequations}

\subsubsection{The subsidiary gauge evolution system }

In the calculations of this section we make the following assumption:

\begin{assumption}
\label{Assumption:PropagationConstraints}
{\em Let $g_{\mu\nu}$ and $\Phi_{\mu\nu}$, with $\Phi_\mu{}^\mu=
g^{\mu\nu}\Phi_{\mu\nu}=0$, be smooth solutions to the equations
\begin{subequations}
\begin{eqnarray}
&& \mathscr{R}_{\mu\nu} = 2\Phi_{\mu\nu} +\frac{1}{4}\mathcal{R}(x)g_{\mu\nu}, \label{AssumptionEquation1}\\
&& \blacksquare \Phi_{\mu\nu} = 4 \Phi_\mu{}^\lambda \Phi_{\nu\lambda} -
\Phi_{\lambda\rho}\Phi^{\lambda\rho} g_{\mu\nu} + \frac{1}{3}\mathcal{R}(x)
\Phi_{\mu\nu} + \frac{1}{6}\nabla_\mu\nabla_\nu \mathcal{R}(x) -\square
\mathcal{R}(x)g_{\mu\nu}, \label{AssumptionEquation2}
\end{eqnarray}
\end{subequations}
for some smooth choice of the gauge source functions $\mathcal{F}^\mu(x)$ and
$\mathcal{R}(x)$. }
\end{assumption}

Combining equation \eqref{AssumptionEquation1} with identity
\eqref{Identity:ReducedRicci} one finds the relation 
\begin{equation}
R_{\mu\nu} = 2\Phi_{\mu\nu} + \frac{1}{4}\mathcal{R}(x) g_{\mu\nu} +
\nabla_{(\mu}Q_{\nu)},
\label{RicciTensorToSolutionAndGaugeFields}
\end{equation}
in which the reduced Ricci operator has been eliminated. The latter
implies, in turn, that
\[
R = \mathcal{R}(x) + \nabla^\mu Q_\mu,
\]
so that, in fact 
\begin{equation}
Q = \nabla^\mu Q_\mu.
\label{IdentityZeroQuantities1}
\end{equation}
Also, it follows from its definition that
\[
Q= Q_\mu{}^\mu.
\]
Moreover, substituting equation
\eqref{RicciTensorToSolutionAndGaugeFields} into the definition of
$Q_{\mu\nu}$, equation \eqref{QTensor}, one obtains the
relation 
\begin{equation}
Q_{\mu\nu} =\nabla_{(\mu} Q_{\nu)}. \label{IdentityZeroQuantities2}
\end{equation}
Taking the divergence of this last identity, commuting covariant
derivatives and using  expression
\eqref{RicciTensorToSolutionAndGaugeFields} to eliminate the
components of the Ricci tensor which appears after commuting
derivatives one obtains
\begin{equation}
\nabla^\mu Q_{\mu\nu} = \frac{1}{8} Q_\nu \mathcal{R} + Q^{\mu} \Phi_{\mu\nu} +
\frac{1}{4} Q^{\mu} \nabla_{\mu}Q_{\nu} + \frac{1}{2}
\square Q_\nu + \frac{1}{2} \nabla_{\nu}Q + \frac{1}{4}
Q^{\mu} \nabla_{\nu}Q_{\mu}. 
\label{Divergence:Qmunu}
\end{equation}

\begin{remark}
{\em Equations \eqref{IdentityZeroQuantities1} and
  \eqref{IdentityZeroQuantities2} show that the zero--quantities $Q$,
  $Q^\mu$ and $Q_{\mu\nu}$ are not independent of each other. In what
  follows we will regard $Q^\mu$ as the fundamental zero
  quantity. Clearly, if $Q^\mu=0$ then necessarily $Q=0$ and $Q_{\mu\nu}=0$.}
\end{remark}

\medskip
The construction of a suitable system of subsidiary equations for the
fields $Q$, $Q_\mu$ and $Q_{\mu\nu}$ makes use of the properties of
the \emph{Bach tensor} $B_{ab}$ ---see Appendix \ref{Appendix:BachTensor}. From the
definition of the Bach tensor given in equation
\eqref{BachTensor}  one can find an expression for $B_{ab}$
which is homogeneous in the fields $Q$, $Q_\mu$ and $Q_{\mu\nu}$:
\begin{eqnarray*}
&& B_{\mu\nu} = - \frac{5}{12} Q \Phi_{\mu\nu} -  \Phi_{\nu}{}^{\lambda}
   Q_{\mu\lambda} -
   \Phi_{\mu}{}^{\lambda} Q_{\nu\lambda} + \frac{1}{24} Q^2 g_{\mu\nu} -  \frac{1}{48} Q \mathcal{R}(x)
   g_{\mu\nu} -  \frac{5}{48} Q \nabla_{\mu}Q_{\nu} + \frac{1}{48} \mathcal{R}(x) \nabla_{\mu}Q_{\nu} \\
&& \hspace{2cm}+ 2 \Phi_{\nu\lambda} \nabla_{\mu}Q^{\lambda} -  \frac{1}{4} \nabla_{\mu}\nabla_{\lambda}\nabla^{\lambda}Q_{\nu} -  \frac{1}{16} Q \nabla_{\nu}Q_{\mu} 
+ \frac{1}{16} \mathcal{R}(x) \nabla_{\nu}Q_{\mu} + \frac{3}{16} \nabla_{\mu}Q^{\lambda}
   \nabla_{\nu}Q_{\lambda} \\
&& \hspace{2cm}+ \frac{7}{4} \Phi_{\mu\lambda} \nabla_{\nu}Q^{\lambda} +
   \frac{1}{6} \nabla_{\nu}\nabla_{\mu}Q + \frac{1}{4}
   \nabla_{\lambda}\nabla_{\mu}\nabla^{\lambda}Q_{\nu} + \frac{1}{12} g_{\mu\nu}
   \nabla_{\lambda}\nabla^{\lambda}Q -  \frac{1}{4}
   \nabla_{\lambda}\nabla^{\lambda}\nabla_{\mu}Q_{\nu} \\
&& \hspace{2cm}-  \frac{1}{4}
   \nabla_{\lambda}\nabla^{\lambda}\nabla_{\nu}Q_{\mu} + \frac{3}{4} \Phi_{\nu\lambda}
   \nabla^{\lambda}Q_{\mu} + \frac{1}{8} \nabla_{\nu}Q_{\lambda} \nabla^{\lambda}Q_{\mu} +
   \frac{1}{16} \nabla_{\lambda}Q_{\nu} \nabla^{\lambda}Q_{\mu} + \frac{1}{2}
   \Phi_{\mu \lambda} \nabla^{\lambda}Q_{\nu} \\
&& \hspace{2cm}+ \frac{1}{8} \nabla_{\mu}Q_{\lambda}
   \nabla^{\lambda}Q_{\nu} -  \frac{1}{2} \Xi d_{\mu\lambda\nu\rho} \nabla^{\rho}Q^{\lambda} -
   \frac{1}{4} \Xi d_{\mu\rho\nu\lambda} \nabla^{\rho}Q^{\lambda} -  \frac{3}{4}
   \Phi_{\lambda\rho} g_{\mu\nu} \nabla^{\rho}Q^{\lambda}\\
&& \hspace{2cm} -  \frac{1}{16} g_{\mu\nu}
   \nabla_{\lambda}Q_{\rho} \nabla^{\rho}Q^{\lambda} -  \frac{1}{16} g_{\mu\nu}
   \nabla_{\rho}Q_{\lambda} \nabla^{\rho}Q^{\lambda}.
\end{eqnarray*}

From the previous identity one finds, after some manipulations, that 
\[
\nabla^\mu B_{\mu\nu} =-\frac{1}{4}\square^2 Q_\nu+ H_{\nu}(\nabla \square
{\bm Q}, \nabla Q,\nabla {\bm Q}, {\bm Q}, Q).
\]
In view of the above, it is convenient to define the auxiliary
tensor field
\[
M_a\equiv \square Q_a.
\]
A further calculation then shows that
\[
\square Q = H(\nabla {\bm M}, \nabla Q,\nabla {\bm Q},Q).
\]

\medskip
Recalling that the Bach tensor is divergence free, ie. $\nabla^a
B_{ab}=0$, it follows from the discussion in the previous paragraph
that the fields $M_a$, $Q_a$ and $Q$ satisfy an homogeneous system of
wave equations of the form
\begin{subequations}
\begin{eqnarray}
&& \square M_\mu =4 H_{\mu}(\nabla
{\bm M}, \nabla Q,\nabla {\bm Q}, {\bm Q}, Q), \label{GaugeSubsidiarySystem1}\\
&& \square Q_\mu =M_\mu, \label{GaugeSubsidiarySystem2}\\
&& \square Q = H(\nabla {\bm M}, \nabla Q,\nabla {\bm Q},Q). \label{GaugeSubsidiarySystem3} 
\end{eqnarray}
\end{subequations}
In the following, the above system will be known as \emph{gauge
  subsidiary evolution system}. Given the homogeneous nature of
\eqref{GaugeSubsidiarySystem1}-\eqref{GaugeSubsidiarySystem3}, if the
system is supplemented with vanishing boundary and initial conditions,
one necessarily has the unique solution
\[
M_\mu=0, \qquad Q_\mu=0, \qquad Q=0, \qquad \mbox{in a neighbourhood of} 
\ \partial\mathcal{S}_\star. 
\]
The latter, in turn, implies that
\[
Q_{\mu\nu}=0 \qquad \mbox{in a neighbourhood of} \ \partial\mathcal{S}_\star. 
\]

\begin{remark}
{\em If this is the case, then, at least in a neighbourhood of
  $\partial \mathcal{S}_\star$ one has that 
\[
R = \mathcal{R}(x), \qquad \Gamma^\mu =\mathcal{F}^\mu(x)
\]
and the tensor $\Phi_{ab}$ coincides with one half of the tracefree
part of the Ricci tensor, $R_{ab}$, of the metric $g_{ab}$.  
 }
\end{remark}

\subsubsection{Initial and Boundary conditions for the subsidiary
  gauge evolution system}
In this section we analyse the trivial initial conditions
\[
M_{\mu}=0, \qquad Q_{\mu}=0, \qquad Q = 0, \qquad \nabla_\mu M_\nu = 0,\qquad 
\nabla_\mu Q_\nu=0, \qquad \nabla_\mu Q = 0 \qquad \mbox{on}\quad \mathcal{S}_\star
\]
and the trivial boundary conditions 
\[
M_{\mu}=0, \qquad Q_{\mu}=0, \qquad Q=0 \qquad \mbox{on} \quad \mathscr{I}
\]
and consider the conditions under which they can be enforced.

\medskip
In order to study the consequences of these vanishing initial-boundary
conditions, it is convenient to decompose $Q_\mu$ in terms of its intrinsic and
normal components. In the case of $\mathscr{I}$, the projections $\hat{q}_\mu
\equiv \ell_\mu{}^\nu Q_\nu$ and $\hat{q} \equiv \notn^\nu Q_\nu$ are naturally
introduced. The fundamental zero-quantity is then written as
\[
Q_\mu \simeq \hat{q}_\mu +
\hat{q} \notn_\mu.
\]
Adopting a Gaussian gauge as in Section \ref{Section:GeneralSetUp}, the conditions $\hat{q}_\mu
\simeq 0$ and $\hat{q} \simeq 0$ imply a system of equations for the normal
derivatives of the lapse and shift ---see equations
\eqref{GeneralisedWaveCoordinatesDecomposed1} and
\eqref{GeneralisedWaveCoordinatesDecomposed2}. Namely, one has that 
\begin{equation}
\partial_1 \notalpha \simeq 3\varkappa + \mathcal{F}^1, \qquad 
\partial_1\notbeta^\delta \simeq \emph{F}^\delta - \gamma^\delta.
\label{QGauge1}
\end{equation}
Additionally, when the conditions $Q \simeq 0$ and $M_\mu \simeq 0$ are
imposed, the following relations are found:
\begin{equation}
\notD \hat{q} \simeq 0, \qquad \notD^2\hat{q} \simeq 0, \qquad
\notD^2\hat{q}_\mu + \varkappa \notD\hat{q}_\mu \simeq 0.
\label{QGauge2}
\end{equation}
These can be read as higher order differential equations for the normal
derivatives of $\notalpha$ and $\notbeta$.

\medskip

Following the same approach, in the case of $\mathcal{S}_\star$ one defines the
projections $q_\mu \equiv h_\mu{}^\nu Q_\nu$ and $q \equiv n^\nu Q_\nu$, so one
has
\[
Q_\mu = q_\mu - q n_\mu.
\]
Setting $Q_\mu =0$, an analogous decomposition of the metric implies a
pair of evolution equations for $\alpha$ and $\beta$ in terms of the gauge
source functions. When the remaining vanishing initial data are analysed, a
series of straightforward calculations leads to the following conditions on $q$
and $q_\mu$:
\begin{equation}
q = 0, \quad q_\mu = 0, \quad D^{(n)}q = 0, \quad D^{(n)}q_\mu = 0, \quad n = 1, 2, 3.
\label{QGauge3}
\end{equation}
Therefore, more restrictions in the form of higher order constraints for the
lapse and shift functions are imposed.

\begin{remark} \em{Even though the condition $Q = 0$ on
$\mathcal{S}_\star$ and $\mathscr{I}$ implies, respectively, that $Dq = 0$ and
$\notD\hat{q} \simeq 0$, this can be equivalently stated as imposing that the
function $\mathcal{R}(x)$ coincides with the Ricci scalar of the metric $g_{ab}$.}
\end{remark}

The discussion of the section can be summarised in the following lemma:

\begin{lemma}
Let $Q, \ Q_\mu$ and $Q_{\mu\nu}$ be defined as in \eqref{QScalar}--\eqref{QTensor}.
If conditions \eqref{QGauge1} and \eqref{QGauge2} are satisfied on $\mathscr{I}$,
and \eqref{QGauge3} is satisfied on $\mathcal{S}_\star$, then $Q, \ Q_{\mu}$ and
$Q_{\mu\nu}$ vanish identically in a neighbourhood of $\partial\mathcal{S}_\star$.
\end{lemma}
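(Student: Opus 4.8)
The plan is to reduce the claim to the uniqueness theory for the homogeneous \emph{gauge subsidiary evolution system} \eqref{GaugeSubsidiarySystem1}--\eqref{GaugeSubsidiarySystem3}. I would show that the stated hypotheses are exactly what is needed to produce vanishing Cauchy data on $\mathcal{S}_\star$ and vanishing Dirichlet data on $\mathscr{I}$ for the unknowns $(M_\mu,Q_\mu,Q)$, and then invoke the well-posedness of initial-boundary value problems for such quasilinear wave systems ---see \cite{CheWah83,DafHru85}--- to conclude that $M_\mu$, $Q_\mu$ and $Q$ vanish in a neighbourhood of $\partial\mathcal{S}_\star$; the identity \eqref{IdentityZeroQuantities2} then gives $Q_{\mu\nu}=0$ as well.

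For the conformal boundary, the key is the Gaussian gauge of Section \ref{Section:GeneralSetUp}: writing $Q_\mu\simeq\hat q_\mu+\hat q\notn_\mu$, the components $\hat q_\mu\equiv\ell_\mu{}^\nu Q_\nu$ and $\hat q\equiv\notn^\nu Q_\nu$ are precisely the discrepancies between the normal derivatives of the lapse and shift and the coordinate gauge source functions, through the decomposition \eqref{GeneralisedWaveCoordinatesDecomposed1}--\eqref{GeneralisedWaveCoordinatesDecomposed2}; hence \eqref{QGauge1} amounts to $Q_\mu\simeq 0$. Once $Q_\mu\simeq 0$, the only pieces of $Q=\nabla^\mu Q_\mu$ and $M_\mu=\square Q_\mu$ that are not automatically zero involve derivatives transverse to $\mathscr{I}$, and these are killed by \eqref{QGauge2}. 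Thus $M_\mu$, $Q_\mu$ and $Q$ satisfy homogeneous Dirichlet conditions on $\mathscr{I}$.

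For the initial hypersurface, one decomposes $Q_\mu=q_\mu-qn_\mu$ with respect to the unit normal $n^a$. The conditions \eqref{QGauge3} ---vanishing of $q$, $q_\mu$ and their derivatives up to third order on $\mathcal{S}_\star$--- together with the algebraic identities $Q=\nabla^\mu Q_\mu$ and $Q_{\mu\nu}=\nabla_{(\mu}Q_{\nu)}$ from \eqref{IdentityZeroQuantities1}--\eqref{IdentityZeroQuantities2} and the definition $M_\mu=\square Q_\mu$, are exactly those required to make the full Cauchy data ---the values and first normal derivatives of $M_\mu$, $Q_\mu$ and $Q$--- vanish on $\mathcal{S}_\star$. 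These data are moreover compatible at $\partial\mathcal{S}_\star$ with the vanishing boundary data above, since all quantities reduce to zero there. Appealing to the homogeneity of \eqref{GaugeSubsidiarySystem1}--\eqref{GaugeSubsidiarySystem3} (established in the preceding discussion) and the uniqueness result for the corresponding initial-boundary value problem, one obtains $M_\mu=Q_\mu=Q=0$ throughout the domain of dependence of $\mathcal{S}_\star\cup\mathscr{I}$ near the corner, whence $Q_{\mu\nu}=\nabla_{(\mu}Q_{\nu)}=0$ there.

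The main obstacle I anticipate is the bookkeeping in the two middle steps: confirming that \eqref{QGauge1}--\eqref{QGauge2} exhaust the non-intrinsic components of $Q_\mu$, $Q$ and $M_\mu$ on $\mathscr{I}$, that \eqref{QGauge3} does the same for the Cauchy data on $\mathcal{S}_\star$, and that nothing is left over to be constrained; one also has to check that the resulting initial and boundary data match at the corner to the order required for the initial-boundary value problem to be well posed. By contrast, the hyperbolic character and the homogeneity of the subsidiary system itself have already been established and can be quoted directly.
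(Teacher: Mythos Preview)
Your proposal is correct and follows essentially the same approach as the paper: the lemma is stated as a summary of the discussion immediately preceding it, which derives the homogeneous gauge subsidiary system \eqref{GaugeSubsidiarySystem1}--\eqref{GaugeSubsidiarySystem3}, then identifies \eqref{QGauge1}--\eqref{QGauge2} and \eqref{QGauge3} as precisely the conditions ensuring vanishing Dirichlet data on $\mathscr{I}$ and vanishing Cauchy data on $\mathcal{S}_\star$, and concludes via uniqueness. Your acknowledged ``bookkeeping'' obstacle is exactly the content the paper leaves to the reader in the paragraphs surrounding \eqref{QGauge1}--\eqref{QGauge3}.
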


\section{The local existence result}
\label{Section:LocalExistenceResult}

We are now in the position of formulating the main result of this article: a
local in time existence result for the conformal Einstein field equations in a
neighbourhood of the corner $\partial\mathcal{S}_\star$. This result can, in
turn, be patched together with the domain of dependence of open subsets of
$\mathcal{S}_\star$ away from $\partial\mathcal{S}_\star$ to obtain a solution
on a slab around $\mathcal{S}_\star$ ---see e.g. \cite{CFEBook}, Section 12.3. 

\medskip
One has the following:

\begin{theorem}
\label{Theorem:Main}
Let $\mathcal{S}_\star$ be a 3-dimensional spacelike hypersurface with boundary
$\partial\mathcal{S}_\star$ and smooth anti-de Sitter-like initial data defined on it.
Consider the cylinder $[0, \tau_\bullet) \times \partial\mathcal{S}_\star$, for some
$\tau_\bullet > 0$, endowed with a smooth 3-dimensional Lorentzian metric $\ell_{ij}$
and let $\psi_0, \ \psi_4$ be two complex-valued scalar functions. Assume that
the data on $\mathcal{S}_\star$ and the cylinder satisfy the corner conditions
at $\partial\mathcal{S}_\star$. Then, there exists a smooth solution to the Einstein field
equations with $\lambda < 0$ in a neighbourhood of $\mathcal{S}_\star$.

\end{theorem}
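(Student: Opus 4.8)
The plan is to assemble the pieces developed throughout the article into a standard initial-boundary value problem argument for the gauge-reduced conformal evolution system \eqref{ReducedWaveCFE1}--\eqref{ReducedWaveCFE5}, and then to promote the resulting solution of the conformal equations to a genuine solution of the Einstein field equations via the propagation-of-the-constraints results. First I would fix the gauge: choose coordinate gauge source functions $\mathcal{F}^\mu(x)$ and a conformal gauge source $\mathcal{R}(x)$, together with the intrinsic gauge function $\varkappa(x)$ on $\mathscr{I}$. From the anti-de Sitter-like initial data $(\Omega,h_{ij},K_{ij},\Sigma)$ on $\mathcal{S}_\star$ (Definition \ref{Definition:AdSData}) one computes, via \eqref{InitialData1}--\eqref{InitialData5} together with Remark \ref{VanishingZQInitialData}, a full initial data set for $\Xi$, $s$, $\Phi_{\mu\nu}$, $d_{\mu\nu\lambda\rho}$, $g_{\mu\nu}$ and their normal derivatives; by construction these are smooth up to $\partial\mathcal{S}_\star$ and the zero-quantities of Section \ref{zero_quantities} vanish on $\mathcal{S}_\star$. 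On the cylinder $[0,\tau_\bullet)\times\partial\mathcal{S}_\star$ the prescribed $\ell_{ij}$ and the free Weyl data $\psi_0,\psi_4$ (equivalently $w_{\{ij\}}$) determine, through \eqref{BoundaryPrescription:SigmaFriedrich}, \eqref{BoundaryPrescription:Schouten}, \eqref{SchoutenNormalNormal}, \eqref{BoundaryPrescription:MagneticWeyl} and Lemma \ref{Lemma:CompletingElectricPartConformalBoundary}, Dirichlet boundary data for every conformal field on $\mathscr{I}$, with $\Xi\simeq 0$.

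Next I would invoke the local existence theory for initial-boundary value problems for quasilinear wave equations with Dirichlet data (\cite{CheWah83,DafHru85}). Since, by the hypothesis that the corner conditions of Section \ref{Section:CornerConditions} hold at $\partial\mathcal{S}_\star$ to all orders, the initial and boundary data are compatible, this theory yields a smooth solution $(\Xi,s,\Phi_{\mu\nu},d_{\mu\nu\lambda\rho},g_{\mu\nu})$ of \eqref{ReducedWaveCFE1}--\eqref{ReducedWaveCFE5} in a neighbourhood $\mathcal{U}$ of $\partial\mathcal{S}_\star$; shrinking $\mathcal{U}$ if necessary, $g_{\mu\nu}$ remains Lorentzian there (it agrees with the Lorentzian initial metric on $\mathcal{S}_\star$), so the reduced system is genuinely hyperbolic and the solution is the one provided by the cited PDE theory.

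The remaining task is to show this solves the Einstein equations. I would proceed in two stages. First, the propagation of the gauge: Assumption \ref{Assumption:PropagationConstraints} holds for the solution just constructed, so the gauge subsidiary system \eqref{GaugeSubsidiarySystem1}--\eqref{GaugeSubsidiarySystem3} applies; the initial data construction and the boundary conditions \eqref{QGauge1}--\eqref{QGauge3} force vanishing initial and boundary data for $(M_\mu,Q_\mu,Q)$, whence by homogeneity and uniqueness $Q=Q^\mu=Q_{\mu\nu}=0$ near $\partial\mathcal{S}_\star$. Thus $\mathscr{R}_{\mu\nu}[\bmg]=R_{\mu\nu}$, $R=\mathcal{R}(x)$, and $\Phi_{\mu\nu}$ is genuinely the tracefree Ricci tensor, so the geometric wave equations \eqref{WaveCFE1}--\eqref{WaveCFE4} hold. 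Second, the propagation of the constraints: by Proposition \ref{PropagationZQ} the zero-quantities $\Upsilon_{ab},\Theta_a,\Delta_{abc},\Lambda_{abc}$ satisfy a homogeneous subsidiary wave system; their initial data vanish on $\mathcal{S}_\star$ (Remarks \ref{VanishingZQInitialData}, \ref{VanishingDerivativesZQInitialData}) and, by Proposition \ref{Proposition:SummaryBoundaryData} together with the transport argument culminating in Lemma \ref{SubsidiaryEquations}, they vanish on $\mathscr{I}$. A homogeneous initial-boundary value problem with trivial data has only the trivial solution, so all zero-quantities vanish near $\partial\mathcal{S}_\star$; the conformal Einstein field equations \eqref{CFE1}--\eqref{CFE4} hold, and since \eqref{CFE5} holds at the corner by \eqref{ConformalConstraint8}, Proposition \ref{FriedrichThm} gives that $\tilde g_{ab}=\Xi^{-2}g_{ab}$ solves \eqref{EFE} on $\{\Xi\neq 0\}$, i.e. on $\mathcal{U}\setminus\mathscr{I}$, which is a neighbourhood of $\mathcal{S}_\star$ in the physical spacetime. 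I expect the main obstacle to be the boundary step of the constraint propagation --- verifying that the Dirichlet data genuinely force all components of the zero-quantities (not only the intrinsically-determined ones) to vanish on $\mathscr{I}$, which requires the homogeneous transport equations along $\mathscr{I}$ and the integrability conditions of Appendix A; the well-posedness of the homogeneous subsidiary initial-boundary value problem with trivial Dirichlet data must also be handled with some care at the corner.
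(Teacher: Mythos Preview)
Your proposal is correct and follows essentially the same approach as the paper's own proof: construct initial and Dirichlet boundary data from the given geometric data, apply the quasilinear IBVP theory of \cite{CheWah83,DafHru85} to the reduced system \eqref{ReducedWaveCFE1}--\eqref{ReducedWaveCFE5}, and then use Propositions \ref{PropagationZQ}, \ref{Proposition:SummaryBoundaryData}, Lemma \ref{SubsidiaryEquations} and Remarks \ref{VanishingZQInitialData}, \ref{VanishingDerivativesZQInitialData} to propagate the constraints and invoke Proposition \ref{FriedrichThm}. If anything, your outline is slightly more explicit than the paper's in that you separately argue the propagation of the gauge (the vanishing of $Q$, $Q^\mu$, $Q_{\mu\nu}$ via \eqref{GaugeSubsidiarySystem1}--\eqref{GaugeSubsidiarySystem3}) before invoking Proposition \ref{PropagationZQ}, which formally assumes the \emph{geometric} wave equations \eqref{WaveCFE1}--\eqref{WaveCFE4}; the paper's proof leaves this step implicit.
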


\begin{proof}

Consider initial data on $\mathcal{S}_\star$ given as in Definition
\ref{Definition:AdSData}. Given a 3--dimensional Lorentzian metric $\ell_{ij}$
on the cylinder $[0, \tau_\bullet) \times \partial\mathcal{S}_\star$, the data
given by \eqref{BoundaryPrescription:SigmaFriedrich},
\eqref{BoundaryPrescription:Schouten} and
\eqref{BoundaryPrescription:MagneticWeyl} can be computed. On the other hand,
$\notd_{ij}$ is determined via the system
\eqref{GaussConstraintDecomposed1}--\eqref{GaussConstraintDecomposed2}, which
requires the specification of $\psi_0$ and $\psi_4$ along with initial values
for $w$ and $w_i$. Notice that the latter ones are prescribed by the initial
data at $\partial\mathcal{S}_\star$. If these two sets of initial and boundary
data satisfy the corner conditions at $\partial\mathcal{S}_\star$ then the
theory of initial-boundary value problems, as given in e.g.
\cite{CheWah83,DafHru85}, guarantees the existence of a unique solution to the
system of wave equations \eqref{ReducedWaveCFE1}--\eqref{ReducedWaveCFE5} in a
neighbourhood of $\partial\mathcal{S}_\star$.

\medskip
Given the boundary data described above, Proposition
\ref{Proposition:SummaryBoundaryData} and Lemma \ref{SubsidiaryEquations} imply
that all the components of the zero-quantities vanish on the cylinder. On the
other hand, from \Cref {VanishingZQInitialData} and
\Cref{VanishingDerivativesZQInitialData} we have that the initial data on
$\mathcal{S}_\star$ yield vanishing data for the zero--quantities and their
first-order derivatives on this hypersurface. Thus, Proposition
\ref{PropagationZQ} implies that a solution to the system
\eqref{ReducedWaveCFE1}--\eqref{ReducedWaveCFE5} guarantees the existence and
uniqueness of a vanishing solution of equations
\eqref{SubsidiaryEquation1}--\eqref{SubsidiaryEquation4}. From the definition
of the zero--quantities it follows then that the conformal Einstein field
equations \eqref{CFE1}--\eqref{CFE5} are satisfied in a neighbourhood of
$\partial\mathcal{S}_\star$.

\medskip
Finally, having a solution to the conformal Einstein field equations,
Proposition \ref{FriedrichThm} implies that the metric $\tilde{g}_{ab} =
\Xi^{-2}g_{ab}$ is a solution to the Einstein field equations \eqref{EFE} with
$\lambda < 0$ for $\Xi \neq 0$. 

\end{proof}

\begin{remark}
{\em A more precise statement about the regularity of the initial data and
boundary conditions needs to be expressed in terms of suitable Sobolev spaces
and goes beyond the scope of this article. Here, for the sake of simplicity of
the presentation we have opted for phrase these conditions in terms of the word
\emph{smooth}.}
\end{remark}

\section{Conclusions}
\label{Section:Conclusions}

The construction carried out in this work can be implemented to numerical codes
in a systematic way. Moreover, it represents a step forward with respect to the
work in \cite{Fri95} as the wave equations to be solved are manifestly
hyperbolic.  Furthermore, the free boundary data for the Weyl tensor are
explicitly related to the incoming and outgoing radiation.  This may make
possible to study more general boundary conditions, relevant for a better
understanding of the instability of anti-de Sitter-like spacetimes.
Nevertheless, as the geometric character of these data is broken by the
performed decompositions, further work must be done in order to obtain a
completely covariant formulation.

The local existence result presented assumes a vanishing matter--energy
component. However, under the methods of conformal geometry, it is not clear
how completely general scenarios can be studied ---see \cite{Fri14c} for a
discussion of a particular case of the Einstein-massive scalar field case which
is particularly amenable to the use of conformal methods and \cite{Fri17} for a
discussion about dust models coupled to Einstein equations with $\lambda >0$.
Despite this fact, if a \emph{tracefree} matter field is considered, it is
possible to establish a well--posed problem and analyse it in a similar fashion
to the one described here. In this context, work is currently under progress to
investigate a possible extension of the result to anti-de Sitter-like
spacetimes for this class of energy--momentum tensors.  Moreover, this
naturally leads to several particular cases of interest, namely, Maxwell,
Yang--Mills and scalar fields.

\section*{Acknowledgements}
JAVK thanks the hospitality of Erwin Schr\"odinger Institute for
Mathematics and Physics of the University of Vienna during the
programme \emph{Geometry and Relativity} which took place between the
months of July and September of 2017. DAC thanks support granted by
CONACyT (480147). The calculations described in this article have been
carried out in the suite {\tt xAct} for {\tt Mathematica} ---see
\cite{xAct}.

\appendix

\section{Zero-quantities and integrability conditions}
\label{Appendix:ZeroQuantities}

The zero-quantities defined in Section \ref{zero_quantities} possess a set
of \emph{integrability conditions} which allow us to understand the
interrelations between the various equations. These conditions
arise in a natural way when appropriate antisymmetrisations of the derivatives
of the zero-quantities are taken into consideration. 

\subsection{Basic properties}
First, the zero--quantities possess the following symmetries:
\begin{align}
\begin{split}
\Upsilon_{ab} = \Upsilon_{(ab)}, \quad \Delta_{abc} = \Delta_{[ab]c}, \quad
\Delta_{[abc]} = 0, \quad \Lambda_{abc} = \Lambda_{a[bc]}, \quad \Lambda_{[abc]} =0.
\end{split}
\end{align}
Regarding $\Lambda_{abc}$, it will prove to be useful to define the following
auxiliary zero--quantity:
\begin{eqnarray}
&& \Lambda_{abcde} \equiv 3\nabla_{[a}d_{bc]de} = \Lambda_{d[ab} g_{c]e}  -\Lambda_{e[ab} g_{c]d}.
\label{Lambda_abcde}
\end{eqnarray}
Notice that the vanishing of this new quantity expresses, in an alternative
way, the validity of equation \eqref{CFE4} ---see \cite{CarHurVal18}.  Also, observe that
$\Lambda_{abcde}$ has only two independent divergences. From its definition,
and assuming the validity of the wave equation \eqref{WaveCFE4}, it follows
that
\begin{subequations}
\begin{eqnarray}
&& \nabla^c\Lambda_{abcde} = 0, \label{Lambda_abcdeDivergence1} \\
&& \nabla^e\Lambda_{abcde} = 3\nabla_{[a}\Lambda_{|d|bc]} \label{Lambda_abcdeDivergence2}.
\end{eqnarray}
\end{subequations}

Using equations \eqref{Lambda_abcdeDivergence1},
\eqref{Lambda_abcdeDivergence2} and the wave equations
\eqref{WaveCFE1}--\eqref{WaveCFE4}, one can check via some suitable
contractions that the zero--quantities exhibit some useful properties, namely:

\begin{equation}
\begin{gathered}
 \Upsilon_a{}^a = 0, \quad \nabla_{b}\Upsilon_a{}^b = 3\Theta_a, \quad \nabla_a\Theta^a = \Upsilon^{ab}L_{ab}, 
\quad \Delta_a{}^b{}_b  = 0, \quad \Lambda^b{}_{ab} = 0, \\
 \quad \nabla_b\Delta_a{}^b{}_c = \Lambda_{acb}\nabla^b\Xi, \quad \nabla_c\Delta_{ab}{}^c = 
 - \Lambda_{cab}\nabla^c\Xi, \quad
\nabla_a \Lambda^a{}_{bc} = 0, \quad \nabla_b\Lambda_{a}{}^b{}_{c} = 0, \quad 
\Delta_{abc}= \Xi\Lambda_{cba}.
\label{ZQIdentities}
\end{gathered}
\end{equation}

Finally, it is noticed that the expressions for $\Upsilon_a{}^a, \
\nabla_a\Theta^a, \ \nabla_b\Delta_a{}^b{}_c$ and $\nabla^c\Lambda_{abcde}$
given above represent the geometric wave equations
\eqref{WaveCFE1}-\eqref{WaveCFE4}, respectively.

\subsection{Integrability conditions}
Making use of these identities, direct computations yield the following integrability conditions for the
zero-quantities associated to vacuum conformal Einstein field equations:
\begin{subequations}
\begin{eqnarray}
&& 2\nabla_{[a}\Upsilon_{b]c}  = 2g_{c[a} \Theta_{b]} + \Xi \Delta_{abc}, \label{IC1} \\
&& 2\nabla_{[a}\Theta_{b]}  = -  2L_{[a}{}^{c} \Upsilon_{b]c} +
   \Delta_{abc} \nabla^{c}\Xi, \label{IC2}\\
&& \nabla_{[a}\Delta_{bc]d} = 
d_{de[ab} \Upsilon^{e}{}_{c]} + \Lambda_{d[ab} \nabla_{c]}\Xi - \Lambda_{e[ab}g_{c]d}\nabla^e\Xi, \label{IC3}\\
&& \nabla_{[a} \Lambda_{b]cd} = 0. \label{IC4}
\end{eqnarray}
\end{subequations}
Here, the last integrability condition is a direct consequence of equation
\eqref{Lambda_abcdeDivergence1}.

\begin{remark}

{\em 
One can check that equation \eqref{Lambda_abcdeDivergence2} also
represents an integrability condition for $\Lambda_{abc}$.
}

\end{remark}

\subsection{The subsidiary equations}
With the previous equations at hand, and assuming that the fields $\Xi, s,
L_{ab}$ and $d_{abcd}$ satisfy the wave equations
\eqref{WaveCFE1}--\eqref{WaveCFE4}, suitable evolution (i.e. wave) equations
for the zero-quantities can be obtained by direct application of a covariant
derivative.  Commuting derivatives, and using the identities exposed in A.1,
straightforward but lengthy calculations yield:
\begin{eqnarray*}
&& \nabla_{d}\nabla^{d}Z_{ab} = - \Theta Z^{dc} d_{adbc} + \tfrac{1}{6} \
Z_{ab} R + Z_{b}{}^{d} L_{ad} + 3 Z_{a}{}^{d} L_{bd} - 2 Z^{dc} \
L_{dc} g_{ab} + \nabla_{a}Z_{b} + 3 \nabla_{b}Z_{a} -  \Theta \
\Lambda_{dab} \nabla^{d}\Theta, \\
&& \nabla_{b}\nabla^{b}Z_{a} = 6 L_{ab} Z^{b} -  Z^{bc} \Delta_{abc} + 2 \Theta L^{bc} \Delta_{abc} 
+ Z^{bc} \nabla_{a}L_{bc} -  \tfrac{1}{6} Z_{ab} \nabla^{b}R + Z^{bc} \nabla_{c}L_{ab}, \\
&& \nabla_{d}\nabla^{d}\Delta_{abc} = 2 Z_{[a}{}^{d} \Lambda_{|c|b]d} 
-  Z_{c}{}^{d} \Lambda_{dab} + 2s \Lambda_{cab} + 2 \Theta^2 \Lambda_{c}{}^{de} d_{adbe} 
+ 2\Theta^2 \Lambda^{d}{}_{[a}{}^{e} d_{|cd|b]e} 
- \tfrac{1}{2} \Theta \Lambda_{cab} R  \\
&& \hspace{2cm} + 3 d_{cdab} Z^{d} + 2Z^{de} \Lambda_{de[b} g_{a]c}
+ Z^{de} \nabla_{e}d_{cdab} + 2d_{cde[a} \nabla^{e}Z_{b]}{}^{d}, \\
&& \nabla_{d}\nabla^{d}\Lambda_{abc} = - 2\Theta \Lambda^{d}{}_{[b}{}^{e} d_{c]ead} 
+ \tfrac{1}{6} \Lambda_{mbc} R - 2\Lambda^{d}{}_{a[b} L_{c]d} + 3 \Lambda^{d}{}_{bc} L_{ad} 
+ 2g_{a[b}\Lambda^{d}{}_{c]}{}^{e} L_{de}.
\end{eqnarray*}
These equations are clearly homogeneous in the zero-quantities and their first
derivatives.

\subsection{The Bach tensor}
\label{Appendix:BachTensor}

The Bach tensor in 4-dimensions is defined as
\begin{equation}
\label{BachTensor}
B_{ab} \equiv \nabla^c\nabla_a L_{bc} - \nabla^c\nabla_c L_{ab} - C_{acbd}L^{cd}.
\end{equation}
It can be written in terms of zero-quantities as
\begin{equation}
\label{BachTensorZeroQuantities}
B_{ab} = \nabla^c \Delta_{acb} -\Lambda_{abc}\nabla^c\Xi - d_{acbd} Z^{cd}.
\end{equation}
Thus, for any solution to the conformal Einstein field equations one has
that $B_{ab}=0$. Finally, it is observed that
\[
\nabla^a B_{ab}=0
\]
independently of whether the conformal Einstein field equations hold
or not.

\end{document}